\theoremstyle{plain}
\newtheorem{thm}{\protect\theoremname}
\theoremstyle{definition}
\theoremstyle{remark}
\newtheorem{rem}[thm]{\protect\remarkname}
\newrobustcmd{\tup}[1]{\langle #1 \rangle}
\newrobustcmd{\scrsf}[1]{\text{\scriptsize{\ensuremath{\mathsf{\uppercase{#1}}}}}}
\newrobustcmd{\ns}[1]{\mathsf{NS}_{#1}}
\newrobustcmd{\rec}[1]{\mathsf{Res}_{#1}}
\newrobustcmd{\dn}[1]{\todo[color=yellow]{#1}}
\newrobustcmd{\dni}[1]{\textcolor{magenta}{[DN: {#1}]}}
\newrobustcmd{\swap}{\mathsf{swap}}
\newrobustcmd{\drop}{\mathsf{drop}}
\newrobustcmd{\save}{\mathsf{save}}
\newrobustcmd{\push}{\mathsf{push}}
\newrobustcmd{\pop}{\mathsf{pop}}
\newrobustcmd{\nop}{\mathsf{nop}}
\newrobustcmd{\ec}[1]{{#1}_{\sim_{\cal{Z}}}}
\newrobustcmd{\cal}{\mathcal}
\newrobustcmd{\merge}{\!\mid\!}
\newrobustcmd{\ecmerge}{\!\mid\!\mid\!}
\newrobustcmd{\zoneAlph}{\Sigma_{_{\cal Z}}}
\newrobustcmd{\zoneAlphCompl}{other}
\newrobustcmd{\dnsAlph}{\Sigma_{_{\scrsf{dns}}}}
\newrobustcmd{\simZ}{\sim_{_{\cal Z}}}
\newrobustcmd{\synhom}[1]{\eta_{#1}}
\newrobustcmd{\synmon}[1]{\scrsf{syn}(#1)}
\newrobustcmd{\synhomZ}{\synhom{_{_{\cal Z}}}}
\newrobustcmd{\synmonZ}{\scrsf{syn}_{_{\cal Z}}}
\newrobustcmd{\dnsZ}{\scrsf{dns}_{_{\cal Z}}}
\newrobustcmd{\ch}{\mathsf{ch}}
\newrobustcmd{\trans}{\scrsf{trans}}
\newrobustcmd{\subsub}[1]{_{_{#1}}}
\newrobustcmd{\reach}{\scrsf{reach}}
\newrobustcmd{\rat}{\mathbb{Q}}
\newrobustcmd{\nat}{\mathbb{N}}
\newrobustcmd{\config}[1]{\scrsf{config}(#1)}
\newrobustcmd{\monoid}{\cal M}
\newrobustcmd{\abstraction}{\scrsf{dns}\subsub{Z(\cal Z)}}
\newcommand{\ostar}{\mathbin{\mathpalette\make@circled\star}}
\newcommand{\make@circled}[2]{%
	\ooalign{$\m@th#1\smallbigcirc{#1}$\cr\hidewidth$\m@th#1#2$\hidewidth\cr}%
}
\newcommand{\smallbigcirc}[1]{%
	\vcenter{\hbox{\scalebox{0.77778}{$\m@th#1\bigcirc$}}}%
}
\renewcommand{\ker}[1]{\text{ker}(#1)}
\lstdefinestyle{dnszonestyle}{ backgroundcolor={},
	basicstyle=\ttfamily\scriptsize, 
	basicstyle=\sffamily\scriptsize, breakatwhitespace=false,
	breaklines=true,
	captionpos=b,
	keepspaces=true,
	numbers=none, %
	showspaces=false,
	showstringspaces=false, showtabs=false,
	tabsize=2, moredelim=**[is][\color{orange}]{@}{@}, }
\lstdefinelanguage{DNSZone}{
	morecomment=[l]{---} %
}
\newrobustcmd{\bottom}{\!\!\!\dashv}
\newrobustcmd{\verif}{\scrsf{dnsverif}}
\newcommand{\inlsec}[1]{\smallskip\noindent\textbf{#1.}}
\newcommand{\circled}[1]{{{\textcircled{\small #1}}}}
\newcommand{\new}[1]{{{#1}}}
\newcommand{\name}[1]{\texttt{#1}\xspace}
\newcommand{\rtype}[1]{\MakeLowercase{\textsc{#1}}}
\providecommand{\remarkname}{Remark}
\providecommand{\theoremname}{Theorem}
\begin{document}
	\title{Reachability Analysis of the Domain Name System}

	\author{Dhruv Nevatia}
	\email{dhruv.nevatia@inf.ethz.ch}
	\orcid{0009-0008-0845-6754}
	
	\author{Si Liu}
	\email{si.liu@inf.ethz.ch}
	\orcid{0003-3578-7432}
	
	\author{David Basin}
	\email{basin@inf.ethz.ch}
	\orcid{0003-2952-939X}
	
	\affiliation{
		\institution{ETH Zurich}
		\country{Switzerland}
	}
	
	\begin{abstract}
The high complexity of DNS poses unique challenges for ensuring  its security and reliability. Despite continuous advances in DNS testing, monitoring, and verification, protocol-level defects still give rise to numerous bugs and attacks. In this paper, we provide the first decision procedure for the DNS verification problem, establishing its complexity as $\mathsf{2ExpTime}$, which was previously unknown. 

We begin by formalizing the semantics of DNS as a system of recursive communicating processes extended with timers and an infinite message alphabet. We provide an algebraic abstraction of the alphabet with finitely many equivalence classes, using the subclass of semigroups that recognize positive prefix-testable languages. We then introduce a novel generalization of bisimulation for labelled transition systems, weaker than strong bisimulation, to show that our abstraction is sound and complete. Finally, using this abstraction, we reduce the DNS verification problem to the verification problem for pushdown systems. To show the expressiveness of our framework, we model two of the most prominent attack vectors on DNS, namely amplification attacks and rewrite blackholing.
\end{abstract}

	\begin{CCSXML}
		<ccs2012>
		<concept>
		<concept_id>10002978.10002986.10002990</concept_id>
		<concept_desc>Security and privacy~Logic and verification</concept_desc>
		<concept_significance>500</concept_significance>
		</concept>
		<concept>
		<concept_id>10002978.10003014.10011610</concept_id>
		<concept_desc>Security and privacy~Denial-of-service attacks</concept_desc>
		<concept_significance>500</concept_significance>
		</concept>
		<concept>
		<concept_id>10003752.10003766.10003770</concept_id>
		<concept_desc>Theory of computation~Automata over infinite objects</concept_desc>
		<concept_significance>500</concept_significance>
		</concept>
		<concept>
		<concept_id>10003752.10003766.10003776</concept_id>
		<concept_desc>Theory of computation~Regular languages</concept_desc>
		<concept_significance>500</concept_significance>
		</concept>
		<concept>
		<concept_id>10003033.10003039.10003051</concept_id>
		<concept_desc>Networks~Application layer protocols</concept_desc>
		<concept_significance>500</concept_significance>
		</concept>
		<concept>
		<concept_id>10003033.10003039.10003041.10003043</concept_id>
		<concept_desc>Networks~Formal specifications</concept_desc>
		<concept_significance>500</concept_significance>
		</concept>
		</ccs2012>
	\end{CCSXML}
	
	\ccsdesc[500]{Security and privacy~Logic and verification}
	\ccsdesc[500]{Security and privacy~Denial-of-service attacks}
	\ccsdesc[500]{Networks~Application layer protocols}
	\ccsdesc[500]{Networks~Formal specifications}
	\ccsdesc[500]{Theory of computation~Automata over infinite objects}
	\ccsdesc[500]{Theory of computation~Regular languages}

	\keywords{DNS, Formal Semantics, Bisimulation, Reachability Analysis}
	
	\maketitle
	
	\section{Introduction} \label{sec:intro}

The Domain Name System (DNS) is a central component of the Internet's infrastructure. 
It translates human-readable domain names, such as \name{www.sigplan.org}, into machine-recognizable IP addresses via \emph{name resolution}, thereby simplifying Internet navigation and resource access for users. 
This seemingly simple translation service,  handled by \emph{recursive resolvers}, is underpinned by an intricate, hierarchical, globally distributed database. 
Each organization, like Google or Cloudflare, 
provides name resolution for its portion or \emph{zone} of the entire DNS namespace.
Operators within each organization manage that zone through,  often manually configured, \emph{zone files}. 
These zone files, 
stored on \emph{authoritative nameservers},
map domain names to IP addresses as well as other types of DNS records, 
specifying further actions to be taken such as query rewriting or delegation. 

\new{The IETF (Internet Engineering Task Force)~\cite{ietf} publishes \emph{requests for comments} (RFCs), which are technical documents describing the Internet's technical foundations.}
To date, hundreds of RFCs (e.g., \cite{rfc1034,rfc1035,rfc6672,rfc2181}) have been published that define and guide the design, implementation, and operation of DNS. 
The sophisticated nature and large scale of DNS pose unique challenges for developers and operators, who have invested considerable efforts to ensure its functional correctness,   security, and availability.
These challenges  often stem from name resolution failures, which result mainly from misconfigurations or  attacks, and have historically led to large-scale outages~\cite{DNS-outage-1,facebook-outage,azure-outage}. Of particular concern are \emph{denial of service} (DoS)  vectors, which have been frequently identified over the past decade~\cite{iDNS,NRDelegationAttack,TUDOOR}. In particular,  \emph{amplification attacks}, which exploit the DNS protocol to significantly 
increase the query load on the victim (resolvers or authoritative nameservers), have seen a surge in recent years~\cite{NXNSAttack,dnsmaude,TsuKing,TsuNAME,camp}.

Despite continuous advances in DNS testing and monitoring~\cite{ResolFuzz,ResolverFuzz,thousandeyes,check-host,Eywa,scale}, 
numerous issues, like those just mentioned, still arise from RFCs or protocol-level defects in extensively tested production resolvers and nameservers. This highlights the necessity for 
a principled, proactive way to mitigate such problems, 
preferably at an early design stage. 
Two recent formal efforts~\cite{dnsmaude,groot} have attempted to address this issue. 
GRoot~\cite{groot} is the first static verifier for DNS configuration errors, building on the protocol-level DNS resolution semantics.
It has been successfully applied to identify 
misconfigurations in DNS zone files, prior to their deployment. 
DNSMaude~\cite{dnsmaude} is a formal framework, implemented in the Maude~\cite{DBLP:conf/maude/2007} language, 
for both the qualitative (e.g., functional correctness) and quantitative (e.g., amplification) analysis of  DNS protocols. 
It is based on a 
semantics for the entire end-to-end name resolution, 
 covering essential features such as resolver caching and recursive subqueries, which GRoot abstracts away.
\emph{Henceforth, when we refer to DNS, we consider this more comprehensive semantics.}
DNSMaude's accompanying analyzer has also discovered multiple attacks on DNS with large amplification effects. %

\inlsec{Research Gaps} %
Both works have provided partial solutions to the \emph{DNS verification}   
(\scrsf{dnsverif}) problem
where, 
given the DNS zone files $\cal Z$,
one asks whether a property $\phi$ of interest holds for some behavior of DNS (see \Cref{sec:property-validation-to-reachability}  for its formal definition). 
Intuitively, the property $\phi$ captures the \emph{bad} behaviors exhibiting attacks. As every behavior of DNS starts with an initial query from the resolver, on behalf of a client, an immediate challenge in this problem is to explore the infinite space of queries exhaustively. A priori, this was infeasible.

GRoot tackles this problem by introducing an equivalence relation on the query space, with finitely many equivalence classes (ECs), such that queries within an EC are resolved in the same manner and yield the same result. While this equivalence is proven to be sound under a simplified DNS semantics, it is unsound with respect to the more realistic semantics involving resolver caching (formalized in DNSMaude). Moreover, GRoot must be supplied with a pre-determined bound on the length of runs in DNS, making it incomplete by design. Additionally, the number of ECs increases explosively with query rewrites, to the order of $n^{255}$ in the size of  zone files, due to advanced DNS features like \rtype{DNAME} rewriting \cite{rfc6672} (see also \Cref{subsec:dns}), which can quickly inflate GRoot's verification time. Finally, GRoot is inherently incapable of specifying behavioural vulnerabilities like amplification DoS attacks within its semantic framework.

DNSMaude also works with GRoot's ECs but supports specifying common behavioural vulnerabilities. However, it still struggles with 
attack discovery. Since it uses GRoot's ECs, it may fail to explore certain bad behaviours that it believes are equivalent to those that have already been explored. It may then incorrectly conclude that there is an absence of bad behaviours, rendering its search procedure incomplete.
\new{
Moreover, DNSMaude's search procedure is also incomplete by design; it uses bounded, explicit-state, linear temporal logic (LTL) model checking~\cite{DBLP:conf/maude/2007}
for functional correctness properties, searching for bugs or potential attacks. 
DNSMaude also employs
 statistical model checking (SMC)~\cite{Sen05} for quantitative properties like amplification. 
 SMC verifies a property probabilistically, up to a given confidence level, by running Monte Carlo simulations. As a result, the search for attacks is again bounded, for example, in the number of simulations, and attacks may be missed.\footnote{\new{We have experimented with DNSMaude's simulation mode, which is used to discover attacks as presented in~\cite{dnsmaude}. We observe that around 30\% of the runs do not achieve the maximum amplification factor, thus potentially overlooking attacks.}} 
 }

Overall, neither of these two efforts provides a \emph{decision procedure} for the $\scrsf{dnsverif}$ problem. Moreover, an upper bound on the complexity of this problem still remains unknown~\cite{dns-complex}.

\inlsec{Our Approach} In this paper, we aim to address the above challenges. 
As the first step, we must faithfully model the semantics of DNS. We consider the well-studied model of a system of recursive communicating processes \cite{heussner2010reachability,heussner2012reachability}, and extend it with timers. We call this extension, a \emph{system of recursive communicating processes with timers} (trCPS). We  model the DNS semantics as an instance of a trCPS. 
Intuitively, the timer keeps track of the age of a resource record in a resolver’s cache. 
As resolvers are \emph{stateful} (due to referrals or query rewriting),
we utilize stacks to model how they  track the resolution process, where  each stack entry corresponds to a subquery being resolved. 
Nameservers just behave as \emph{labelled transition systems} (LTSs) as they are \emph{stateless}.

\paragraph{DNS is Eager}
It is well-known that reachability is undecidable for networks of LTSs that contain cycles in the underlying topology, assuming a finite message alphabet and perfect channels \cite{brand1983communicating}.
However, this problem becomes  decidable with \emph{lossy communications}~\cite{finkel2001well, AbJo:lossy:IC}. 
Since DNS typically operates over UDP (User Datagram Protocol)~\cite{rfc1035}, which is inherently lossy, we can use lossy channels.
 \new{Nevertheless, the problem once again becomes undecidable if a cycle contains a process that is a pushdown system (PDS)~\cite[Thm. 1]{aiswarya2020network}.}
 Unfortunately, DNS falls within this class, as a resolver is modeled as a PDS with timers and communicates bidirectionally with nameservers (sending queries and receiving answers). 
 Heussner et al.~\cite{heussner2010reachability, heussner2012reachability} show that the reachability problem for networks of PDSs over \emph{perfect} channels \new{and a \emph{finite} message alphabet}, is decidable for eager runs and pointed network topologies. We show that all runs of DNS are eager by design, and that the underlying topology is pointed. This makes DNS fall within this decidable class of systems.

\paragraph{Congruence}
 Since DNS operates over both (i) an infinite query space and (ii) infinitely many timer valuations,
 the eagerness of DNS by itself is insufficient to obtain decidability. \new{This is because an infinite query space also necessitates an infinite message alphabet.} We propose a novel \emph{congruence} on the domain space with finitely many equivalence classes, derived using syntactic congruences for carefully constructed regular languages \cite{nerode1958synmon,rabinscott1959synmon}, from the given zone file configuration $\cal Z$. We then show that the domains in any two DNS configurations only need to be syntactically congruent to each other with respect to certain \emph{positive prefix-testable languages} (see \Cref{subsec:ppt}). Moreover, these languages are parametric only in the names and values appearing in $\cal Z$'s resource records (see \Cref{subsec:def-not}), in order to exhibit equivalent behaviours with respect to the DNS semantics. Notably,
 owing to the nature of these languages, the equivalence is robust to an unbounded number of \rtype{DNAME} rewrites (unlike GRoot), enabling an unbounded analysis. \new{This effectively yields an abstraction with an \emph{equivalent} finite message alphabet for the system.} Finally, the set of timer valuations are abstracted using a standard region construction.

\paragraph{Kernel Bisimulation}
The domain congruence induces a homomorphism on the semantics
of DNS, allowing us to reduce it to a recursive CPS with  finitely many control states. However, to show that the reduction is both sound and complete, we introduce a novel,  generalized notion of bisimulation, called \emph{generalized kernel bisimulation}. A crucial property of this bisimulation is the pointwise equivalence of runs from equivalent states \emph{up to the equivalence of transitions}. We prove that the homomorphism induces a generalized kernel bisimulation on the DNS semantics such that any run in the abstract semantics corresponds to a class of equivalent runs in the original semantics. 
We use this to establish both soundness and completeness of our abstract model with respect to DNS behavior.

\paragraph{Reduction to Reachability}
As we prove that DNS is an eager trCPS over a pointed network topology,
the reduced abstract DNS is also an eager recursive CPS over the same topology. %
We then reduce the $\scrsf{dnsverif}$ problem to an instance of the reachability problem for PDSs, which is known to be decidable \cite{bouajjani1997reachability}.
Thus, we obtain the decidability of this problem. In addition, we establish that it is solvable in doubly-exponential time in the size of the zone file configuration $\cal Z$.

\inlsec{Contributions} Overall, we make the following  contributions.

\begin{itemize}
    \item We establish the first decision procedure for the $\scrsf{dnsverif}$ problem (Theorem~\ref{thm:sound-complete-abstraction}). Moreover, we show that our algorithm has an upper-bound of $\mathsf{2ExpTime}$ complexity (Lemma~\ref{lem:complexity}), which was previously unknown. %

    \item At the technical level, we initially propose trCPS, a model for networks of recursive communicating processes with timers, as the underlying formal model for DNS (Section~\ref{formal-model}). But it comprises an infinite message alphabet due to the infinite query space. \new{Owing to the monoidal nature of queries, we devise an algebraic abstraction for them (\Cref{subsec:syn-domain-congruence}), with finitely many equivalence classes, using a special class of semigroups (Lemma~\ref{lem:prefix-prefix-singleton-ec}).} We then introduce a novel generalization of bisimulation for LTSs, weaker than strong bisimulation \cite{milner1980calculus} but incomparable to weak bisimulation, \new{to show that our abstraction is sound and complete (\Cref{subsec:gen-ker-bis})}. Both these tools may also be of independent theoretical interest.
    \item We show how our framework can be applied to instantiate two of the most prominent attacks on DNS, namely \emph{amplification attacks} and \emph{rewrite blackholing} (Section~\ref{sec:applications}). This demonstrates the versatility of our approach. %
\end{itemize}

	\section{Preliminaries}

\subsection{DNS in a Nutshell} \label{subsec:dns}
The Domain Name System (DNS) is a decentralized database that translates human-readable domain names into  IP addresses. For example, when a user visits the website \name{www.sigplan.org}, the web browser  sends a DNS request to retrieve the IP address of the server hosting the site.

\paragraph{Namespace} 
The DNS namespace is organized hierarchically in a tree structure, 
with a root domain at the top, followed by top-level domains (TLDs), second-level domains (SLDs), and further subdomains. 
\Cref{fig:namespace} shows a small portion of this namespace, including the 
TLDs \name{org} and \name{com} as well as  SLDs such as 
\name{sigplan}. 
A domain name such as \name{www.sigplan.org} uniquely identifies a path in this tree, with the associated labels concatenated  bottom up, using the `.' character; the root node has an empty label, which is typically omitted. 
The entire namespace is divided into \emph{zones} or subtrees,
each managed by authoritative \emph{nameservers}.

\paragraph{Resource Records}
A DNS \emph{resource record} (RR) stores 
specific information about a domain name,
including the owner name (where this RR resides), record type, time to live (TTL), and 
value. In general, multiple
RRs with the same owner name and type can exist, provided they have
different values. 
Common types of RRs include 
\rtype{A}, \rtype{NS}, \rtype{SOA}, \rtype{TXT},  \rtype{CNAME}, and \rtype{DNAME}, 
each serving a distinct purpose. 
For example, an \rtype{A} record 
maps a domain name to an IPv4 address.
An \rtype{NS} record
specifies the authoritative nameservers for a domain.
A \rtype{CNAME} record
maps an alias domain name, e.g., \name{example.net}, 
to its canonical name, e.g., \name{www.example.com}; any request for \name{example.net} will be redirected to \name{www.example.com}.
A \rtype{DNAME} record
provides redirection for an \emph{entire subtree} of the domain name space to another domain.
For instance, 
a \rtype{DNAME} record at \name{legacy.example.com} with value \name{new.example.com}
would cause a query for \name{any.legacy.example.com} 
to be restarted for \name{any.new.example.com}.

A \emph{wildcard} record is
a special type (\name{$\star$}) of DNS record that
 matches requests for non-existent domain names. 
 This simplifies DNS management by allowing a single record to cover multiple potential queries.
For instance, suppose there is an \rtype{A} record
with \name{1.2.3.4} for \name{$\star$.example.org}.
Queries for  \name{one.example.org} or \name{any.example.org}
will all resolve to \name{1.2.3.4}, 
provided that no other records exist for these specific names.

\paragraph{Name Resolution}
A client  such as a web browser
initiates a DNS request by
directing the query to a \emph{recursive resolver} 
that handles the resolution process on the client's behalf.
If the recursive resolver has a cached response from a prior query, it can promptly reply to the client.
Otherwise,
it queries a hierarchy of authoritative nameservers to retrieve the answer.

\begin{figure}
\begin{minipage}[b]{.33\linewidth}
\centering
   \includegraphics[width=\textwidth]{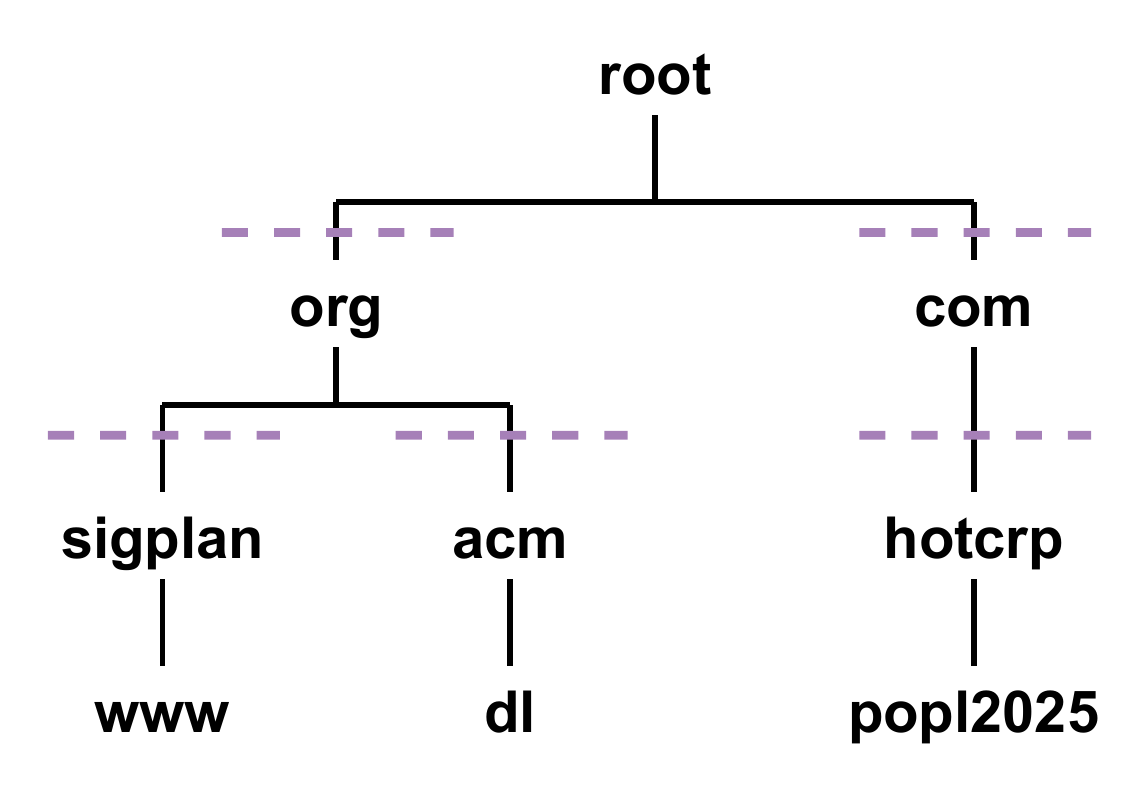}
\captionsetup{skip=10pt}
   \caption{A small portion of the DNS namespace. 
   Dashed lines mark the zone cuts. }
   \label{fig:namespace}
\end{minipage}
\hfill
\begin{minipage}[b]{.65\linewidth}
\centering
   \includegraphics[width=\textwidth]{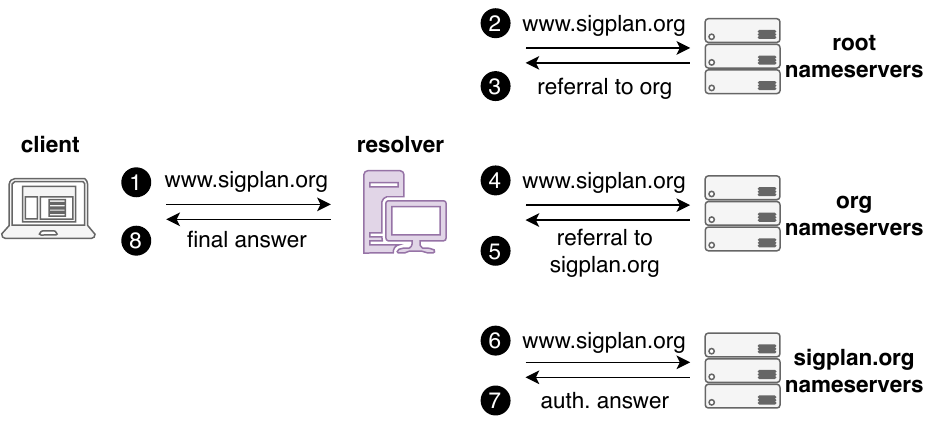}
\captionsetup{skip=0pt}
   \caption{Name resolution for the type \rtype{A} query \name{www.sigplan.org}, with an empty cache at the DNS resolver. }
   \label{fig:dns}
\end{minipage}
\end{figure}

\begin{example}[Name Resolution] \label{example:resolution}
\Cref{fig:dns} depicts a DNS name resolution process. 
A client initiates a type \rtype{A} query for the name \name{www.sigplan.org} (\ding{182}). 
Assuming an empty cache, the recursive resolver forwards this query to one of the root nameservers  (\ding{183}). 
Here, the root nameserver cannot answer the query as it is only authoritative for the \name{root} zone. Hence, it refers the resolver to the authoritative nameservers for the \name{org} zone by  attaching the associated \rtype{NS} records and IP addresses in the referral (\ding{184}). 
The resolver then asks one of the \name{org} nameservers (\ding{185}). As the nameserver is not authoritative for the full domain,  it further refers the resolver to the authoritative nameservers for \name{sigplan.org} (\ding{186}). The resolver  finally receives the result, the IP address for \name{www.sigplan.org} (\ding{187} and \ding{188}), which it then sends back to the client (\ding{189}).
\end{example}

Note that, owing to  \emph{caching},
if the same query needs to be resolved again, the resolver can answer it directly without contacting the nameservers, provided the records' TTL has not expired.
Likewise, for a query of another name within an already cached zone, the resolver only needs to reach out to the relevant nameservers rather than starting from the root.
Note also that when a resolver has multiple nameservers to choose from, the selection is typically based on factors like response time and availability. DNS RFCs do not prescribe a specific nameserver selection algorithm, leaving this to the developers’ discretion.

\new{\paragraph{Subqueries}
During DNS resolution, the recursive resolver may need to perform additional queries to collect all the necessary information to resolve a client query. 
For instance,
suppose that a user wants to visit \name{example.org} and
 the authoritative
nameserver for the \name{example.org} zone is \name{sub.query.com},
the resolver must first resolve \name{sub.query.com}.\footnote{\new{This is a randomly chosen name just for illustrative purposes. As long as the authoritative
nameserver's name  differs from \name{example.org}, the resolver must then make additional subqueries to resolve it.}}
This process begins by querying the \name{com} TLD, followed by \name{query.com} (assuming the authoritative nameserver for \name{query.com} provides the IP address for \name{sub.query.com}). 
 These two additional queries are called \emph{subqueries}, as they are initiated by the resolver to resolve the original query.
}

\subsection{Labelled Transition Systems}\label{subsec:lts}
Given an alphabet $\Sigma$, a \emph{string} $str \in A^*$ is a sequence of labels $l_0 \cdot \ldots \cdot l_k$, where $l_0$ is the first label and $l_k$ is the last label. The length of the sequence, $k+1$, is denoted by $|str|$. For $i < j < |str|$, $str[i\ldots j]$ denotes the sequence $l_i\ldots l_j$ and $\epsilon$ denotes the \emph{empty} sequence. \new{Given two strings $str$ and $str'$, we write $str \preceq str'$ ($str \prec str'$) to denote that $str$ is a (proper) prefix of $str'$.} For notational convenience, we sometimes drop the explicit binary operator for concatenation.

\begin{definition}
	A \emph{labelled transition system} (LTS) \cite{heussner2012reachability}  $\cal A = \tup{S,S_{\mathit{init}},A,\trans}$ is given by a set of \emph{states} $S$, a set of initial states $S_{\mathit{init}}\subseteq S$, an alphabet $A$ of actions, and a set of transition rules $\scrsf{trans}$ of the form $s \xrightarrow{a} s'$, where $s,s'\in S$ and $a \in A$. As usual, we define an \emph{automaton} $\cal F = \tup{S, S_{\mathit{init}}, A, \trans, S_{\mathit{fin}}}$ as the LTS $\cal A$ along with a set of \emph{final} states $S_{\mathit{fin}} \subseteq S$. The automaton $\cal F$ is called a \emph{finite automaton} (FA) if the sets $S$ and $A$ are finite.
\end{definition}

A \emph{run} in the LTS $\cal A$ is a finite sequence $\rho = s_0 \xrightarrow{a_1} s_1 \ldots s_{n-1} \xrightarrow{a_n} s_n$, for $s_i \in S$ such that for every $1 \leq i \leq n$, $s_{i-1} \xrightarrow{a_i} s_i \in \trans$. We say that $\rho$ is a run from $s_0$ to $s_n$. The length of $\rho$ is $n$ and is denoted by $|\rho|$. The \emph{trace} of the run $\rho$ is the sequence of actions $a_1\ldots a_n$. Two runs are \emph{trace equivalent} if they have the same traces. Given the FA $\cal F$, a run in $\cal A$ is \emph{accepting} if it starts in an initial state and ends in a final state.

A state $s \in S$ is \emph{reachable} in $\cal A$ if there exists a run of $\cal A$ from an initial state $s_0$ to $s$.  The \emph{LTS verification} \emph{problem} asks whether, given an LTS $\cal A = \tup{S,S_{\mathit{init}},A,\trans}$, and an automaton $\cal F'$
over strings in $A^*$, there exists a run $\rho$ in $\cal A$ from an initial state, and an accepting, trace equivalent run $\rho'$ in $\cal F'$. We call the corresponding trace a \emph{witness} trace. The problem can be analogously defined for any system with an LTS semantics.

	\section{\label{formal-model} A Formal Model for DNS}

In this section, we define a formal model for DNS and provide the semantics
for all its syntactic components, viz. authoritative nameservers and
recursive resolvers. In \Cref{subsec:trcps}, we first define the meta-model of recursive communicating processes with timers, which is a timed extension of the commonly known model of recursive communicating processes. In  Sections~\ref{subsec:dns-semantics}--\ref{subsec:RR-semantics},
we use our meta-model to comprehensively capture complex features such as recursive subqueries, time-to-live (TTL) records, channel communication, non-determinism, etc. In particular, authoritative nameservers are modelled as LTS components and the recursive resolver is modelled as a tPDS component, allowing it to use timers, for keeping track of the age of cache elements, and a stack, to simulate recursive subquery resolution. As shown in Section \ref{sec:property-agnostic}, we provide a reduction of this model to a recursive CPS.

\subsection{Definitions and Notations}\label{subsec:def-not}
Consider a monoid $\cal M = \tup{M,\cdot, \mathbf{1}}$, where $M = \tup{G}$ is generated by the subset $G \subseteq M$, with the binary operation ($\cdot$), and the identity element $\mathbf{1}$. Given any two subsets $B,C \subseteq M$, we define the sets, $B\cdot C \coloneqq \{x\cdot y\mid x\in B, y\in C\}$, and $m\cdot B \coloneqq \{m\cdot x \mid x\in B\}$, for an arbitrary $m \in M$.

\subsubsection{Domain names and IP addresses}
An IP address is a string $ip\in \scrsf{ip}$, where $\scrsf{ip}$ is the set of all IP addresses, disjoint from $M$. A \emph{domain name} over $\monoid$ is a string $d \in M$ used to point to an IP address. The identity element $\mathbf{1}$ represents the root domain name. Henceforth, as a convention, we write our domain names in the \emph{reverse order}, starting with the labels higher up in the namespace hierarchy. For example, we will write \name{com.example.www} instead of \name{www.example.com}. For better readability, the set of all domain names over $\monoid$ is denoted by $\scrsf{domain}\subsub{\monoid} = M$.

\subsubsection{Resource records}

A \emph{resource record} $rr=\tup{d,t,\sigma,v,b}$ is a tuple consisting
of a \emph{record name} $d\in\scrsf{domain}\subsub{\monoid} \cup \scrsf{domain}\subsub{\monoid_{\star}}$ where
$\scrsf{domain}\subsub{\monoid_{\star}}$ denotes the set of domain names in wildcard records over the isomorphic monoid $\monoid_{\star} = \tup{M_\star = \tup{G_\star},\ostar,\mathbf{1}_{\star}}$ where $G_\star = \{g_\star \mid g\in G\}$, such that $m_{\star}\ostar m_{\star}' = (m \cdot m')_{\star}$; the record type $t\in\scrsf{type}=\{\scrsf{a},\scrsf{ns},\scrsf{mx},\scrsf{dname},\scrsf{cname},\scrsf{soa},\ldots\}$;
a time-to-live value $\sigma\in\mathbb{N}$; the value of the record
$v\in\scrsf{ip}\cup\scrsf{domain}\subsub{\monoid}$;
and a bit value $b\in\{0,1\}$ that denotes whether the record has
been synthesized from a $\scrsf{dname},\scrsf{cname}$, or $\scrsf{wildcard}$
record. We shall write dn$(rr)$, ty$(rr)$, ttl$(rr)$, val$(rr)$, and syn$(rr)$, to denote the fields $d,t,\sigma,v$, and $b$, in $rr$,
respectively. The class of all resource records is denoted by $\scrsf{record}\subsub{\monoid}$.

A \emph{zone} is a set of resource records. The set of all zones is denoted
by $\scrsf{zone}\subsub{\monoid}={\cal P}(\scrsf{record}\subsub{\monoid})$. Additionally, we assume
all zones are \emph{well-formed}, as described in \cite[Appendix A]{groot}. We shall
write $\text{dn}(z)$ to denote the domain name of a well-formed zone.

A \emph{zone file configuration} (ZFC) is a pair, $\tup{\cal Z, \Theta}$, consisting of a function $\cal Z : \cal S \rightarrow \cal P(\scrsf{zone}\subsub{\monoid})$ from a finite set $\cal S \subseteq \scrsf{ip}$ of IPs to the powerset of $\scrsf{zone}\subsub{\monoid}$, along with a subset $\Theta \subseteq \cal S$. Intuitively, $\cal S$ denotes a set of nameservers and $\Theta$ denotes the set of root nameservers in the network; $\cal Z(s)$ denotes the set of zones for which $s$ is authoritative. The set of all ZFCs is denoted by $\scrsf{zfc}\subsub{\monoid}$.

\subsubsection{Queries}

A DNS \emph{query} $q=\tup{d,t}$ is a pair consisting of a domain
name $d\in\scrsf{domain}\subsub{\monoid}$ and the type of query $t\in\scrsf{type}$.
For example, a query expecting a $\rtype{dname}$ rewrite for the domain name
$\mathsf{{com.example}}$ would be $\tup{\mathsf{com.example},\scrsf{dname}}$.
The set of all queries is denoted by $\scrsf{query}\subsub{\monoid}=(\scrsf{domain}\subsub{\monoid}\times\scrsf{type})\uplus\{\bot\}$,
where $\bot$ denotes a non-existent query. We shall write $\text{dn}(q)$
and $\text{ty}(q)$ to denote the fields $d$ and $t$.

\subsubsection{Answers}

\sloppy Every DNS \emph{answer} $a=\tup{l,rec}$ is a pair consisting of
a label $l\in\{\scrsf{ans},\scrsf{ansq},\scrsf{ref},\scrsf{rec},\scrsf{nxdomain},\scrsf{refused}\}$
denoting the type of answer (i.e., an answer $\scrsf{ans}$, a rewrite
$\scrsf{ansq}$, a delegation $\scrsf{ref}$, a recursive subquery
request $\scrsf{rec}$, an indication for a non-existent domain name query $\scrsf{nxdomain}$, etc.) and a set of resource records $\mathit{rec}\in\mathcal{P}(\scrsf{record}\subsub{\monoid})$
that contains information for further evaluation. The set of all answers is denoted by $\scrsf{answer}\subsub{\monoid}$.

\subsubsection{Server List}

A \emph{server list} $sl \in M^*$, is a string of domain names where each domain name is unique. That is, $sl[i] \neq sl[j]$, for $0\leq i < j \leq |sl|$. The set of all server lists is denoted by $\scrsf{slist}\subsub{\monoid}$. We additionally define a left action, $\odot$, of $M$ on $\scrsf{sl}\subsub{\monoid}$ such that for any $m\in M$, 
\[m \odot sl = \begin{cases}
	(m)(sl), & \text{if } m \text{ does not match any position in } sl\\
	(m)(sl'), & \text{otherwise,}
\end{cases}\]
where $sl'$ is the string derived from $sl$ by dropping the (only) position labelled by $m$. Given a string $\mathbf{m} = m_1\ldots m_n \in M^*$, we denote by $\mathbf{m}\odot sl$, the right associative composition of actions $m_1\odot (\ldots \odot(m_n \odot sl))$.

\Cref{fig:DNS-notation} describes some notation we use
when specifying DNS behavior, along with the intended semantics. $\scrsf{rank}$ adopts the product ordering $\vartriangleleft$
over the order $\mathit{false}\leq \mathit{true}$ of truth values. 

\begin{figure}
	\begin{tcolorbox}[width=\linewidth]
        \scalebox{.75}{ 
            \parbox{\linewidth}{ 
			\begin{align*}
				d_1^{-1} \cdot d_2 & \coloneqq \{d \mid d\in M, d_1\cdot d = d_2\} & \text{\emph{domain inverse relation}}\\
				prec(d) & \coloneqq \{d_1 \mid d \in d_1 \cdot M\} & \text{\emph{domain prefix set}}\\
				d_{1}\simeq_{j}d_{2} & \coloneqq |prec(d_1) \cap prec(d_2)| \geq j - 1 & \text{\emph{domain prefix match}}\\
				max_{\simeq}(d_{1},d_{2}) & \coloneqq |prec(d_1) \cap prec(d_2)| - 1 & \text{\emph{maximal prefix match}}\\
				d_{1} < d_{2} & \coloneqq prec(d_1) \subset prec(d_2) & \text{\emph{domain partial order}}\\
				\text{dn}(rr)\ni_{{\star}} d & \coloneqq d_\star \in \text{dn}(rr)\ostar G_\star \wedge \text{dn}(rr)\in M_\star & \text{\emph{domain wildcard match}}\\
				\scrsf{wildcard}(rr) & \coloneqq \text{dn}(rr) \in M_\star & \text{\emph{wildcard record}}\\
				\scrsf{match}(rr,d) & \coloneqq \text{dn}(rr)\ni_{{\star}} d \vee (d \in \text{dn}(rr)\cdot M \wedge \text{dn}(rr)\in M) & \text{\emph{record matches query}}\\
				\scrsf{rank}(rr,\tup{d,t},z) & \coloneqq\tup{\scrsf{match}(rr,d),\text{ty}(rr)=\scrsf{ns}\wedge\text{dn}(rr)\neq\text{dn}(z),max_{\simeq}(\text{dn}(rr),d),\scrsf{Wildcard}(rr)} & \text{\emph{resource record rank}}\\
				rr_1<_{q,z}rr_2 & \coloneqq\scrsf{rank}(rr_1,q,z)\vartriangleleft\scrsf{rank}(rr_2,q,z) & \text{\emph{resource record order}}
			\end{align*}
}}   
	\end{tcolorbox}
	\captionsetup{skip=0pt}
	\caption{\label{fig:DNS-notation}Some DNS notation.} 
\end{figure}

\begin{example}\label{ex:normal-domain}
	Consider the free monoid of strings $\cal L = \tup{\scrsf{label}^*,(.),\epsilon}$, generated by an infinite set of labels $\scrsf{label}$ under concatenation. The strings $\mathsf{com.example.a}$ and $\mathsf{com.example.b}$ are examples of domains in $\scrsf{domain}_{\cal L}$, if the labels $\mathsf{com,example,a,b}\in \scrsf{label}$. The empty string $\epsilon$ represents the root domain. 
	
	The expression $(\mathsf{com}^{-1}.\mathsf{com.example.a})$ denotes the singleton set $\{\mathsf{example.a}\}$; $prec(\mathsf{com.example.a})$ is the set of prefixes $\{\epsilon, \mathsf{com,com.example,com.example.a}\}$; $max_{\simeq}(\mathsf{com.example.a},\mathsf{com.example.b})$ is $2$, the size of the set $\{\mathsf{com,com.example}\}$, of non-trivial common prefixes; consequently, $(\mathsf{com.example.a}\simeq_{1}\mathsf{com.example.b})$ holds; however, $(\mathsf{com.example.a}\leq\mathsf{com.example.b})$ does not hold as $\mathsf{com.example.a}\not\in prec(\mathsf{com.example.b})$.
\end{example}

The next lemma follows immediately from our definitions.

\begin{lemma}\label{lem:notation-functorial}
	The sets $\scrsf{domain}\subsub{\monoid},\scrsf{record}\subsub{\monoid},\scrsf{zone}\subsub{\monoid}, \scrsf{zfc}\subsub{\monoid}, \scrsf{query}\subsub{\monoid}$, $\scrsf{answer}\subsub{\monoid}$, and $\scrsf{slist}\subsub{\monoid}$ are all functorial\footnote{Given an expression $A$ that contains $B$, we say that $A$ is \emph{functorial} in $B$, if $B \mapsto A$ is a functor.} in the monoid $\monoid$.
\end{lemma}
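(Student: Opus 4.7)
The plan is to exhibit each listed set as a composition of standard set-theoretic functors applied to the underlying set $M$ of $\monoid$, and then invoke closure of the class of functors under composition, finite products, coproducts, powersets, and the Kleene star. Concretely, given any monoid homomorphism $h:\monoid_1\to\monoid_2$, I will construct, for each set $X_{\monoid}$ in the list, an explicit map $X(h):X_{\monoid_1}\to X_{\monoid_2}$ obtained by applying $h$ pointwise to every component valued in $M$. Functoriality then reduces to checking $X(\mathrm{id}_{\monoid})=\mathrm{id}_{X_{\monoid}}$ and $X(g\circ h)=X(g)\circ X(h)$, which in each case follows by componentwise inspection.

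First I would observe that the assignments $\monoid\mapsto M$ and $\monoid\mapsto \monoid_{\star}$ are both functorial: the former is the forgetful functor to $\mathbf{Set}$, and for the latter any homomorphism $h:\monoid_1\to\monoid_2$ lifts to $h_\star:(M_1)_\star\to(M_2)_\star$ by $h_\star(m_\star)\coloneqq h(m)_\star$, which preserves $\ostar$ and $\mathbf{1}_\star$. From this base, $\scrsf{domain}\subsub{\monoid}$ is immediate; $\scrsf{record}\subsub{\monoid}$ is a finite product of $(\scrsf{domain}\subsub{\monoid}\uplus\scrsf{domain}\subsub{\monoid_{\star}})$, $(\scrsf{ip}\cup\scrsf{domain}\subsub{\monoid})$, and constant factors $\scrsf{type},\nat,\{0,1\}$, so it inherits a functor; $\scrsf{zone}\subsub{\monoid}=\cal P(\scrsf{record}\subsub{\monoid})$ via the covariant powerset functor (taking images of $h$-applied records); $\scrsf{zfc}\subsub{\monoid}$, since $\cal S,\Theta\subseteq\scrsf{ip}$ are parameters not depending on $\monoid$, is just the exponential $\cal S\to\cal P(\scrsf{zone}\subsub{\monoid})$ paired with a constant $\Theta$, and hence functorial in its codomain. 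The cases of $\scrsf{query}\subsub{\monoid}$ and $\scrsf{answer}\subsub{\monoid}$ are analogous coproducts and products of already-treated functors.

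The one case requiring care is $\scrsf{slist}\subsub{\monoid}$, since a naive componentwise application of $h$ to a unique-element string can collapse distinct entries and violate uniqueness. I would define $\scrsf{slist}(h)(sl)$ by scanning $sl$ left-to-right, applying $h$ to each label, and discarding any image that has already appeared—equivalently, by iterating the left action $\odot$ of \Cref{fig:DNS-notation} on the empty string: $\scrsf{slist}(h)(m_1\cdots m_n)\coloneqq h(m_1)\odot(\cdots\odot(h(m_n)\odot\epsilon))$. This lies in $\scrsf{slist}\subsub{\monoid_2}$ by construction. For functoriality, $\scrsf{slist}(\mathrm{id})=\mathrm{id}$ is immediate because the uniqueness of entries in $sl$ guarantees no drops occur. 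For composition, an induction on $|sl|$ suffices: the key observation is that the set of labels retained after applying $h$ and then $g$ equals the set of labels retained after applying $g\circ h$, and the order of first occurrences is preserved by both procedures.

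The main obstacle, and really the only non-routine point, is precisely this last step: verifying that the drop-duplicates rule for server lists is stable under composition. Once that induction is carried out, the rest of the lemma reduces to combining well-known functorial constructions and is essentially bookkeeping.
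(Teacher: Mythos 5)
Your proposal is correct and follows what the paper treats as immediate from its definitions: each set is built from $M$ by standard functorial constructions (products, coproducts, powersets, exponentials with fixed domain, and the $(\cdot)_\star$ lift), with the induced map given by pointwise application of the homomorphism. Your extra care for $\scrsf{slist}\subsub{\monoid}$, resolving the potential collapse of entries via the paper's own $\odot$ action and checking identity and composition, is a sound elaboration of the same argument rather than a different route.
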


Notably, IP addresses remain unchanged with the change in the underlying monoid $\monoid$.

\subsection{Recursive Communicating Processes with Timers\label{subsec:trcps}}
We will now introduce the notion of recursive communicating processes with timers, extending the model of recursive communicating processes (rCPS) \cite{heussner2010reachability,heussner2012reachability}, where finitely many processes communicate with each other over perfect channels, dictated by an underlying topology. Therefore, we first define the notion of a network topology, which is a directed graph whose vertices represent processes and whose edges represent channels.
\begin{definition}
	A \emph{network topology} $\cal T$ is a 3-tuple $\tup{P,\scrsf{ch},\ch}$, where $P$ is a finite set of processes, $\scrsf{ch}$ is a finite set of perfect channel identifiers, and $\ch : \scrsf{ch} \rightharpoonup P \times P$ is a partial map from channel identifiers to the processes it connects.
	
	We write $\ch_{c}$ to denote the channel from the \emph{source} process $i \in P$ to the \emph{destination} process $j \in P$, if $\ch(c) = \tup{i,j}$. Moreover, we assume that $\ch^{-1}(\tup{i,i})$ is always empty i.e., there is no channel from a process to itself.
\end{definition}

Consider a network topology $\mathsf T = \tup{P,\scrsf{ch},\ch}$. Given a message alphabet $M$, we define the set of \emph{channel actions}, $\scrsf{caction}_{\mathsf T,M}^p = \{\ch_c!a \mid c \in \ch^{-1}\tup{p,p'}, p'\in P, a \in M\} \cup \{\ch_c?a \mid c \in \ch^{-1}\tup{p',p}, p'\in P, a \in M\}$ for every $p \in P$. The operations of the form $\ch_c?a$ and $\ch_{c}!a$ denote the actions of \emph{receiving} and \emph{sending} the message $a$ on channel $\ch_c$ respectively. 

Next, we define the model of pushdown systems with timers and a timeless stack, where we follow the model proposed in \cite[\S 2]{clemente2015timed}, but use timers \cite{vaandrager2023learning,KuroseRoss16} instead of clocks.\footnote{The restriction to timers reduces the space complexity of the abstract configuration space we introduce in Section~\ref{subsec:abstract-model}. From a decidability perspective, all the reductions presented in this paper hold for the more expressive class of timed pushdown systems proposed in \cite{clemente2015timed}.} Unlike clocks, timers run \emph{backwards}, starting from an initial value. Moreover, we restrict the timing constraints to \new{only allow \emph{timeout checks}}. A \emph{timing constraint} over a set of timers $X$ is a formula generated by the grammar,
\[ \scrsf{cc}\subsub{X} \coloneqq x > 0 \mid \scrsf{cc}\subsub{X} \wedge \scrsf{cc}\subsub{X},\]
where $x\in X$ and timer resets are represented by $[\mathbf{x} \leftarrow \mathbf{k}]$, with $\mathbf{k} \in \mathbb{N}^n$; $\mathbf{x}\in X^n$ is a vector with distinct timer components, denoting that each timer $\mathbf{x}_i$ in $\mathbf{x}$ is reset to the value $\mathbf{k}_i$.

A \emph{timer valuation} over $X$ is a map $\nu : X \rightarrow \rat^{\geq 0} \cup \{\infty\}$, where $X$ is a set of timers and each timer either holds a non-negative rational number (denoted by $\rat^{\geq 0}$) or it has not yet been initialized (denoted by $\infty$). Given $k \in \rat^{\geq 0}$, we denote by $\nu - k$ the valuation, where for each $x\in X$, 
\[(\nu - k)(x) = \begin{cases}
	\nu(x) - k, & \text{if } \nu(x) \geq k\\
	0, & \nu(x) < k\\
	\infty, & \nu(x) = \infty.
\end{cases}\]
Given a vector $k \in \nat^n$, we denote by $\nu[\mathbf{x} \leftarrow \mathbf{k}]$ the updated timer valuation such that $\nu[\mathbf{x} \leftarrow \mathbf{k}](x) = k_i$ if $x = \mathbf{x}_i$ for some $i$; otherwise, $\nu[\mathbf{x} \leftarrow \mathbf{k}](x) = \nu(x)$.\footnote{Note that the semantics of a timer is less expressive than that of a clock. However, since most implementations of $\scrsf{dns}$ use the notion of timer-based timeouts in their states, using timers greatly simplifies further reductions in this paper.}

\begin{definition}\label{def:tpds}
	A \emph{pushdown system with timers} (tPDS) is a 6- tuple $\cal P = \tup{S,S_{init},A,\Gamma,X,\trans}$, where $S$ is a finite set of states, $S_{init} \subseteq S$ is a finite set of initial states, $A$ is a finite input alphabet, $\Gamma$ is a finite stack alphabet, $X$ is a finite set of timers, and $\trans$ is a set of transition rules. The transition rules are of the form $s \xrightarrow[{[\mathbf{x} \leftarrow \mathbf{k}]},cc]{\tup{\alpha,a,\alpha'}} s'$, with $s,s' \in S$ states, $a \in A \cup \{\epsilon\}$ an input letter, and $\mathbf{x} \in X^n$ an n-dimensional vector of timers, being reset to $\mathbf{k}$, for some $n$; $cc \in \scrsf{cc}_X$ is a timer constraint; $\alpha\in \Gamma^*$ represents the finite lookup of the top of the stack, and $\alpha'\in \Gamma\cup \varepsilon$ represents the element to be pushed (or none thereof, if $\alpha' = \varepsilon$). The \emph{size} of $\cal P$ is the cumulative size of all its components.
\end{definition}

A \emph{configuration} $C$ of the tPDS is a 3-tuple $\tup{s,\nu,\gamma}$, where $s\in S$, $\nu$ is a \emph{timer valuation} over $X$, and $\gamma \in \Gamma^*\bottom$ is the stack valuation with the stack top being the leftmost label in $\gamma$, and the special symbol $\dashv\,\notin \Gamma$ denoting the bottom of the stack. The set of all configurations is the set $\config{\cal P} = S\times (\rat^{\geq 0} \cup \{\infty\})^X \times \Gamma^*\bottom$. The set of initial configurations is $\config{\cal P}_{init} = S_{init}\times \{\infty\}^{X} \times\{\dashv\}$. A configuration in $\{s\} \times (\rat^{\geq 0} \cup \{\infty\})^Xg \times \Gamma^*\bottom$, for $s \in S$, is called an $s$\emph{-configuration}.

The semantics for the tPDS is the LTS $\tup{\cal P} = \tup{\config{\cal P},\config{\cal P}_{init},A',\trans'}$, where $A' = ((\Gamma^*\cup\Gamma^+\bottom) \times A \times (\Gamma \uplus \{\varepsilon\})) \times \{[\mathbf{x} \leftarrow \mathbf{k}] \mid \mathbf{x} \in X^n, \mathbf{k} \in \nat^n, n \in \nat \} \times \scrsf{cc}_X$ and $\trans'$ contains two kinds of transitions:
\begin{enumerate}[(i)]
	\item \emph{timed transitions} labelled by $t \in \rat^{\geq 0}$, such that $\tup{s, \nu, \gamma} \xrightarrow{t} \tup{s, \nu - t, \gamma}$, and
	\item \emph{discrete transitions}, such that $\tup{s, \nu, \gamma} \xrightarrow[{[\mathbf{x} \leftarrow \mathbf{k}]},cc]{\tup{\alpha,a,\alpha'}} \tup{s',\nu',\gamma'}\in \trans'$, whenever $\nu \models cc$, $\nu' = \nu[\mathbf{x} \leftarrow \mathbf{k}]$, $s \xrightarrow[{[\mathbf{x} \leftarrow \mathbf{k}]},cc]{\tup{\alpha,a,\alpha'}} s' \in \trans$, and we have, 
	\[\gamma' = \begin{cases}
		\alpha'\gamma, & \text{if } \alpha' \neq \varepsilon,\alpha = \epsilon\\
		\gamma, & \text{if } \alpha' = \varepsilon,\alpha = \epsilon\\
		\gamma[1\ldots|\gamma|], & \text{if } \alpha' = \varepsilon,\alpha \neq \epsilon \text{, and } \gamma = \alpha\gamma'' \text{, for some } \gamma'\in \Gamma^*\bottom\\
		\alpha'\cdot\gamma[1\ldots|\gamma|], & \text{if } \alpha' \neq \varepsilon,\alpha \neq \epsilon \text{, and } \gamma = \alpha\gamma'' \text{, for some } \gamma'\in \Gamma^*\bottom
	\end{cases}\]
	Intuitively, each stack operation checks whether a finite length of the top of the stack matches $\alpha$, and performs either a push operation, \emph{nop} (i.e., let the stack remain the same), pop operation, or a \emph{pop and push} operation. 
\end{enumerate}
The runs of $\tup{\cal P}$ are slightly different from the definition for LTS given in \Cref{subsec:lts}. A \emph{run} in $\tup{\cal P}$ is a finite \emph{alternating} sequence of the form $\rho = C_0 \xrightarrow[{[\mathbf{x}_1 \leftarrow \mathbf{k}_1]}, cc_1]{\tup{\alpha_1,a_1,\alpha'_1}} C_1 \xrightarrow{t_1} C_1' \xrightarrow[{[\mathbf{x}_2 \leftarrow \mathbf{k}_2]}, cc_2]{\tup{\alpha_2,a_2,\alpha'_2}} C_2 \ldots C_{n-1}' \xrightarrow[{[\mathbf{x}_n \leftarrow \mathbf{k}_n]}, cc_n]{\tup{\alpha_n,a_n,\alpha'_n}} C_n$, where $t_j \in \rat^{\geq 0}$ and for every $1 \leq i \leq n$, $C_{i-1}' \xrightarrow[{[\mathbf{x}_n \leftarrow \mathbf{k}_n]}, cc_i]{\tup{\alpha_i,a_i,\alpha'_i}} C_i \in \scrsf{trans}'$. The trace of $\rho$ is the sequence $a_1\ldots a_n \in A^*$. If the set of timers $X$ is empty, $\cal P$ behaves as a standard \emph{untimed} pushdown system (PDS).

\subsubsection{Recursive Communicating Processes with Timers}
We follow the definition of an rCPS proposed in \cite[Definition 1.7]{heussner2012reachability}, extending it with timers.

A \emph{system of recursive communicating processes with timers} (trCPS) $\cal R = \tup{\mathsf T,\mathsf M, (\cal P^p)_{p \in P}}$, is given by a network topology $\mathsf T = \tup{P, \scrsf{ch}, \ch}$, a \emph{message} alphabet $\mathsf M$, and for each process $p \in P$, a tPDS $\cal P^p = \tup{S^p,S_{init}^p,A^p,\Gamma^p,X^p,\trans^p}$ such that $\scrsf{caction}^p_{\cal T, M} \subseteq A^p$. We write $\scrsf{s} = \prod_{p \in P} S^p$ for the set of \emph{global states}. The topology $\mathsf T$ is \emph{pointed} with respect to $\cal R$ if exactly one process in $P$ corresponds to a tPDS and all the others correspond to an LTS with finitely many states (see \Cref{fig:pointed-topo}).

\begin{wrapfigure}{r}{0.35\textwidth}
	\centering
  \resizebox{.24\textwidth}{!}{
	\begin{tikzpicture}%
		\node (r) [fill=red!10,draw=red,circle] at (0,0) {};
		\node (p1) [fill=blue!10,draw=blue, circle] at (0,-2) {};
		\node (p2) [fill=blue!10,draw=blue, circle] at (-1.8,-1) {};
		\node (p3) [fill=blue!10,draw=blue, circle] at (-1,-1.8) {};
		\node (p4) [fill=blue!10,draw=blue, circle] at (1.8,-1) {};
		\node (p5) [fill=blue!10,draw=blue, circle] at (1,-1.8) {};
		
		\path[->] (r) edge [bend right=20]             node              {} (p1)
		edge [bend right=20]             node              {} (p2)
		edge [bend right=20]             node              {} (p4)
		(p1) edge [bend right=20]             node              {} (r)
		edge             node              {} (p3)
		edge             node              {} (p5)
		(p2) edge [bend right=20]             node              {} (r)
		(p3) edge node {} (p2)
		edge node {} (p4)
		(p4) edge [bend right=20]             node              {} (r);
	\end{tikzpicture}
 }
	\captionsetup{skip=5pt}
	\caption{\label{fig:pointed-topo}A pointed topology. The red  process is the only tPDS in the network.}
\end{wrapfigure}

\sloppy The semantics of trCPS is defined in terms of a global LTS $\tup{\cal R} = \tup{\scrsf{config}, \scrsf{config}_{init}, \mathsf{A}, \trans}$, where $\scrsf{config} = (\config{\cal P^p})_{p \in P} \times (\mathsf M^*)^{\scrsf{ch}}$ is the set of \emph{global configurations}, $\scrsf{config}_{init} = (\config{\cal P^p}_{init})_{p \in P} \times \{\epsilon\}^{\scrsf{ch}}$ is the set of initial configurations, and $\mathsf{A} = \biguplus_{p \in P} {A'^p}$ is the set of actions. Given a \emph{global configuration} $\scrsf{c} \in \scrsf{config}$, we write $\scrsf{c}^p \in \config{\cal P^p}$ to denote the \emph{local configuration} in $\scrsf{c}$ that corresponds to the process $p \in P$, and $\scrsf{c}^{c}$ to denote the contents of channel $\ch_c$ such that $\scrsf{c}^c = \scrsf{c}^{c'}$ if $\ch_c = \ch_{c'}$. A configuration in $\scrsf{c}$ is called an $\mathbf{s}$\emph{-configuration} when, for every $p \in P$, $\scrsf{c}^p$ is an $s^p$-configuration, for some $\mathbf{s} =\tup{s^p}_{p\in P} \in \scrsf{s}$.

The discrete transitions in $\trans$ of the form $\scrsf{c} \xrightarrow[{[\mathbf{x} \leftarrow \mathbf{k}]}, cc]{\tup{\alpha,a,\alpha'}} \scrsf{c}'$, for $a\in A^p$, satisfy the following:
\begin{enumerate}[(i)]
	\item $\scrsf{c}^p \xrightarrow[{[\mathbf{x} \leftarrow \mathbf{k}]}, cc]{\tup{\alpha,a,\alpha'}} \scrsf{c}'^p \in \trans'^p$ and $\scrsf{c}^q = \scrsf{c}'^q$ for all $q\in P \setminus p$,
	\item if $a \in A^p\setminus\scrsf{caction}^p_{\cal T, M}$, $\scrsf{c}^c = \scrsf{c}'^c$ for $c \in \scrsf{ch}$,
	\item if $a = \ch_c!m$ then $\scrsf{c}'^{c''} = \scrsf{c}^{c''}\cdot m$ for each $c'' \in \ch^{-1}(\ch_c)$ and $\scrsf{c}'^{c''} = \scrsf{c}^{c''}$ for $c'' \in \scrsf{ch}\setminus\ch^{-1}(\ch_c)$,
	\item if $a = \ch_c?m$ then $m\cdot\scrsf{c}'^{c''} = \scrsf{c}^{c''}$ for each $c'' \in \ch^{-1}(\ch_c)$ and $\scrsf{c}'^{c''} = \scrsf{c}^{c''}$ for $c'' \in \scrsf{ch}\setminus\ch^{-1}(\ch_c)$.
\end{enumerate}
The timed transitions in $\trans'$ of the form $\scrsf{c} \xrightarrow{t} \scrsf{c'}$ satisfy $\scrsf{c}^p \xrightarrow{t} \scrsf{c}'^p \in {\trans^p}'$ for all $p \in P$. We will refer to runs in $\tup{\cal R}$ as runs in $\cal R$.

Observe that, if none of the component processes in a trCPS uses timing operations, our model essentially becomes a rCPS \cite[Def. 1.7]{heussner2012reachability}. Similarly, without timing and stack operations, the model becomes a CPS \cite[Def. 1.2]{heussner2012reachability}. 

\subsubsection{Eager Runs}\label{subsubsec:eagerness}
The notion of eager runs was first introduced by Heussner et. al. \cite{heussner2010reachability,heussner2012reachability}, in the context of rCPS. We will extend this notion to trCPS.

\begin{definition}
	A run $\rho = \scrsf{c}_0 \xrightarrow[{[\mathbf{x}_1 \leftarrow \mathbf{k}_1]}, cc_1]{\tup{\alpha_1,a_1,\alpha'_1}} \scrsf{c}_1 \xrightarrow{t_1} \scrsf{c}_1' \xrightarrow[{[\mathbf{x}_2 \leftarrow \mathbf{k}_2]}, cc_2]{\tup{\alpha_2,a_2,\alpha'_2}} \scrsf{c}_2 \ldots \scrsf{c}_{n-1}' \xrightarrow[{[\mathbf{x}_n \leftarrow \mathbf{k}_n]}, cc_n]{\tup{\alpha_n,a_n,\alpha'_n}} \scrsf{c}_n$ in a trCPS is \emph{eager} if for all $1 \leq i \leq n$, if $a_i$ is of the form $\ch_c?m$ then $i > 1$ and $a_{i - 1}$ is the corresponding write action $\ch_c!m$ in $\rho$. 
\end{definition}

Intuitively, \emph{eagerness} formalizes that, in a run, send actions are immediately proceeded by their matching receive actions. A configuration $\scrsf{c} \in \scrsf{config}$ is \emph{eager-reachable} in a trCPS $\cal R$ if there exists an eager run from an initial configuration in $\scrsf{config}_{\mathit{init}}$ to $\scrsf{c}$. We say that a trCPS $\cal R$ is \emph{eager} when every run in $\cal R$ is eager.
The \emph{state eager-reachability problem} for $\cal R$ asks whether, for a given global state $\mathbf{s} \in \scrsf{s}$, there exists an eager run from an initial configuration to an $\mathbf{s}$-configuration.

Heussner et. al. \cite[Prop. 2.4]{heussner2010reachability} characterized a subclass of topologies for which the state eager-reachability problem for rCPS, satisfying certain additional restrictions on their communication behaviour, is decidable. As it happens, the class of rCPS over pointed topologies satisfies these conditions; in particular, this fact is subsumed by the case $(ii)$ in their proof, where they reduce the search for an eager run in the rCPS to the search for a run in a (untimed) pushdown system. The reduced pushdown system even preserves the set of traces of all eager runs in the original rCPS. In the case of pointed topologies, such a reduction leads to a pushdown system of size polynomial in the size of each of the processes.

\begin{lemma}\label{lem:rcps-to-pds}
	Given any rCPS $\cal R$, over a pointed topology, it is $\mathsf{PolyTime}$-reducible to an untimed pushdown system $\cal P$ with the set of states as the set of global states in $\cal R$, such that, for every eager run in $\cal R$, from an $\mathbf{s}$-configuration to an $\mathbf{s}'$-configuration, there exists a run in $\cal P$, from an $\mathbf{s}$-configuration to an $\mathbf{s}'$-configuration, with the same trace.
\end{lemma}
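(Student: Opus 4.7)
The plan is to construct $\cal P$ by taking the global states $\scrsf s$ of $\cal R$ (augmented with polynomially many auxiliary bookkeeping states) as the control states, inheriting the stack from the unique pushdown process $p_0 \in P$, and collapsing each eager send/receive pair into two consecutive transitions that jointly update the sender's and receiver's local states, so that channel contents never need to be stored. By pointedness, every $p \in P \setminus \{p_0\}$ is a finite-state LTS, and only $p_0$ has a non-trivial stack, which means $\cal P$'s stack alphabet can just be $\Gamma^{p_0}$.

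I would define $\cal P = \tup{\scrsf s \uplus \cal I,\ \scrsf s_{init},\ A,\ \Gamma^{p_0},\ \trans}$ with $A = \biguplus_p A^p$, and populate $\trans$ with two rule families. (i) For each internal (non-channel) rule $s^p \xrightarrow{\tup{\alpha,a,\alpha'}} s'^p$ of $\cal P^p$ and each global state $\mathbf s$ with $\mathbf s(p) = s^p$, put a rule $\mathbf s \xrightarrow{a} \mathbf s[p \mapsto s'^p]$ in $\cal P$, carrying the stack operation $\tup{\alpha,a,\alpha'}$ if $p = p_0$ and a nop otherwise. (ii) For each pair of matching channel rules $\tau_! = s^p \xrightarrow{\ch_c!m} s'^p$ and $\tau_? = s^q \xrightarrow{\ch_c?m} s'^q$ with $\ch(c) = \tup{p,q}$, and each global state $\mathbf s$ with $\mathbf s(p) = s^p$ and $\mathbf s(q) = s^q$, introduce a fresh intermediate state $\iota = (\mathbf s, \tau_!, \tau_?) \in \cal I$ together with the two rules $\mathbf s \xrightarrow{\ch_c!m} \iota$ and $\iota \xrightarrow{\ch_c?m} \mathbf s[p \mapsto s'^p,\, q \mapsto s'^q]$, attaching $p_0$'s stack operation to whichever of the two half-transitions actually involves $p_0$ and a nop to the other. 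Because the only outgoing rule of $\iota$ is the paired receive, entering $\iota$ forces the next transition to be exactly the matching $\ch_c?m$, which precisely mirrors eagerness.

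I would then establish the claim by induction on the length of an eager run $\rho$ in $\cal R$. An internal step of $\rho$ is matched by a type-(i) rule acting on the corresponding global state. By eagerness, every send $\ch_c!m$ in $\rho$ is immediately followed by its matching receive $\ch_c?m$, and this pair is matched by the two type-(ii) rules via the intermediate state $\iota$ associated with the specific send/receive rules used and the current global state. Crucially, eagerness also guarantees that every channel in $\cal R$ is empty just before each send and just after each receive along $\rho$, so dropping channel contents from the configuration of $\cal P$ is lossless; only $p_0$'s stack still needs to persist, and that is exactly what $\cal P$ carries. Trace preservation is letter-by-letter: each simulated send in $\cal P$ emits $\ch_c!m$ and each simulated receive emits $\ch_c?m$, with no spurious labels interposed.

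For the complexity claim, $|\cal I|$ is bounded by $|\scrsf s|$ times the number of compatible send/receive rule pairs, which is polynomial in $\sum_p |\trans^p|$, and both the state set and the transition set of $\cal P$ can be enumerated in polynomial time from the description of $\cal R$. The main technical care point I anticipate is the stack-operation bookkeeping in rule (ii) when $p_0$ is the sender versus the receiver versus uninvolved; routing the simulation through the intermediate state $\iota$ provides exactly the slack needed to place $p_0$'s stack operation on the appropriate half-transition and a nop on the other, so the pushdown discipline of $p_0$ survives verbatim across the merged send/receive simulation. This is essentially the pointed-topology specialization of the rCPS-to-PDS reduction of Heussner et al.~\cite{heussner2010reachability}, adapted to record the simulated labels on the merged step.
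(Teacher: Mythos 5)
Your construction is the intended one: the paper does not spell out a proof at all, but simply invokes case (ii) of Heussner et al.'s reduction, and your product-of-local-states PDS carrying the stack of the unique pushdown process $p_0$ is exactly that reduction specialized to pointed topologies. There is, however, a genuine gap in how you use eagerness. The paper's definition only constrains \emph{receives}: every receive must be immediately preceded by its matching send. It does not say that every send is immediately followed by the matching receive, so an eager run may contain ``orphan'' sends whose messages are never consumed (e.g.\ a run that ends right after a query is sent, or one in which the addressee simply never takes a receive step). Your induction step asserts ``by eagerness, every send in $\rho$ is immediately followed by its matching receive,'' which is false under the formal definition, and your rule family (ii) hard-wires this assumption: after a send the PDS is parked in the fresh state $\iota$ whose only outgoing transition is the paired receive, and the paired rule even requires the receiver to have a matching receive rule at its current local state. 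Consequently an eager run containing an orphan send cannot be simulated with the same trace (the PDS either blocks in $\iota$ or is forced to emit a receive label that the trace does not contain), and a run ending with a send does not end in an $\mathbf{s}'$-configuration of $\cal P$, contrary to what the lemma requires. Your auxiliary claim that channels are empty before every send and after every receive is also not implied by eagerness, for the same reason.

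The repair is small and is what the cited construction effectively does: in addition to the paired rules, add for every send rule $s^p \xrightarrow{\ch_c!m} s'^p$ and every compatible global state $\mathbf{s}$ a standalone rule $\mathbf{s} \xrightarrow{\ch_c!m} \mathbf{s}[p \mapsto s'^p]$ (carrying $p_0$'s stack operation if $p = p_0$, a nop otherwise). This is sound because $\cal P$ never tracks channel contents anyway: only the local-state evolution and the emitted labels must be mirrored, and in an eager run each receive is matched by the send occurring immediately before it, so messages left in channels never need to be remembered and may be silently dropped. With this addition your induction goes through, and the rest of your argument---the placement of stack operations on the appropriate half-transition, the intermediate state forcing the matching receive, and the polynomial bound on the number of states and rules---is in order. (A cosmetic mismatch remains: the lemma states that the control states of $\cal P$ are exactly the global states of $\cal R$, whereas you add the auxiliary states in $\cal I$; this is harmless for the stated conclusion about runs between $\mathbf{s}$- and $\mathbf{s}'$-configurations.)
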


\subsection{DNS Semantics}\label{subsec:dns-semantics}

Given an arbitrary zone file configuration $\tup{\cal Z ,\Theta} \in \scrsf{zfc}\subsub{\monoid}$, we model the corresponding instance of DNS as the trCPS $\scrsf{dns}\subsub{\cal Z}=\tup{\mathsf T\subsub{\cal Z},\mathsf M\subsub{\cal Z}, (\cal P^p)_{p\in P\subsub{\cal Z}}}$, where $\mathsf T\subsub{\cal Z} = \tup{P\subsub{\cal Z}, \scrsf{ch}\subsub{\cal Z}, \ch\subsub{\cal Z}}$, and $\mathsf M\subsub{\cal Z} = \scrsf{query}\subsub{\monoid}\uplus\scrsf{answer}\subsub{\monoid}$ is the disjoint union of the set of all queries and answers, such that the following hold:
\begin{enumerate}[(i)]
	\item $P\subsub{\cal Z} = \{ r \} \uplus \cal S \subseteq \scrsf{ip}$,
	\item $\scrsf{ch}\subsub{\cal Z} = \cal S \times \cal S \uplus \{ r \} \times \cal S \uplus \cal S \times \{ r \}$,
	\item $\ch\subsub{\cal Z}$ is the identity map,
	\item $\cal P^p$ is a tPDS of the form $\rec{p}$, if $p = r$, and an LTS of the form $\ns{p}$, otherwise.
\end{enumerate}

Intuitively, $\rec{r}$ denotes the resolver, identified by its IP address; there are no communication channels between the resolver to itself. This follows, by convention \cite{groot,dnsmaude}, from a strict restriction on our model, where we typically do not allow a nameserver to also serve as a resolver in the network. However, note that this is a logical separation, as in reality this is not the case; resolvers may indeed themselves behave as nameservers \cite{rfc1034,rfc1035} in a network; $\ns{p}$ is the formal model of a nameserver. We define these models below, in Sections \ref{subsec:NS-semantics} and \ref{subsec:RR-semantics}. Note that $\mathsf T_{\cal Z}$ is a pointed topology with only one tPDS process $\cal P^r$. \new{We shall crucially use this fact in~\Cref{sec:property-validation-to-reachability} to obtain the decidability of the $\scrsf{dnsverif}$ problem.}

Unless otherwise stated, we will implicitly assume that the underlying monoid $\monoid$ is the free monoid $\cal L$, as described in Example \ref{ex:normal-domain}. For notational convenience, we shall thus drop the corresponding subscript $\cal L$ from all our terms.  
For the rest of the paper, we also arbitrarily fix an instance of DNS with the zone file configuration $\tup{\cal Z : \cal S \rightarrow \cal P(\scrsf{zone}),\Theta}$, denoting it as $\scrsf{dns}$, by dropping the subscript $\cal Z$. If clear from the context, we shall also use $\cal Z$ interchangeably to denote both, the set of all zones in the image of the configuration, and the set of all records in those zones.

\subsection{\label{subsec:NS-semantics}Authoritative Nameserver}

An \emph{authoritative nameserver}, $\ns{s}$, identified by its IP address, $s \in \cal S$, is an LTS of the form $\tup{\scrsf{nsconfig}_s,\{\tup{\mathit{init},\bot,\bot}\},\scrsf{saction}_s,\trans_s}$, where $\scrsf{nsconfig}_s$ is the set of tuples $\{\mathit{init},\mathit{lookup},\mathit{exact},\mathit{wildcard}\}\times \scrsf{query} \times \cal R$, $\scrsf{saction}_s = \scrsf{caction}^s_{\mathsf T,M} \cup \{in\}$ with the special action label $in$ for internal actions, and $\trans_s \subseteq \scrsf{nsconfig}_s \times \scrsf{saction} \times \scrsf{nsconfig}_s$. 

Given a configuration $p = \tup{\mathit{status}, q, r}$,
\begin{enumerate}
	\item \emph{status} is $\mathit{init}$ when the nameserver does not have pending query requests, $\mathit{lookup}$ when it must start the lookup process for $q$, $\mathit{exact}$ when $\text{dn}(q)$ finds an exact match with the domain name field of some resource record in a zone that $\ns{s}$ is authoritative for, and $\mathit{wildcard}$ when $\text{dn}(q)$ finds a wildcard match with the domain name field of some resource record in such a zone.
	\item $q$ is the current query.
	\item $r$ is the identity of the resolver that the nameserver communicates with.
\end{enumerate}
\begin{figure}
	\resizebox{.75\textwidth}{!}{\begin{tcolorbox}[width=.9\textwidth]
			\begin{minipage}{\textwidth}
			\begin{align*}
		\mathcal{N}(s,q) & =max_{\text{dn}}\{z\in\mathcal{Z}(s)\mid\text{dn}(z)\leq\text{dn}(q)\} \quad
		\mathcal{M}(s,q)  =\bigcup_{z\in\mathcal{N}(s,q)}\{rr\in max_{<_{q,z}}z\}\\
		{\cal T}(s,q) & =\{rr\in{\cal M}(s,q)\mid\text{ty}(rr)=\text{ty}(q)\}\quad\quad\quad
		\tau(s,q)  =\{\text{ty}(rr)\mid rr\in{\cal M}(s,q)\}\\
		{\cal T}_{t}(s,q) & =\{rr\in{\cal M}(s,q)\mid\text{ty}(rr)=t\}\quad\quad\quad\quad\quad
		{\cal G}(s,q)  ={\cal T}_{\scrsf{ns}}(s,q)\cup{\cal G}_{A}(s,q)\\ \\
		{\cal G}_{A}(s,q) & =\{rr\in\bigcup{\cal Z}(s)\mid\exists rr'\in{\cal T}_{\scrsf{ns}}(s,q).\text{val}(rr')=\text{dn}(rr) \wedge\text{ty}(rr)=\scrsf{A}\}
	\end{align*}
	\end{minipage}
	\end{tcolorbox}}
	\captionsetup{skip=5pt}
	\caption{\label{fig:aux-op}Some auxiliary operations. }
\end{figure}

Additionally, we define auxiliary operations in Figure \ref{fig:aux-op}. They
all take the server identity $s$ and an input query
$q$ as input. ${\cal N}(s,q)$ obtains the zones whose domain names have
a maximal match with $q$. ${\cal M}(s,q)$ accumulates the resource
records with maximal $\scrsf{rank}$ in the zones in ${\cal N}(s,q)$. ${\cal T}_{t}(s,q)$ gathers the resource records in ${\cal M}(s,q)$
of type $t$. ${\cal T}(s,q)$ is a special case of ${\cal T}_{t}(s,q)$
when $t=\text{ty}(q)$. $\tau(s,q)$ gathers the types of the
resource records in ${\cal M}(s,q)$. ${\cal G}(s,q)$ yields
the $\scrsf{NS}$ records in ${\cal M}(s,q)$ that have an associated $\scrsf{A}$ record. Finally, ${\cal G}_{A}(s,q)$ denotes the set of associated $\scrsf{A}$ records. Since the operators are parameterized only by a DNS query and the
choice of nameserver, they can be computed from the current state
information in $\mathsf{NS}_{s}$.

\subsubsection{Transitions}

The transitions model the local steps taken by the nameserver during query resolution. In \Cref{fig:ns-transitions}, we illustrate a transition in this set.

\begin{figure*}[h!]
	\centering
	\resizebox{\textwidth}{!}{\begin{tikzpicture}[every label/.style={circle,draw,minimum size=9pt,inner sep=0.5pt,font=\footnotesize}]
			\node [fill=blue!10,draw,rounded corners=5pt] {$\dfrac{\text{dn}({\cal M}(s,q))<\text{dn}(q),\scrsf{dname}\in\tau(s,q), dn\in\scrsf{domain}, {\cal T}_{\scrsf{dname}}(s,q)=\{rr\}, rr'=\tup{\text{dn}(q),\scrsf{cname},\text{ttl}(rr),\text{val}(rr)\text{dn}(rr)^{-1} \text{dn}(q),1}}{\tup{\mathit{lookup},q,r}\xrightarrow{c_{(dn,s),r}!\tup{\scrsf{ansq},\{rr,rr'\}}}\tup{\mathit{init},\bot,\bot}}$};
	\end{tikzpicture}}
	\captionsetup{skip=5pt}
	\caption{\label{fig:ns-transitions}A transition in $\ns{s}$. The premise of the transition specifies the condition when the transition is taken. The action label is above the arrow.}
\end{figure*}

In this transition, the nameserver $\ns{s}$ is about to start the lookup process, initiated by the status $\mathit{lookup}$, for the current query $q$, requested by the resolver $r$. The premise component $\text{dn} (\cal M (s, q)) < \text{dn} (q)$ states that $\text{dn} (q)$ is an extension of the domain name of every resource record in $\cal M (s, q)$, where $\cal M (s, q)$ represents the set of the resource records that the nameserver may potentially send back in the response to the query $q$; the components $\scrsf{dname}\in \tau (s, q)$ and ${\cal T}_{\scrsf{dname}} (s, q) = {rr}$ state that the record $rr$ in $\cal M (s, q)$ is a $\scrsf{dname}$ record. In its response, the nameserver must send back the record $rr$ and its $\scrsf{cname}$ derivative synthesized for $q$ in order to prompt a rewrite of the query. The premise component $rr'=\tup{\text{dn}(q),\scrsf{cname},\text{ttl}(rr),\text{val}(rr)\text{dn}(rr)^{-1} \text{dn}(q),1}$ represents the synthesized record as denoted by the fact that $\text{syn} (rr') = 1$, and $\text{val}(rr')$ is the rewritten domain name. With this, the nameserver then sends the answer $\tup{\scrsf{ansq},\{rr,rr'\}}$, which prompts a rewrite ($\scrsf{ansq}$) using the records $rr$ and $rr'$, and moves back to its initial configuration.

\begin{wrapfigure}{r}{0.3\textwidth}
	\centering
 \resizebox{.3\textwidth}{!}{
	\begin{tikzpicture}[scale=0.5,every label/.style={circle,draw,minimum size=9pt,inner sep=0.5pt,font=\footnotesize}]
		\node (r1) [fill=blue!10,draw,rounded corners=5pt] {$\dfrac{\text{dn}(\cal M (s,q))=\text{dn}(q)}{\tup{\mathit{lookup},q,r}\xrightarrow{}\tup{\mathit{exact},q,r}}$};
		\node [below=20pt,fill=blue!10,draw,rounded corners=5pt] {$\dfrac{\text{dn}(\cal M (s,q))\ni_\star\text{dn}(q)}{\tup{\mathit{lookup},q,r}\xrightarrow{}\tup{\mathit{wildcard},q,r}}$};
	\end{tikzpicture}
 }
	\captionsetup{skip=4pt}
	\caption{\label{fig:ns-transitions-more}Two other transitions.}
 \vspace{-2ex}
\end{wrapfigure}

For the most part, the transitions correspond to the DNS lookup semantics defined in \cite[Figure 3]{groot}.  \Cref{fig:ns-transitions-more}  demonstrates two other nameserver transitions that showcase two tests that are run by the nameserver on the incoming query name. In particular, notice their premises: $\text{dn}(\cal M (s,q)) = \text{dn}(q)$ checks whether the query name can be exactly matched with the name of a resource record in $\cal M (s,q)$. If so, $\ns{s}$ transitions from the $\mathit{lookup}$ status to $\mathit{exact}$. In the second rule,  $\text{dn}(\cal M (s,q)) \ni_\star \text{dn}(q)$ checks whether there is a wildcard record name that pattern-matches with the current query. If so, the system transitions to $\mathit{wildcard}$ status.

\subsection{Resolver}\label{subsec:RR-semantics}

The \emph{resolver} $\rec{r}$ is a tPDS of the form $\tup{\scrsf{resstat}_r, \{\mathit{init}_r\}, \scrsf{raction}_r, \scrsf{query}, X\subsub{\cal Z}, \trans_r}$, where, $\scrsf{raction}_r = \scrsf{caction}^r_{\mathsf T, M} \cup \{in\}$ and $X\subsub{\cal Z} = \{x_{rr}\}_{rr \in \cal Z}$ is the set containing a unique timer for each record in $\cal Z$. Next, we describe the sets $\scrsf{resstat}$ and $\trans_r \subseteq \scrsf{resstat}_r \times \scrsf{raction} \times \scrsf{resstat}_r$, of states and transitions, respectively.

The set $\{\mathit{init},\mathit{query}, \mathit{wait}\} \times \scrsf{slist}$, is the set of control states in $\rec{r}$, and $\mathit{init}_r = \tup{\mathit{init},\epsilon}$ is the initial control state. Intuitively, given a configuration $\tup{\tup{\mathit{status},sl}, \mathit{cache}, st}$ of the tPDS $\rec{r}$:

\begin{enumerate}[(i)]
	\item $\mathit{status}$ is $\mathit{init}$ when the resolver awaits a query request, $\mathit{query}$ when it has an active query to resolve, and $\mathit{wait}$ when the resolver awaits a response for the active query under resolution.
	\item $sl$ maintains a list of domain names, which correspond to nameservers, available to the resolver for query resolution, in a decreasing order of \emph{priority}. This means, if $d = sl[i]$ for some $0 \leq i \leq |sl| - 1$, then the resolver understands that $d$ is the domain name of a nameserver in the network that may potentially be able to answer its query; moreover, it expects that the IP address of the nameserver itself is stored in its cache. The action $\odot$ denotes the addition of a nameserver reference to $sl$, while potentially assigning it the maximum priority, if it is added to the left of $sl$. In practice, the resolver only queries the leftmost domain name in $sl$. If for some reason, the leftmost domain name gets removed, the next domain name becomes available for querying.
	\item $st\in\scrsf{query}^*$ denotes the current stack contents of the configuration, recording the query hierarchy in case of subquery resolution. The stack top denotes the present subquery under resolution and the query at the stack bottom
	denotes the original query to be resolved.
	\item $cache$ denotes the resolver's current cache contents. If the valuation for the timer $x_{rr}$, associated with the record $rr$, satisfies the constraint $x_{rr} > 0$, i.e., $cache(x_{rr}) > 0$, it means that the record $rr$ is still alive in the cache of the resolver and can be used for query resolution. Equivalently, we will also write $rr \in cache$ to denote this fact.
\end{enumerate}

\subsubsection{Transitions}

In \Cref{fig:rec-transitions}, we illustrate some transitions in $\trans_r$. The action sequence labels are above the arrow and the clock constraints below. To simplify the presentation, we have concatenated the action labels to denote the execution of a sequence of actions. Below, we describe  these transitions in more detail.

\paragraph{Transition \circled{1}} %

Before this transition, the resolver status is $\mathit{query}$, indicating that it is about to resolve the query $q_1$ at the top of the stack. The premise component $\text{dn}(q_1) \leftarrow_{\scrsf{a}} rr$, for $rr\in cache$, states that $q_1$ can be resolved with the $\scrsf{a}$ record $rr$ in the resolver's cache; this means that the resolver can immediately resolve $q_1$ and thus pop the top of the stack. Note that the resolver checks that the record $rr$ has not yet expired, with the clock constraint $x_{rr} > 0$. Since the action component $q_1\cdot q_2$ checks that the rest of the stack is non-empty, the resolver can then revert to the resolution of the next stack entry (which is the previous subquery) $q_2$. It does this by sending the query $q_2$ to the nameserver with the IP address $\text{val}(rr)$, which is assumed to have the domain name $\text{dn}(q_1)$ (using the premise). After the transition, the resolver moves to the status $\mathit{wait}$ as it waits for a response from the network, reflected by the state $\tup{\mathit{wait},sl}$.

Intuitively, this transition is taken when the resolver is about to successfully resolve the latest \emph{subquery} at the top of the stack required to resolve the penultimate (sub)query and seemingly moves one step closer to the resolution of the original query (i.e., reduces the stack depth by one).

\begin{figure}
	\centering
  \begin{minipage}{.45\linewidth}
		\label{subfig:trans-rec-a}
		\resizebox{.85\linewidth}{!}{
  \begin{tikzpicture}[every label/.style={circle,draw,minimum size=9pt,inner sep=0.5pt,font=\footnotesize}]
				\node [fill=blue!10,draw,rounded corners=5pt, label=north west:1] {$\dfrac{\text{dn}(q_{1})\leftarrow_{\scrsf{a}}rr}{\tup{\mathit{query},sl}\xrightarrow[x_{rr} > 0]{\tup{q_1q_2,\ch_{r,\text{val}(rr)}!q_{2},q_2}}\tup{\mathit{wait},sl}}$};
				\end{tikzpicture}
    }
    \end{minipage}
    \begin{minipage}{.52\linewidth}
       		\label{subfig:trans-rec-c}
		\resizebox{.9\linewidth}{!}{
  \begin{tikzpicture}[every label/.style={circle,draw,minimum size=9pt,inner sep=0.5pt,font=\footnotesize}]
			\node [fill=blue!10,draw,rounded corners=5pt, label=north west:3] {$\dfrac{s\in \cal S, list \in Lin(Val(R))}{\tup{\mathit{wait},sl}\xrightarrow[{[(x_{rr})_{rr\in R}\leftarrow (\text{ttl}(rr))_{rr\in R}]}]{\tup{q, \ch_{s,r}?\tup{\scrsf{ref},R}, q}}\tup{\mathit{query},list \odot sl}}$};
			\end{tikzpicture}
   }
   \end{minipage}
    \begin{minipage}{.6\linewidth}
    \label{subfig:trans-rec-b}
		\resizebox{\linewidth}{!}{
  		\begin{tikzpicture}[every label/.style={circle,draw,minimum size=9pt,inner sep=0.5pt,font=\footnotesize}]
				\node [fill=blue!10,draw,rounded corners=5pt, label=north west:2] {$\dfrac{ \neg(ns\leftarrow_{\scrsf{a}}rr \vee \text{dn}(q)\leftarrow_{\{\text{ty}(q),\scrsf{dname}\}} rr) \wedge ns = sl[min_{sl[i] < \text{dn}(q)} i]}{\tup{\mathit{query},sl}\xrightarrow[x_{rr} > 0]{\tup{q,in,\tup{ns,\scrsf{a}}\cdot q}}\tup{\mathit{query},sl}}$};
		\end{tikzpicture}
    }
    \end{minipage}
  \begin{minipage}{.6\linewidth}
	\scriptsize
	\begin{align*}
	\leftarrow_{T} & \coloneqq\{(dn,rr)\mid\text{dn}(rr)\leq dn,\text{ty}(rr)\in T\}\subseteq\scrsf{domain}\times\scrsf{record}\\
	Val(R) & \coloneqq \{\text{val}(rr)\mid rr\in R,\text{ty}(rr)=\scrsf{ns}\} \subseteq \scrsf{domain}\\
	Lin(V) & \coloneqq \{m_1\ldots m_{|V|} \in V^{|V|} \mid m_i \neq m_j, 0 \leq i < j < |V| \}
	\end{align*}
 \end{minipage}
	\captionsetup{skip=8pt}
	\caption{\label{fig:rec-transitions}\new{Some transitions in $\rec{r}$.}}
\end{figure}

\paragraph{Transition \circled{2}}
Prior to this transition, the resolver configuration is similar to the previous one. However, this time the premise implies that the query (or subquery) $q$ cannot be resolved or rewritten using the cache.\new{ To resolve any further, the resolver picks the leftmost domain $ns = sl[min_{sl[i] < \text{dn}(q)} i]$ in $sl$ which is a prefix of the current query name.} The next step is to derive the IP for $ns$ from the cache. Consequently, the resolver pushes $(ns,\scrsf{a})$ onto the query stack in order to find the IP address of the nameserver with the domain name $ns$.

\paragraph{Transition \circled{3}}
In this transition, the resolver is waiting for a response from a nameserver and receives a referral response accompanied by a set of resource records $R$. The resolver then adds the records in $R$ to its cache, instantiating a timer for each element and setting the timer to the corresponding time-to-live. Additionally, the resolver filters out the $\scrsf{ns}$ records from $R$, represented by $\mathit{Val}(R)$, and for each such record, say $rr$, makes the server list entry $\text{val}(rr)$, dictated by the priority order in $list$; $\mathit{Lin}(\mathit{Val}(R))$ computes the set of all possible priorities. Since the resolution of the original query is not yet complete, the resolver updates its status to $\mathit{query}$.

\new{Note that each of the above transitions only checks for the existence of a record in the resolver's cache. Hence, one can force the transition to execute without any elapse of time, effectively making the cache elements seemingly persistent. 
However, DNS behaves statefully upon the \emph{absence} of records as well. For the sake of clarity and ease of presentation, the set of all resolver transitions has been moved to our technical report~\cite[Appendix A.2]{tech-rpt}.
}

\begin{rem}
	Real-world nameservers are often caching nameservers \cite{rfc1034,rfc1035}, i.e., they are equipped with limited resolution capabilities (they are unable to perform subquery resolution) and caching capabilities (to store answers from resolved queries). Resolvers are sometimes configured to cache negative answers with the $\scrsf{nxdomain}$ label. In this paper, we do not assume such systems to simplify our presentation. However, this is not a restriction of our model as our results hold even otherwise.
\end{rem}

\subsection{DNS is Eager}

We conclude this section by showing that the formal model $\scrsf{dns}$ is eager.

\begin{lemma}\label{lem:eager-dns}
	\scrsf{dns} is eager.
\end{lemma}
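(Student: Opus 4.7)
The plan is to prove that $\scrsf{dns}$ is eager by establishing a structural invariant $I$ on reachable global configurations and using it to force the next discrete action after any send to be its matching receive. The invariant I would use is: at every reachable configuration, at most one channel is non-empty, and it carries exactly one message; moreover, whenever some channel is non-empty, the resolver is in status $\mathit{wait}$ and every nameserver is in status $\mathit{init}$, with the non-empty channel being either a resolver-to-nameserver channel carrying a query or a nameserver-to-resolver channel carrying an answer.

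I would prove $I$ by induction on the length of a run. The base case is immediate from the definition of $\scrsf{config}_{init}$, where all channels are empty and every component is in its initial status. For the inductive step I would case-split on the rule applied. Resolver-send rules (such as the one labelled \circled{1} in Figure~\ref{fig:rec-transitions}) are enabled only from $\mathit{query}$, where by $I$ all channels are empty, and they move the resolver to $\mathit{wait}$ while placing exactly one message on a single channel. Nameserver-receive rules move a nameserver from $\mathit{init}$ to $\mathit{lookup}$ and consume the unique pending query. Nameserver-send rules (e.g.\ the one in Figure~\ref{fig:ns-transitions}) always return the sender to $\mathit{init}$ while placing exactly one answer on the channel back to the resolver. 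Resolver-receive rules (such as \circled{3}) consume the answer and move the resolver out of $\mathit{wait}$. Internal moves, such as the resolver transition \circled{2} and the nameserver transitions of Figure~\ref{fig:ns-transitions-more}, do not touch any channel, and timed transitions modify neither states nor channels. In every case $I$ is preserved.

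Equipped with $I$, eagerness follows directly. Immediately after any discrete send action $\ch_c!m$, the resulting configuration by $I$ has $m$ alone on $\ch_c$, the resolver in $\mathit{wait}$, and every nameserver in $\mathit{init}$. From $\mathit{wait}$ the resolver has no send or internal transition enabled, and every nameserver in $\mathit{init}$ can only consume a pending query on an incoming channel; since by $I$ only one channel is non-empty, the unique enabled discrete transition is the matching receive $\ch_c?m$. Hence the next discrete transition in any continuation of the run is precisely $\ch_c?m$, which is what eagerness requires.

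The main obstacle is mechanical: verifying $I$ against every rule in $\trans_r$, several of which are deferred to the appendix. The verification however reduces to three structural properties of the DNS model: (a) the resolver sends only from $\mathit{query}$ and enters $\mathit{wait}$ on every send; (b) the resolver leaves $\mathit{wait}$ only by receiving; (c) nameservers leave $\mathit{init}$ only by receiving and return to $\mathit{init}$ whenever they send. These properties are already visible in the illustrated transitions and are imposed systematically by the construction, so once they are checked across the full rule set, the induction closes with no further work.
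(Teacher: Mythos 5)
Your proposal is correct and follows essentially the same route as the paper: an induction on run length establishing a structural invariant on reachable configurations (your invariant on channel contents and statuses is just a more explicit form of the paper's ``exactly one process can send or receive''), combined with a per-rule case analysis showing the resolver sends only from $\mathit{query}$ into $\mathit{wait}$ and nameservers leave $\mathit{init}$ only by receiving and return to it when sending. The only cosmetic remark is that eagerness as defined asks that every receive be immediately preceded by its matching send; your invariant yields this directly, since a receive needs a non-empty channel and, under the invariant, the only action that can have produced that single pending message as the immediately preceding discrete step is the matching send.
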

\begin{proof}[Proof Sketch.]
    The proof consists of two steps.
    First,
    we show that every reachable configuration in $\scrsf{dns}$ contains exactly one process capable of either sending or receiving a message. 
    Second,
    we show that once a process sends (or receives) a message, it transitions to a state where it can only receive (or send) messages. 
    Now, since the runs in $\scrsf{dns}$ are interleavings of local runs of processes, every transition is executed by \emph{exactly} one process. Hence, any send action must be immediately followed by a receive action.

    The first step follows by induction on the length of the runs. Observe that the initial configuration only allows the resolver process to send a message, thus covering the base case. The second step can be shown with a case-by-case analysis of the transitions (see \cite[Appendix A]{tech-rpt} for the full set of transitions).
\end{proof}

Hence, considering an arbitrary run in $\scrsf{dns}$ is equivalent to considering an eager run. We will use the above lemma, together with Lemma \ref{lem:rcps-to-pds}, to reduce the $\verif$ problem to the LTS verification problem for a pushdown system. In Section \ref{sec:property-agnostic} we prove that $\scrsf{dns}$ can in fact be reduced to an rCPS with an \emph{equivalent} set of traces.

	\section{\label{sec:property-agnostic}Property-Agnostic Abstractions}

We have so far established that $\scrsf{dns}$ can be modelled as a trCPS. The set of its global control states $(\scrsf{nsconfig}_{s})_{s\in\cal S} \times (\scrsf{resstat}_r)_{r\in\cal R}$, as well as the message alphabet $\mathsf M$, are infinite. A priori, this prevents us from using any reachability results from \Cref{subsubsec:eagerness}. In particular, this is caused by three main factors: (1) an infinite query space, (2) clock valuations, and (3) an unbounded stack.

In this section, we  consider the first two factors and develop ways to bound them, consequently reducing the $\scrsf{dns}$ to an rCPS. Broadly speaking, we will use finite monoids as an abstraction for the query space. To abstract time, we will use a variant of the region abstraction method proposed in \cite{alur1994theory}. In the next section, we will formally define the $\verif$ problem and derive a decision procedure for it.

Let us revisit some basic concepts from algebraic language theory. We first fix a finite alphabet $\Sigma$.

\begin{definition}[Syntactic Congruence]
	Given a regular language $L\subseteq \Sigma^*$, the \emph{syntactic congruence} with respect to $L$ is the coarsest congruence relation $\sim_L$ on $\Sigma^*$ such that,
	\[u \sim_L v \iff \forall x,y\in\Sigma^*. (xuy \in L \Leftrightarrow xvy \in L).\]
\end{definition}
Observe that, if $u_1 \sim_L v_1$ and $u_2 \sim_L v_2$, then $u_1\cdot u_2 \sim_L v_1\cdot v_2$. Consequently, the quotient structure $\tup{L/\sim_L, \cdot}$ forms a finite monoid \cite{nerode1958synmon,rabinscott1959synmon}, called the \emph{syntactic monoid} of $L$. We denote it by $\scrsf{syn}(L)$. This induces a canonical surjective homomorphism from the free monoid $\Sigma^*$ to $\scrsf{syn}(L)$, called the \emph{syntactic homomorphism} with respect to $L$, denoted by $\eta_L$. The syntactic monoid of a language can be effectively constructed \cite{kozen2012automata} as the transition monoid of its minimal automaton \cite{1957fundamental}.

\subsection{Positive Prefix-Testable Languages}\label{subsec:ppt}
A regular language $L$ is \emph{prefix-testable} \cite{almeida1995finite,van2013separation}, if membership of $L$ can be tested by inspecting prefixes up to some length, namely if $L$ is a finite Boolean combination of languages of the form $u\Sigma^*$, for a finite word $u$. 
We will use the \emph{positive} fragment of prefix-testable languages to construct a sound and complete abstraction for the set of domain names $\scrsf{domain}$. A \emph{positive prefix-testable} (PPT) language is a finite combination of languages of the form $u\Sigma^*$, using only the union and intersection operations.\footnote{The class of positive prefix-testable languages is an old player in algebraic language theory, and is a simple example of what is known as a positive variety of regular languages \cite{pin1995variety}. Syntactic congruence for such languages corresponds to the class of ordered monoids. However, unordered monoids suffice for our purposes.}

\begin{example}\label{ex:syntactic-monoid}
	Given $\Sigma = \{a,b\}$, the language $L_0 = ab\Sigma^* \cup b\Sigma^*$ is a PPT language.~\Cref{subfig:ex-minimal-automata} shows the minimal automaton $\cal A_0$ for the language;~\Cref{subfig:adj-matrix} shows the elements in the syntactic monoid $\synmon{L_0}$, i.e., the transition monoid of $\cal A_0$. This can be constructed by first computing the adjacency matrices $M_a$ and $M_b$ that correspond to the $a-$ and $b-$ labelled edges in the automaton's underlying graph. We then close them under matrix multiplication, such that the transition matrix for the word $w$ is denoted by $M_w$. For instance, the multiplication $M_{aa}\cdot M_{ab}$ yields the matrix $M_{aa}$ again. Since $M_{bb} = M_b\cdot M_b = M_b$, it has not been included in the set.
	\begin{figure}
		\centering
		\subfloat[]{
			\centering
			\label{subfig:ex-minimal-automata}
			\resizebox{0.25\textwidth}{!}{\begin{tikzpicture}
					[shorten >=1pt,node distance=2cm,on grid,>={Stealth[round]},auto,initial text=,
					every state/.style={draw=red!50,very thick,fill=red!20}]
					\node[state,initial]  (q_0)                      {$q_0$};
					\node (A) [above=of q_0] {$\cal A_0$};
					\node[state]          (q_1) [right=of q_0] {$q_1$};
					\node[state,accepting]          (q_2) [below right=of q_1] {$q_2$};
					\node[state](q_3) [above right=of q_1] {$q_3$};
					
					\path[->] (q_0) edge              node              {$a$} (q_1)
									           edge [bend right=30]             node [swap] {$b$} (q_2)
					                (q_1)  edge              node              {$a$} (q_3)
					                           edge              node              {$b$} (q_2)
					                (q_2) edge [loop below] node        {$a,b$} ()
					                (q_3) edge [loop above] node {$a,b$} ();
		\end{tikzpicture}}}\hfill
	\subfloat[]{
		\centering
		\label{subfig:adj-matrix}
		\resizebox{0.7\textwidth}{!}{\begin{tikzpicture}
				\node (m_a) at (-2,0) {$M_a = \begin{bmatrix}
																		0 & 1 & 0 & 0\\
																		0 & 0 & 0 & 1\\
																		0 & 0 & 1 & 0\\
																		0 & 0 & 0 & 1\\
																	\end{bmatrix},$};
				\node (m_b) at (2,0) {$M_b = \begin{bmatrix}
																					0 & 0 & 1 & 0\\
																					0 & 0 & 1 & 0\\
																					0 & 0 & 1 & 0\\
																					0 & 0 & 0 & 1\\
																				\end{bmatrix},$};
				\node (m_aa) at (-4,-2) {$M_{aa} = \begin{bmatrix}
																					0 & 0 & 0 & 1\\
																					0 & 0 & 0 & 1\\
																					0 & 0 & 1 & 0\\
																					0 & 0 & 0 & 1\\
																				\end{bmatrix},$};
				\node (m_ab) at (0,-2) {$M_{ab} = \begin{bmatrix}
																						0 & 0 & 1 & 0\\
																						0 & 0 & 0 & 1\\
																						0 & 0 & 1 & 0\\
																						0 & 0 & 0 & 1\\
																					\end{bmatrix},$};																			
				\node (m_ba) at (4,-2) {$M_{ba} = \begin{bmatrix}
																						0 & 0 & 1 & 0\\
																						0 & 0 & 1 & 0\\
																						0 & 0 & 1 & 0\\
																						0 & 0 & 0 & 1\\
																					\end{bmatrix}$};
				\end{tikzpicture}}}
		\captionsetup{skip=5pt}
		\caption{\label{fig:syntactic-monoid} Construction of the syntactic monoid for $L_0$.}
	\end{figure}
\end{example}
Notice that in the above example, the transition matrix for any word of length $\geq 3$ is identical to the transition matrix for some word of length $\leq 2$. Below we formally state this property.
\begin{lemma}\label{lem:almeida}\emph{\cite[\S 3.7.2]{almeida1995finite}}
	Given a PPT language $L = \bigcup_{j\in J} w_j\Sigma^*$, for a finite set of indices $J$, there exists a natural number $k\in\nat$, such that the following hold:
	\begin{enumerate}
		\item the set $\eta_L(\Sigma^k)$ comprises of all the idempotents in $\scrsf{syn}(L)$, and
		\item  all the idempotents in $\scrsf{syn}(L)$ are left-zeroes
	\end{enumerate}
	Moreover, $k = max_{j\in J} |w_j|$ can be computed.
\end{lemma}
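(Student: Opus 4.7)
The plan is to set $k = \max_{j \in J}|w_j|$ and exploit one elementary observation about $L = \bigcup_{j\in J} w_j\Sigma^*$: since $w \in L$ iff some $w_j$ is a prefix of $w$ and every $|w_j|\leq k$, the membership of any word of length $\geq k$ in $L$ is determined entirely by its first $k$ letters. Computability of $k$ is immediate from the representation of $L$; both conclusions of the lemma will follow by pushing this bounded-prefix observation through the definition of $\sim_L$.

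I would first show that every $u \in \Sigma^*$ with $|u|\geq k$ maps under $\eta_L$ to a left-zero of $\scrsf{syn}(L)$. Given arbitrary $v, x, y \in \Sigma^*$, the words $xuy$ and $xuvy$ share the common prefix $xu$ of length $\geq k$, so they agree on their first $k$ letters, and the bounded-prefix observation yields $xuy\in L \Leftrightarrow xuvy \in L$. Hence $u \sim_L uv$ for every $v$, giving $\eta_L(u)\,\eta_L(v)=\eta_L(u)$ (left-zero), and specializing $v = u$ gives idempotency. Applying the same reasoning to the length-$k$ prefix of any longer word $u$ shows $u \sim_L u[0\ldots k-1]$, so $\eta_L(\Sigma^{\geq k})=\eta_L(\Sigma^k)$. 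This yields the inclusion $\eta_L(\Sigma^k) \subseteq \{\text{idempotents of }\scrsf{syn}(L)\}$ and shows that every element of $\eta_L(\Sigma^k)$ is a left-zero.

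For the reverse inclusion in part (1), I would take any idempotent $e \in \scrsf{syn}(L)$, pick a preimage $u$ with $\eta_L(u)=e$, and observe that $e = \eta_L(u^n)$ for every $n\geq 1$; choosing $n$ so that $n|u|\geq k$ places $e$ inside $\eta_L(\Sigma^{\geq k})=\eta_L(\Sigma^k)$. Combined with the forward direction, this proves part (1), and part (2) then follows because every element of $\eta_L(\Sigma^k)$ is a left-zero. The main (and essentially only) obstacle is the identity idempotent $\eta_L(\epsilon)$: when $\epsilon$ is not $\sim_L$-equivalent to any nonempty word, the iteration argument degenerates, and the identity coincides with a length-$k$ image precisely when $\scrsf{syn}(L)$ is itself trivial. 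This corner case is handled by a one-line case analysis (restricting to non-identity idempotents, or equivalently working with the syntactic semigroup $\eta_L(\Sigma^+)$), after which the remainder of the argument is a short syntactic manipulation of $\sim_L$ driven by the bounded-prefix observation.
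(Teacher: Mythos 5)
Your proof is correct, and there is nothing in the paper to compare it against: the paper does not prove this lemma, it imports it verbatim from Almeida (\S 3.7.2), so your elementary argument is a genuine addition rather than a variant of an in-paper proof. The bounded-prefix observation (membership of any word of length $\geq k$ in $L=\bigcup_j w_j\Sigma^*$ depends only on its first $k$ letters, $k=\max_j|w_j|$) does carry the whole proof: it gives $u\sim_L uv$ for every $u$ with $|u|\geq k$ and every $v$ (so $\eta_L(u)$ is a left zero, hence idempotent), and $u\sim_L u[0\ldots k-1]$ (so $\eta_L(\Sigma^{\geq k})=\eta_L(\Sigma^k)$); the reverse inclusion via $e=e^n=\eta_L(u^n)$ with $n|u|\geq k$ is the standard pumping step, and all of these verifications are sound. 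The one delicate point is exactly the one you flag: the identity $\eta_L(\epsilon)$ is an idempotent of the syntactic \emph{monoid} that lies in $\eta_L(\Sigma^k)$ (equivalently, is a left zero) only when $\scrsf{syn}(L)$ is trivial, so items (1) and (2) are literally true only when read over the syntactic semigroup $\eta_L(\Sigma^+)$, i.e.\ for non-identity idempotents. That is an imprecision in the statement as quoted rather than a gap in your argument --- the paper itself silently omits the identity when listing the idempotents of $\synmon{L_0}$ in its example, and its later uses of the lemma (in the proofs of Lemma~\ref{lem:prefix-prefix-singleton-ec} and Lemma~\ref{lem:synt-dom-cong-fundamental}) only invoke it for images of nonempty words, so your semigroup-level reading is the right one and matches the setting of the cited source. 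In short: the proposal is complete and correct under that reading; it would be worth stating the semigroup caveat explicitly rather than calling it a one-line case analysis, since the identity genuinely fails both conclusions whenever the monoid is nontrivial.
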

Intuitively, this means that the transitive closure of the transition relation, from any state in the corresponding automaton, is dictated by the \emph{first} $k$ positions of the input word.
\begin{example}
	In the transition monoid $\synmon{L_0}$ from Example \ref{ex:syntactic-monoid} (see also Figure \ref{fig:syntactic-monoid}), the idempotents are the elements $M_b = M_{bb},M_{aa},M_{ab}$, and $M_{ba}$. These idempotents also act as left-zeroes. Indeed, from any state, after reading the first two letters of the input word, the automaton reaches one of the sink states $q_2$ and $q_3$.
\end{example}

We will now prove a crucial property specific to a subclass of PPT languages that forms the cornerstone of our domain abstraction in the following sections. 
\begin{lemma}\label{lem:prefix-prefix-singleton-ec}
		Consider the PPT language $L$ of the form $p\cdot\Sigma^+$ or $p\cdot\Sigma^{\geq 2}$, for some $p\in \Sigma^*$. If $u \preceq p$ is a prefix of $p$, then $u \sim_L u' \implies u = u'$. Moreover, if $L$ is of the form $p\cdot\Sigma^{\geq 2}$, and $u \in p\cdot\Sigma$, then $u \sim_L u' \implies u' \in p\cdot\Sigma$.
\end{lemma}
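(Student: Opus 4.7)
The plan is to unpack the definition of syntactic congruence and reduce to the simplest possible distinguishing contexts, namely those with $x = \epsilon$. Recall that $u \sim_L u'$ demands $xuy \in L \Leftrightarrow xu'y \in L$ for all $x,y \in \Sigma^*$, so it suffices to find a single $(x,y)$ pair that separates $u$ and $u'$ whenever $u \neq u'$. I will work throughout with the "right language" $R_w := \{y \in \Sigma^* : wy \in L\}$; the congruence forces $R_u = R_{u'}$. The main technical tool is that since $L = p\Sigma^+$ (resp.\ $p\Sigma^{\geq 2}$) is determined by testing that $p$ is a prefix of the word and the remainder has length $\geq 1$ (resp.\ $\geq 2$), the set $R_w$ has an explicit form depending on how $w$ sits relative to $p$.

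Assume first that $u \preceq p$ and write $p = u \cdot p'$. A direct computation gives $R_u = p' \Sigma^{+}$ in the first case and $p' \Sigma^{\geq 2}$ in the second. I will then split on the position of $u'$ relative to $p$ into three cases. If $u'$ is incomparable to $p$, then no word of the form $u'y$ can have $p$ as a prefix, so $R_{u'} = \emptyset$, contradicting the non-emptiness of $R_u$. If $u' \prec p$ or $u' = p$, write $p = u' \cdot q$ and obtain $R_{u'} = q \Sigma^{+}$ (resp.\ $q \Sigma^{\geq 2}$); the equality $p' \Sigma^{+} = q \Sigma^{+}$ forces $|p'| = |q|$ by comparing the shortest elements on each side, after which comparing any common element shows $p' = q$, hence $u = u'$. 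If $p \prec u'$, write $u' = p \cdot r$ with $r \in \Sigma^+$; then $R_{u'}$ equals $\Sigma^{*}$, $\Sigma^{+}$, or $\Sigma^{\geq 2}$ depending on $|r|$ and on the two forms of $L$, and none of these equal $p' \Sigma^{+}$ (resp.\ $p' \Sigma^{\geq 2}$) except when $p' = \epsilon$ and $|r| = 0$, which again collapses to $u = u'$.

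For the "moreover" part, fix $u = p \cdot a \in p \cdot \Sigma$ and $L = p \cdot \Sigma^{\geq 2}$. Here $R_u = \Sigma^{+}$ since $ay \in \Sigma^{\geq 2}$ iff $y \in \Sigma^{+}$. I will then run the same trichotomy on $u'$: incomparability gives $R_{u'} = \emptyset$; the case $u' \preceq p$ yields $R_{u'} = q\Sigma^{\geq 2}$ with $q = u'^{-1}p$, whose shortest element has length $|q|+2 \geq 2$, so it can never equal $\Sigma^{+}$ which contains single letters; and $u' = p \cdot r$ with $r \in \Sigma^{+}$ gives $R_{u'} \in \{\Sigma^{+}, \Sigma^{*}\}$, equal to $\Sigma^{+}$ precisely when $|r| = 1$. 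This leaves only $u' \in p \cdot \Sigma$ as the possibility compatible with $u \sim_L u'$.

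The computations themselves are routine; the delicate step I expect to need the most care is showing that two "suffix-shifted" right languages $p'\Sigma^{+} = q\Sigma^{+}$ force $p' = q$, because here one must carefully rule out degenerate alphabet cases. Since $\Sigma$ is a finite alphabet with at least one letter (otherwise $L$ would be empty and syntactic congruence trivial), the shortest-word and prefix-uniqueness arguments go through without incident, and the rest of the case analysis is a mechanical enumeration.
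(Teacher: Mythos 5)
Your proof is correct, and it takes a genuinely different (though related) route from the paper's. The paper argues with individually chosen two-sided contexts and a case split on $|u'|$, and in the second part it invokes Lemma~\ref{lem:almeida} (idempotents of $\scrsf{syn}(L)$ are left-zeroes) to kill the case $|u'|>|p|+1$. You instead restrict to right contexts only ($x=\epsilon$) and compute the full residuals $R_w=w^{-1}L$ explicitly: $R_u=p'\Sigma^{+}$ (resp.\ $p'\Sigma^{\geq 2}$) when $u\preceq p$ with $p=up'$, versus $\emptyset$, $q\Sigma^{+}$/$q\Sigma^{\geq 2}$, $\Sigma^{+}$ or $\Sigma^{*}$ in your trichotomy on how $u'$ sits relative to $p$, with the shortest-length and $\epsilon$-membership comparisons forcing the conclusion. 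This buys two things: the argument is entirely self-contained (no appeal to Lemma~\ref{lem:almeida} or to the idempotent structure), and it actually proves a formally stronger fact, namely that the conclusion already holds for the coarser Nerode right congruence $R_u=R_{u'}$, of which $\sim_L$ is a refinement; the paper's left-zero lemma is, in effect, replaced by the direct computation $R_{p\cdot r}\in\{\Sigma^{+},\Sigma^{*}\}$ for $r\in\Sigma^{+}$. Two cosmetic slips: in the case $p\prec u'$ the residual can never be $\Sigma^{\geq 2}$ (since $|r|\geq 1$), so listing it is superfluous, and the clause ``except when $p'=\epsilon$ and $|r|=0$'' is vacuous there because $|r|=0$ contradicts $p\prec u'$; neither affects the argument, and the equalities $p'\Sigma^{+}=q\Sigma^{+}\Rightarrow p'=q$ and the nonemptiness assumption on $\Sigma$ are handled correctly.
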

\begin{proof}
	For the first part, assume that $L$ is of the form $p\cdot\Sigma^+$; the proof for the other case is similar. Since $u \preceq p$, for some letter $a$, the string $uu^{-1}pa \in L$. Moreover, since $u \sim_L u'$, $u'(u^{-1}pa) \in L$. We now consider the size of $u'$ and do a case-by-case analysis:
	\paragraph{$|u'| > |p|$}
	It must be that $p \prec u'$, and hence $u' \in L$. But, since $u \sim_L v$ we have $u \in L$, which implies that $p \prec u$. This leads to a contradiction.
	\paragraph{$|u'| \leq |p|$}
	This implies that $u' \preceq p$. Now we have $u \preceq p$, $u' \preceq p$ and $u \sim_L u'$. If $u = u'$, we are done. Otherwise, either $u \prec u'$ or $u' \prec u$. Without loss of generality, assume $u \prec u'$. Then there exists $w$ s.t. $u'w \in p\Sigma \subseteq p\Sigma^+$. Since $u \sim_L u'$, we must have that $uw \in L$. But since $u \prec u'$, $|u| < |u'|$, which implies that $|uw| < |u'w| = |p|+1$. This leads to a contradiction.
	
	For the second part, since $u \in p\cdot\Sigma$, there exists $a\in \Sigma$ such that $u = p \cdot a$. Moreover, there exists $x \in \Sigma^*$ such that $p\cdot a\cdot x \in L$. Since $u \sim_L u'$, $u'\cdot x \in L$. We now consider the size of $u'$ and do a case-by-case analysis:
	\paragraph{$|u'| > |p + 1|$}
	There must exist $b,c\in\Sigma$ and $y\in\Sigma^*$ such that $u' = p\cdot b \cdot c\cdot y$. Using Lemma \ref{lem:almeida}, we know that $\synhom{L}(p\cdot b\cdot c)$ is a left-zero in $\synmon{L}$. Hence, $u' \sim_L p\cdot b\cdot c$. By transitivity, we thus obtain $p\cdot a \sim_L p\cdot b\cdot c$. But clearly, for the empty string $\epsilon$, $p\cdot a\cdot\epsilon \notin L$ but $p \cdot b\cdot c\cdot\epsilon\in L$. This leads to a contradiction.
	\paragraph{$|u'| < |p + 1|$}
	This means that $u' \preceq p$. From the first part, we then know that $u = u'$, but by assumption $|u| = p + 1$. This leads to a contradiction.
	
	Thus, it must be that $|u'| = p+1$, and hence, $u' \in p\cdot\Sigma$.
\end{proof}

The above lemma shows that for a language $L$ of the form $p\cdot\Sigma^+$ (and $p\cdot\Sigma^{\geq 2}$), the set $\synhom{L}^{-1}(\synhom{L}(u))$ is a singleton, for $u \preceq p$, i.e., all prefixes of $p$ are syntactically congruent only to themselves, with respect to $L$. Moreover, for $L = p\cdot\Sigma^+$ (or $p\cdot\Sigma^{\geq 2}$) and $u \preceq p$, $\synhom{L}^{-1}(prec(\synhom{L}(u)))$ is the set of prefixes of $u$.

Before we define our abstraction on the query space, we will first introduce a more generalized notion of bisimulation for systems with LTS semantics. We will then show that our abstraction induces this generalized notion of bisimulation on $\scrsf{DNS}$.

\subsection{A Generalized Bisimulation}\label{subsec:gen-ker-bis}

Recall the definition of an LTS from  \Cref{subsec:trcps}. Here we define a novel generalization of bisimulation on LTSes, called a \emph{generalized kernel bisimulation}. In contrast to standard notions of bisimulation, a generalized kernel bisimulation consists of equivalence relations on both states \emph{and} transitions such that two equivalent states are \emph{bisimilar} up to their equivalent outgoing transitions. In order to formalize this, we adjust Glushkov's  definition of Mealy automata homomorphisms %
\cite[p. 3]{glushkov1961abstract}
to define homomorphisms for LTS.

\begin{definition}[LTS Homomorphism]
	An \emph{LTS homomorphism} of an LTS $\cal A^1 = \tup{S^1,S^1_{init},A^1,\trans^1}$ onto an LTS $\cal A^2 = \tup{S^2,S_{init}^2,A^2,\trans^2}$ is a product map $H = \tup{h_{state},h_{act}}$ of the surjective maps $h_{state} : S^1 \twoheadrightarrow S^2$ and $h_{act} : A^1 \twoheadrightarrow A^2$, such that the following formulae are valid for any $s^1\in S^1$ and $a^1 \in A^1$,
	\begin{align*}
		\forall s'^1.(s^1 \xrightarrow{a^1} s'^1 \in \trans^1 & \implies h_{state}(s^1) \xrightarrow{h_{act}(a^1)} h_{state}(s'^1) \in \trans^2) \\
		s^1 \in S_{init}^1 & \implies h_{state}(s^1) \in S_{init}^2
	\end{align*}
	We write $H : \cal A^1 \rightarrow \cal A^2$ to denote the homomorphism. The set of all homomorphisms is closed under composition. The identity homomorphism is the identity map $id\subsub{\cal A^1} = \tup{id\subsub{S^1},id\subsub{A^1}}$, where $id\subsub{S^1}$ and $id\subsub{A^1}$ are the identity maps on $S^1$ and $A^1$ respectively.
	
	\looseness=-1 Given a run $\rho = s_0 \xrightarrow{a_1} s_1 \ldots s_{n-1} \xrightarrow{a_n} s_n$ in $\cal A^1$, we write $H(\rho)$ to denote its pointwise homomorphic image $h_{state}(s_0) \xrightarrow{h_{act}(a_1)} h_{state}(s_1) \ldots h_{state}(s_{n-1}) \xrightarrow{h_{act}(a_n)} h_{state}(s_n)$, which is a run in $\cal A^2$.
\end{definition}

Essentially, a homomorphism from an LTS $\cal A^1$ to $\cal A^2$ induces an equivalence relation on the set of states and actions. Indeed, we say two states (actions) are equivalent if they have the same image with respect to the function $h_{state}$ ($h_{act}$). Alternatively, $s^1$ ($a^1$) is equivalent to $s'^1$ ($a'^1$) iff $\tup{s^1,s'^1}$ ($\tup{a^1,a'^1}$) is in the \emph{kernel} of the homomorphism $h_{state}$ ($h_{act}$), i.e., $\tup{s^1,s'^1}\in \ker{h_{state}}$ ($\tup{a^1,a'^1}\in \ker{h_{act}}$), where $\ker{f}$ denotes the kernel of a function $f$. Consequently, such an equivalence induces a congruence on the structure $\cal A^1$. We now define a generalized kernel bisimulation on an LTS.

\begin{definition}[Generalized Kernel Bisimulation]\label{def:kernel-bisim}
	Given an LTS $\cal A^1 = \tup{S^1,S^1_{init},A^1,\trans^1}$, a homomorphism $H$ of $\cal A^1$ to $\cal A^2 = \tup{S^2,S_{init}^2,A^2,\trans^2}$ is called a \emph{generalized kernel bisimulation} on $\cal A^1$ if and only if, for every tuple $\tup{p,p'}\in\ker{h_{state}}$ and all actions $a \in A^1$, if $p \xrightarrow{a} q \in \trans^1$, then $p' \xrightarrow{a'} q' \in \trans^1$, such that $\tup{a,a'}\in \ker{h_{act}}$ and $\tup{q,q'}\in \ker{h_{state}}$.
	
	Given two states $p$ and $p'$ in $S^1$, we say $p$ is \emph{kernel bisimilar} to $p'$ with respect to $H$, written $p \sim\subsub{H} p'$, if $H$ is a generalized kernel bisimulation on $\cal A^1$, such that $\tup{p,p'}\in\ker{h_{state}}$.
\end{definition}

If the generalized kernel bisimulation $H$ is such that $h_{act} = id_{A^1}$ is the identity map on $A^1$, then $\ker{h_{state}}$ is a strong bisimulation \cite[Appendix B]{tech-rpt}. Conversely, for any strong bisimulation  $R \subseteq S^1 \times S^1$ on $\cal A^1$, we define the map $H^{R} = \tup{h_{state}^{R},id_{A^1}}$, where $h_{state}^{R} : S^1 \rightarrow S^1/R$ is the quotient map with respect to the equivalence relation $R$. The map $H^R$ is indeed a homomorphism of $\cal A^1$ onto the LTS $\cal A'^2 = \tup{S^1/R,S_{init}^1/R,A^1,\trans'^1}$ where $\trans'^1 = \{ h_{state}^{R}(p) \xrightarrow{a} h_{state}^{R}(q) \mid p \xrightarrow{a} q \in \trans^1 \}$. Moreover, for any $a \in A^1$,
\begin{align*}
	p\sim\subsub{H^R} p' & \Leftrightarrow \tup{p,p'} \in R\\
	& \Rightarrow (p \xrightarrow{a} q \in \trans^1 \Rightarrow \exists q'. p' \xrightarrow{a} q' \in \trans^1 \wedge \tup{q,q'}\in R)\\
	& \Leftrightarrow (p \xrightarrow{a} q \in \trans^1 \Rightarrow \exists q'. p' \xrightarrow{a} q' \in \trans^1\\
	&  \,\,\,\,\wedge \tup{a,a}\in \ker{id\subsub{A^1}} \wedge q\sim\subsub{H^R}q'.
\end{align*}
Hence, generalized kernel bisimulation is weaker than strong bisimulation. Below, we state as a corollary, that any two states that are generalized kernel bisimilar have equivalent sets of traces starting from them, upto their homomorphic image.

\begin{lemma}\label{lem:bisim}
	Consider an LTS $\cal A^1 = \tup{S^1,S^1_{init},A^1,\trans^1}$ and a kernel bisimulation $H : \cal A^1 \rightarrow \cal A^2$. Then for any run $\rho$ in $\cal A^2$ from a state $s'^2$, there exists a run $\rho'$ in $\cal A^1$ from $s^1$, for each $s^1 \in h_{state}^{-1}(s'^2)$, such that $H(\rho') = \rho$.
\end{lemma}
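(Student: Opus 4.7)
My plan is to proceed by induction on the length $n$ of the run $\rho$ in $\cal A^2$, lifting it one transition at a time to $\cal A^1$. For the base case $n = 0$, the trivial zero-length run $\rho' = s^1$ in $\cal A^1$ suffices, since $H(\rho') = h_{state}(s^1) = s'^2 = \rho$.

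For the inductive step, I decompose $\rho$ as a first transition $s'^2 \xrightarrow{b_1} t'^2$ followed by a tail $\rho_{\mathrm{tail}}$ starting at $t'^2$. The crux is to lift this first transition to some $s^1 \xrightarrow{b_1'} t^1$ in $\cal A^1$ with $h_{act}(b_1') = b_1$ and $h_{state}(t^1) = t'^2$; once this is in hand, the induction hypothesis applied to $\rho_{\mathrm{tail}}$ from the state $t^1 \in h_{state}^{-1}(t'^2)$ yields a run $\rho''$ in $\cal A^1$ from $t^1$ with $H(\rho'') = \rho_{\mathrm{tail}}$, and prepending $s^1 \xrightarrow{b_1'} t^1$ gives the desired $\rho'$.

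The single-step lifting is the main technical point and proceeds in two parts. First, I use that $\cal A^2$ is the image of $\cal A^1$ under $H$ (so each transition in $\trans^2$ is the image of some transition in $\trans^1$) to extract a witness $\bar{s}^1 \xrightarrow{\bar{b}_1} \bar{t}^1 \in \trans^1$ mapping to $s'^2 \xrightarrow{b_1} t'^2$. The witness state $\bar{s}^1$ need not coincide with $s^1$, but both lie in $h_{state}^{-1}(s'^2)$, so $\tup{\bar{s}^1, s^1} \in \ker{h_{state}}$. Second, I apply the generalized kernel bisimulation property (\Cref{def:kernel-bisim}) to the transition $\bar{s}^1 \xrightarrow{\bar{b}_1} \bar{t}^1$, obtaining $s^1 \xrightarrow{b_1'} t^1 \in \trans^1$ with $\tup{\bar{b}_1, b_1'} \in \ker{h_{act}}$ and $\tup{\bar{t}^1, t^1} \in \ker{h_{state}}$, whence $h_{act}(b_1') = b_1$ and $h_{state}(t^1) = t'^2$. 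The main obstacle is the surjectivity-on-transitions used in the first part: the homomorphism definition only guarantees forward closure (transitions in $\cal A^1$ map into $\trans^2$), so the argument relies on $\cal A^2$ behaving as the congruence quotient of $\cal A^1$ under $\ker{h_{state}}$ and $\ker{h_{act}}$, which is the intended setting for the $\scrsf{dns}$ abstraction; the generalized kernel bisimulation property then contributes the essential step of transferring the witness so that it begins at the prescribed preimage $s^1$ rather than an arbitrary equivalent state.
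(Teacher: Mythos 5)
Your proposal is correct and follows essentially the same route as the paper: induction on the length of the run in $\cal A^2$, with the single-transition case discharged via the generalized kernel bisimulation property (the paper's own proof is exactly this induction, stated in one line with the base case declared ``immediate from the definition'').

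The one point worth dwelling on is the surjectivity-on-transitions issue you flag, and you are right to flag it: it does not follow from the paper's definition of LTS homomorphism, which only requires $h_{state}$ and $h_{act}$ to be surjective and transitions of $\cal A^1$ to map \emph{into} $\trans^2$. Without the additional assumption that every transition of $\cal A^2$ is the image of some transition of $\cal A^1$, the lemma as literally stated fails: take $\cal A^1$ with two states, one action and an empty $\trans^1$, and $\cal A^2$ a single state with a self-loop; the evident surjections satisfy the homomorphism and kernel-bisimulation conditions vacuously, yet the loop admits no lift. Hence your two-part single-step lifting --- first choose any concrete witness of the abstract transition, then use the kernel-bisimulation clause to transport it from that witness state to the prescribed preimage $s^1$ of $s'^2$ --- is exactly what the paper's ``immediate from the definition'' must also mean, and the quotient-style hypothesis you make explicit (that $\cal A^2$ realizes the image of $\trans^1$, as intended for $\abstraction$ obtained from $\scrsf{dns}\subsub{\cal Z}$) is silently assumed there. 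Your argument is therefore no less complete than the paper's, and is more careful about where that hypothesis enters; if you wanted to be fully rigorous you would either add it to the lemma's hypotheses or verify it separately for the DNS abstraction.
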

\begin{proof}
	The proof follows by induction, on the length of runs in $\cal A^2$. The base case for runs of length 1 is immediate from the definition of generalized kernel bisimulation.
\end{proof}

Note that kernel bisimulations have already been introduced and well-studied in the context of category theory \cite{goncharov2014coalgebraic,staton2011relating}. Moreover, it is well-known that they coincide with strong bisimulations in the context of LTSes. Thus, generalized kernel bisimulations essentially generalize kernel bisimulations in the context of LTSes. Hereafter, for brevity, we will refer to generalized kernel bisimulations as just kernel bisimulations.

\subsection{Syntactic Domain Congruence}\label{subsec:syn-domain-congruence}

Recall that every query consists of a domain name and a query type.
Since there are finitely many query types, the unbounded nature of
$\scrsf{query}$ arises from the unbounded\footnote{In practice, it is bounded by hardware or pre-defined limits. However, the limits are extremely large. By current standards, the total number of possible queries is $\sum_{i=0}^{253}38^i$, where 253 is the maximum length of a domain name, and each position is labelled by one of 38 valid characters. The total number of possible labels is $\sum_{i=0}^{253}37^i$, as a label is a domain without any periods (.).} cardinality of $\scrsf{domain}$.
This is because every domain name consists of an unbounded
sequence of labels, where each label comes from the
set $\scrsf{label}$ (Example \ref{ex:normal-domain}). In this section, we define a more robust notion of a syntactic congruence over $\scrsf{domain}$ with finitely many equivalence classes that preserves the semantics of $\scrsf{dns}$. In particular, the congruence is preserved not just under concatenation but also domain name rewrites. We shall show that every congruence class is a PPT language over $\scrsf{label}$.

As usual, we fix the zone-file configuration $\cal Z$ for all our computations. Since every domain name is also a string, henceforth, we call a set of domain names a language. To begin with, we build some intuition on how one may abstract the domain names to a finite set.

\subsubsection{Basic Intuition}

As a first approach, we will try to define an equivalence relation on $\scrsf{domain}$ with finitely many equivalence classes, such that any two global configurations in $\scrsf{dns}$ that are equivalent upto their query components, are kernel bisimilar. To do that, we first identify some languages by observing the behaviour of the system,
\begin{align*}
	L_{s, t, \ltimes} &= \{d \mid d\in\scrsf{domain},\text{dn}(\cal M (s,\tup{d,t})) \ltimes d\}\\
	L_{t, rr} &= \{d \mid d\in \scrsf{domain}, d \leftarrow_{t} rr\}
\end{align*}
for $s \in \cal S, t \in \scrsf{type}, \ltimes \in \{=,\ni_{\star},<\},$ and $rr\in \cal Z$. Any two domain names $d$ and $d'$ with identical membership relations with the above languages are defined to be equivalent, i.e., $d$ and $d'$ are equivalent if for each of the above languages, say $L$, $d\in L$ iff $d'\in L$. Intuitively, two global configurations in $\scrsf{dns}$ where the corresponding occurrence of domain names are equivalent (we exclude the unsynthesized resource records from this comparison) according to the above relation will either both satisfy the premise for a transition or neither, for example, observe the premises of transitions in Figures \ref{fig:ns-transitions}, \ref{fig:ns-transitions-more}, and \ref{fig:rec-transitions}. Since the above set of languages is finite, there are finitely many equivalence classes. Readers may check that this also induces a kernel bisimilarity on the global configuration space of $\scrsf{dns}$.

Note that this also suffices to ensure kernel bisimilarity in the case of $\rtype{DNAME}$ rewrites. Indeed, by definition, for two equivalent domain names $d$ and $d'$, and a $\rtype{DNAME}$ record $rr$, either they both contain $\text{dn}(rr)$ as a prefix or neither. This means, either both names can undergo the rewrite operation or neither. After the rewrite, both the resulting domain names will have the same prefix up to all the domain names of the records in $\cal Z$. However, a priori it is not clear whether rewrites also preserve the equivalence of the domain names, as the membership in the languages of the form $L_{s,t,\ltimes}$ may now differ. Moreover, a more fundamental concern is that there is no obvious way to effectively represent the above languages.

\subsubsection{Formal Definition}
We will now define a syntactic congruence on domain names that slightly refines the one sketched above. To begin with, let us denote the set of all labels occuring in the domain names in the records of $\cal Z$ by $\zoneAlph\subseteq\scrsf{label}$. Additionally, we symbolically denote the elements in the complement set $(\scrsf{label}\setminus\zoneAlph)$ by the special label $\zoneAlphCompl$. This finally gives us the finite alphabet, $\dnsAlph = \zoneAlph\cup\{\zoneAlphCompl\}$.

\begin{definition}[Syntactic Domain Congruence]
	For every record $rr \in \cal Z$, consider the PPT languages $L_{rr,\text{dn}}$ and $L_{rr,\text{val}}$, such that
	\noindent\begin{minipage}{.5\linewidth}
		\begin{equation*}
			L_{rr,\text{dn}} = \begin{cases}
				\text{dn}(rr).\dnsAlph^{\geq 2}, & \text{if } \text{dn}(rr) \in (\scrsf{label}^*)_\star\\
				\text{dn}(rr).\dnsAlph^{+},  & \text{if } \text{dn}(rr) \in (\scrsf{label}^*)\\
			\end{cases}
		\end{equation*}
	\end{minipage}%
	\begin{minipage}{.5\linewidth}
		\begin{equation*}
			L_{rr,\text{val}} = \begin{cases}
				\text{val}(rr).\dnsAlph^+,  & \text{if } \text{ty}(rr) = \scrsf{ns}\\
				\dnsAlph^*,  & \text{if } \text{ty}(rr) \neq \scrsf{ns}.
			\end{cases}
		\end{equation*}
	\end{minipage}

	The \emph{syntactic domain congruence} with respect to $\cal Z$ is the congruence relation $\simZ$ on $\dnsAlph^*$ s.t.
	\[ u \simZ v \iff \forall rr \in \cal Z. (\synhom{L_{rr,\text{dn}}}(u) = \synhom{L_{rr, \text{dn}}}(v) \wedge \synhom{L_{rr,\text{val}}}(u) = \synhom{L_{rr, \text{val}}}(v)). \]
	We then say that the domains $u$ and $v$ are \emph{syntactically congruent} (with respect to $\cal Z$).
	
	Equivalently, $u \simZ v$ iff $\synhomZ (u) = \synhomZ (v)$, where $\synhomZ (w) = \tup{\synhom{L_{rr,j}}(w)}_{rr\in\cal Z,j\in \{\text{dn},\text{val}\}}$. The co-domain of $\synhomZ$ is the product monoid $\prod_{rr\in\cal Z,j\in \{\text{dn},\text{val}\}} \synmon{L_{rr,j}}$, denoted by $\synmonZ$. %
\end{definition}

Intuitively, the idea is to define a congruence that is sufficiently refined such that two domain names satisfy the conditions required to trigger a transition in $\scrsf{dns}$ if and only if they are congruent to each other; the languages must be carefully chosen to make this happen. Below we show that this is indeed the case. In particular, we show that the syntactic domain congruence refines the monadic relations, defined as languages, from our dicussion above.

\begin{lemma}\label{lem:synt-dom-cong-fundamental}
	Consider two domain names $d$ and $d'$ such that $d \simZ d'$, then the following hold:
	\begin{enumerate}
		\item $\forall rr\in\cal Z. (\text{dn}(rr) \ltimes d \iff \text{dn}(rr) \ltimes d')$, where $\ltimes \in \{=,\ni_{\star},<\}$.
		\item $\forall t\in\scrsf{type}. \scrsf{rank}(rr,\tup{d,t},z) = \scrsf{rank}(rr,\tup{d',t},z)$, where $z$ is an arbitrary zone in $\cal Z$ and $rr$ is an arbitrary record in $z$.
		\item $\forall rr\in\cal Z. \text{ty}(rr) = \scrsf{ns} \implies (\text{val}(rr) = d \iff \text{val}(rr) = d')$.
	\end{enumerate}
\end{lemma}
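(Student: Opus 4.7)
My plan is to prove the three claims by unfolding the syntactic domain congruence $\simZ$ and appealing to Lemma~\ref{lem:prefix-prefix-singleton-ec} applied to the PPT languages $L_{rr,\text{dn}}$ and $L_{rr,\text{val}}$ attached to $rr$. The bridge between the two is that $d \simZ d'$ immediately gives $d \sim_{L_{rr,\text{dn}}} d'$ and $d \sim_{L_{rr,\text{val}}} d'$ for every $rr \in \cal Z$, and Lemma~\ref{lem:prefix-prefix-singleton-ec} then pins down the pair $(d,d')$ very tightly whenever one of them is a prefix of the seed of the relevant language, or sits exactly one letter past that seed in the wildcard case.

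For part (1) I would split on $\ltimes$. If $\text{dn}(rr) = d$, then $d$ is trivially a prefix of the seed $\text{dn}(rr)$ of $L_{rr,\text{dn}}$, so the first clause of Lemma~\ref{lem:prefix-prefix-singleton-ec} forces $d' = d = \text{dn}(rr)$; the converse is by symmetry of $\simZ$. The case $\text{dn}(rr) < d$ reduces to $d \in \text{dn}(rr) \cdot \dnsAlph^+ = L_{rr,\text{dn}}$ (in the non-wildcard case), and since $\synhom{L_{rr,\text{dn}}}$ preserves language membership, the property transfers to $d'$. If instead $\text{dn}(rr)$ is a wildcard of base $s$ (so $L_{rr,\text{dn}} = s \cdot \dnsAlph^{\geq 2}$), then $\text{dn}(rr) \ni_\star d$ rewrites to $d \in s \cdot \dnsAlph$, and the second clause of Lemma~\ref{lem:prefix-prefix-singleton-ec} transfers this membership to $d'$. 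Part (3) then reuses the equality argument of part (1) verbatim, this time with $L_{rr,\text{val}} = \text{val}(rr) \cdot \dnsAlph^+$ in place of $L_{rr,\text{dn}}$; the case $\text{ty}(rr) = \scrsf{ns}$ places us exactly in the non-trivial branch of $L_{rr,\text{val}}$, so $d = \text{val}(rr)$ is a prefix of the seed and the first clause of Lemma~\ref{lem:prefix-prefix-singleton-ec} forces $d' = d$.

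Part (2) is the main technical obstacle. I would unfold $\scrsf{rank}(rr,\tup{d,t},z)$ into its four coordinates: the first, $\scrsf{match}(rr,d)$, is a Boolean combination of the relations in part (1) and so is immediate; the second and fourth coordinates are independent of $d$. The delicate one is the third, $max_\simeq(\text{dn}(rr),d)$, whose preservation asks that for every prefix $p$ of $\text{dn}(rr)$ the predicate $p \preceq d$ transfers to $p \preceq d'$. My plan is to encode each such prefix-check as a membership statement across the family $\{L_{rr',\text{dn}}\}_{rr'\in\cal Z}$ (applying Lemma~\ref{lem:prefix-prefix-singleton-ec} at each seed), and then lean on the well-formedness of the zones in $\cal Z$ to argue that enough records participate in these membership tests to pin down the longest common prefix exactly. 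The step I expect to be hardest is precisely this last one: certifying that the seeds of the PPT languages collectively stratify the prefixes of $\text{dn}(rr)$ finely enough that $|prec(\text{dn}(rr)) \cap prec(d)|$ is already determined by the image $\synhomZ(d)$, hence coincides with $|prec(\text{dn}(rr)) \cap prec(d')|$. Once this prefix-discrimination property is established, the equality of the third coordinate, and hence of the full rank tuple, is immediate.
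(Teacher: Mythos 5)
Your parts (1) and (3) are correct and essentially follow the paper's route: the cases $=$ and $\text{ty}(rr)=\scrsf{ns}$ use the first clause of Lemma~\ref{lem:prefix-prefix-singleton-ec} (a name equal to the seed is in particular a prefix of it, so its class is a singleton), the wildcard case uses the second clause, and your handling of $<$ by plain membership preservation ($\text{dn}(rr) < d$ iff $d \in L_{rr,\text{dn}}$, and syntactic congruence preserves membership under the empty context) is, if anything, more direct than the paper's detour through left-zeroes and Lemma~\ref{lem:almeida}.

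The genuine gap is in part (2), exactly at the step you yourself flag as hardest: preservation of $max_{\simeq}(\text{dn}(rr),d)$. You give only a plan — encode prefix tests as memberships across the whole family $\{L_{rr',\text{dn}}\}_{rr'\in\cal Z}$ and appeal to well-formedness of the zones to argue that the seeds stratify the prefixes of $\text{dn}(rr)$ finely enough. That claim is never established, and the proposed route is doubtful on its face: the only seeds available are the names and values of records actually occurring in $\cal Z$, and well-formedness does not force the intermediate prefixes of $\text{dn}(rr)$ (empty non-terminals, in DNS parlance) to appear as such seeds, so the membership tests expressible in your family need not determine $|prec(\text{dn}(rr))\cap prec(d)|$. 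The paper does not invoke other records or well-formedness at all here; it stays inside the single language $L=L_{rr,\text{dn}}$: take any common prefix $d_1$ of $\text{dn}(rr)$ and $d$, factor $\synhom{L}(d')=\synhom{L}(d)=\synhom{L}(d_1)\cdot m$ for some $m\in\synmon{L}$, and use the singleton-class property of prefixes of the seed (Lemma~\ref{lem:prefix-prefix-singleton-ec}) to conclude that the matching prefix of $d'$ must be $d_1$ itself; hence every common prefix of $\text{dn}(rr)$ and $d$ is a prefix of $d'$, and by symmetry the two intersection counts agree. This transfer-of-common-prefixes argument within the one language $L_{rr,\text{dn}}$ is precisely the content missing from your part (2); without it (or a worked-out substitute) the equality of the third rank coordinate, and thus of the rank tuple, is not proved.
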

\begin{proof}
	
	(1) Without loss of generality, we shall do a case-by-case analysis of the domain name $d$, generalizing $rr \in \cal Z$,
	\paragraph{$(\ltimes \equiv\,=)$}
	Trivially, $d = \text{dn}(rr)$ implies $d \preceq \text{dn}(rr)$. Since $d \simZ d'$, then $d \sim_{L_{rr,\text{dn}}} d'$. Using Lemma \ref{lem:prefix-prefix-singleton-ec}, we thus have $d = d'$, which implies $d' = \text{dn}(rr)$.
	\paragraph{$(\ltimes \equiv\ni_\star)$}
	Recalling the definition of $\ni_\star$ from Figure \ref{fig:DNS-notation}, we know that $d_\star \in \text{dn}(rr)\ostar{(\dnsAlph)}_\star$ and $\text{dn}(rr) \in (\scrsf{label}^*)_\star$. By definition, $L_{rr,\text{dn}} = \text{dn}(rr).\dnsAlph^{\geq 2}$. Since $d \sim_{L_{rr,\text{dn}}} d'$, using Lemma \ref{lem:prefix-prefix-singleton-ec}, we again have that $d' \in \text{dn}(rr)\cdot{(\dnsAlph)}$, or equivalently, $d'_\star \in \text{dn}(rr)\ostar{(\dnsAlph)}_\star$ (recall the definition of $\cal L_\star$ from~\Cref{subsec:def-not}).
	\paragraph{$(\ltimes \equiv\,<)$}
	Recalling the definition of $<$ from Figure \ref{fig:DNS-notation}, $\text{dn}(rr) \subset prec (d)$. Since the underlying monoid is a free monoid, this is equivalent to $\text{dn}(rr) \prec d$. Using Lemma \ref{lem:almeida}, $\synhom{L_{rr,\text{dn}}}(d)$ is thus a left-zero in $\synhom{L_{rr,\text{dn}}}$, which implies that $\synhom{L_{rr,\text{dn}}}(d')$ is also a left-zero in that semigroup. Hence, using Lemma \ref{lem:prefix-prefix-singleton-ec}, $\text{dn}(rr) \prec d'$, or equivalently, $\text{dn}(rr) \subset prec (d')$.
	
	(2) It suffices to prove that $\scrsf{match}(rr,\text{dn}(rr)) = \scrsf{match}(rr,d) \iff \scrsf{match}(rr,\text{dn}(rr)) = \scrsf{match}(rr,d')$ and $max_\simeq(\text{dn}(rr), d) = max_\simeq(\text{dn}(rr), d')$. The first formula follows from the arguments in (1).\\
	To prove the second formula, we must prove that $|prec(\text{dn}(rr)) \cap prec(d)| = |prec(\text{dn}(rr)) \cap prec(d')|$. Since we are working with the free monoid $\cal L$, $prec(\text{dn}(rr)) \cap prec(d)$ essentially contains the prefix-closed set of the largest common prefix of $\text{dn}(rr)$ and $d$. We will show that any common prefix of $\text{dn}(rr)$ and $d$ must be a prefix of $d'$.\\
	Suppose $d_1$ is a common prefix of $\text{dn}(rr)$ and $d$. Then there must exist an $m\in \synmon{L_{rr,\text{dn}}}$ such that, $\synhom{L_{rr,\text{dn}}}(d_1)\cdot m = \synhom{L_{rr,\text{dn}}}(d) = \synhom{L_{rr,\text{dn}}}(d')$.Hence, there is a prefix $d_2$ ,of $d'$, in $\synhom{L_{rr,\text{dn}}}^{-1}(d_1)$. However, since $d_1$ is a prefix of $\text{dn}(rr)$, it follows from Lemma \ref{lem:prefix-prefix-singleton-ec}, that $d_1 = d_2$. Thus, $d_1$ is a prefix of $d'$ as well.
	
	(3) This follows immediately from the definitions and Lemma \ref{lem:prefix-prefix-singleton-ec} using similar arguments as before.
\end{proof}

The following theorem is a direct corollary of the above lemma,
\begin{theorem}
	Consider two domain names $d$ and $d'$ such that $d \simZ d'$, then for every $s\in\cal S$, $t\in\scrsf{type}$, $\ltimes \in \{=,\ni_{\star},<\}$ and $rr\in \cal Z$, it holds that
	\begin{enumerate}
		\item $d \in L_{s,t,\ltimes} \iff d' \in L_{s,t,\ltimes}$
		\item $d \in L_{t,rr} \iff d' \in L_{t,rr}$.
	\end{enumerate}
\end{theorem}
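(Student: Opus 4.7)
The plan is to reduce both parts of the theorem directly to \Cref{lem:synt-dom-cong-fundamental}, which has already done the heavy algebraic lifting (via \Cref{lem:prefix-prefix-singleton-ec,lem:almeida}) to show that the predicates $=,\ni_\star,<$ over $\text{dn}(rr)$, the values $\text{val}(rr)$ on $\scrsf{ns}$ records, and the ranking $\scrsf{rank}$ are all preserved under $\simZ$. My strategy is to unpack the definitions of $L_{s,t,\ltimes}$ and $L_{t,rr}$ and show that each condition on $d$ factors through exactly these predicates.

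\paragraph{Part (2) first, as it is immediate.} Unfolding, $d \in L_{t,rr}$ iff $d \leftarrow_t rr$, i.e.\ iff $\text{dn}(rr) \leq d$ and $\text{ty}(rr) = t$. The type condition is independent of $d$. For the prefix condition, recall from \Cref{fig:DNS-notation} that $d_1 \leq d_2$ means $prec(d_1) \subseteq prec(d_2)$, which in the free monoid $\cal L$ is equivalent to the disjunction $d_1 = d_2 \vee d_1 < d_2$. Applying part (1) of \Cref{lem:synt-dom-cong-fundamental} to each of the cases $\ltimes = \,=$ and $\ltimes = \,<$ yields $\text{dn}(rr) \leq d \iff \text{dn}(rr) \leq d'$, finishing this case.

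\paragraph{Part (1).} The key intermediate step is to show $\cal M(s,\tup{d,t}) = \cal M(s,\tup{d',t})$ as sets of records. Unfolding the definition in \Cref{fig:aux-op}, this factors into two sub-claims:
\begin{enumerate}
    \item[(a)] $\cal N(s,\tup{d,t}) = \cal N(s,\tup{d',t})$. Since $\cal N$ selects the $\text{dn}$-maximal zones $z\in\cal Z(s)$ with $\text{dn}(z)\leq d$, and each such $\text{dn}(z)$ occurs in some record of $\cal Z$, part (1) of \Cref{lem:synt-dom-cong-fundamental} (applied with $rr$ being the $\scrsf{soa}$/$\scrsf{ns}$ record at the zone apex, and $\ltimes\in\{=,<\}$) gives $\text{dn}(z)\leq d \iff \text{dn}(z)\leq d'$ for every candidate $z$, and hence the same maximum is attained.
    \item[(b)] For each $z$ in this common set of zones, $\max_{<_{q,z}} z = \max_{<_{q',z}} z$ where $q=\tup{d,t}$ and $q'=\tup{d',t}$. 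This is exactly part (2) of \Cref{lem:synt-dom-cong-fundamental}: the rank of each $rr\in z$ with respect to $q$ equals its rank with respect to $q'$, so the $<_{q,z}$- and $<_{q',z}$-maximal subsets coincide.
\end{enumerate}
With $\cal M(s,\tup{d,t}) = \cal M(s,\tup{d',t})$ established, the final step is to lift the relation $\ltimes$ from individual records to $\cal M$. Since all records in $\cal M(s,q)$ share the same $\scrsf{rank}$, they in particular share the same value of $max_\simeq(\text{dn}(rr),\text{dn}(q))$, so $\text{dn}(\cal M(s,q)) \ltimes \text{dn}(q)$ is well-defined and reduces to asking $\text{dn}(rr) \ltimes \text{dn}(q)$ for any single $rr\in\cal M(s,q)$. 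A last invocation of part (1) of \Cref{lem:synt-dom-cong-fundamental} on that chosen record then gives $\text{dn}(\cal M(s,\tup{d,t})) \ltimes d \iff \text{dn}(\cal M(s,\tup{d',t})) \ltimes d'$, completing the proof.

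\paragraph{Main Obstacle.} There is no substantive obstacle, as the theorem is advertised as a corollary: all three preserved predicates used above are already supplied by the preceding lemma. The only subtle point worth checking carefully is the well-definedness of the expression $\text{dn}(\cal M(s,q)) \ltimes d$ when $\cal M(s,q)$ contains several records with distinct names; this is why I explicitly invoke the common $max_\simeq$-component of the rank above, to ensure the $\ltimes$-comparison is independent of the representative chosen from $\cal M(s,q)$.
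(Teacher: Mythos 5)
Your proposal is correct and takes essentially the same route as the paper, which states the theorem as a direct corollary of Lemma~\ref{lem:synt-dom-cong-fundamental}; your unfolding of $L_{t,rr}$ via $\leftarrow_t$ and of $L_{s,t,\ltimes}$ via $\cal N$, $\cal M$, and the rank order simply makes explicit the details the paper leaves implicit. The only slightly imprecise aside is the claim that all records in $\cal M(s,q)$ share the same $\scrsf{rank}$ (maximal elements of a product \emph{partial} order need not), but this is immaterial since the set equality $\cal M(s,\tup{d,t})=\cal M(s,\tup{d',t})$ together with the per-record equivalence from Lemma~\ref{lem:synt-dom-cong-fundamental}(1) already yields the membership equivalence under any reading of $\text{dn}(\cal M(s,q))\ltimes d$.
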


Below, we state that the syntactic domain congruence also commutes with $\rtype{DNAME}$ rewrites.
\begin{lemma}
	Consider a $\rtype{DNAME}$ resource record $\tup{pre,\scrsf{dname},\sigma, post,b}$, and two syntactically congruent domain names $d$ and $d'$. Then $(post.(pre)^{-1}.d) \simZ (post . (pre)^{-1} . d')$.
\end{lemma}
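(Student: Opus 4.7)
The plan is to first dispense with the degenerate case. Applying Lemma~\ref{lem:synt-dom-cong-fundamental}(1) to the DNAME record itself with $\ltimes \in \{=,<\}$, the hypothesis $d \simZ d'$ forces $pre \preceq d \iff pre \preceq d'$. If $pre$ is not a prefix of $d$, both sides of the asserted congruence are empty and the claim holds vacuously. Otherwise I would uniquely factor $d = pre \cdot e$ and $d' = pre \cdot e'$ with $e, e' \in \dnsAlph^*$; the goal then reduces to showing $post \cdot e \simZ post \cdot e'$, which by definition means $\synhom{L}(post \cdot e) = \synhom{L}(post \cdot e')$ for every $L \in \{L_{rr'',j} : rr'' \in \cal Z,\, j \in \{\text{dn},\text{val}\}\}$.

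Fix such an $L$; the case $L = \dnsAlph^*$ is immediate, so $L = w \cdot \dnsAlph^+$ or $L = w \cdot \dnsAlph^{\geq 2}$ for some $w \in \zoneAlph^*$. Using that $\synhom{L}$ is a monoid homomorphism, the goal becomes $\synhom{L}(post) \cdot \synhom{L}(e) = \synhom{L}(post) \cdot \synhom{L}(e')$, while the hypothesis supplies $\synhom{L}(pre) \cdot \synhom{L}(e) = \synhom{L}(pre) \cdot \synhom{L}(e')$. I would then case-split on whether $|post|$ meets the threshold $k_L$ produced by Lemma~\ref{lem:almeida}. If $|post| \geq k_L$, then $\synhom{L}(post)$ factors through the image of its first $k_L$ letters, which Lemma~\ref{lem:almeida} identifies as an idempotent left-zero of $\synmon{L}$; hence $\synhom{L}(post) \cdot x = \synhom{L}(post)$ for every $x \in \synmon{L}$, and both sides of the target collapse to $\synhom{L}(post)$.

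The substantive case is $|post| < k_L$. Here Lemma~\ref{lem:prefix-prefix-singleton-ec} pins down $\synhom{L}(post)$ via a singleton class determined by whether $post$ is a prefix of $w$ or diverges from $w$. In either sub-case, the approach is to transfer, state by state on the minimal DFA for $L$, the constraint delivered by the hypothesis — namely that for every DFA state reachable from the initial one by reading some prefix of $pre$, extending by $e$ or by $e'$ yields the same state — to the analogous constraint for $post$. This requires understanding, for each DFA state $q$, what $\synhom{L}(post)$ does with $q$, and matching it against some state in the ``$pre$-image'' whose equality under $e$ and $e'$ is already guaranteed.

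The main obstacle I foresee is precisely this transfer, since the set of DFA states reached by reading $post$ need not be contained in the set reached by reading $pre$. The plan to overcome it is to exploit that both $pre$ and $post$ belong to $\zoneAlph^*$, so that any ``extra'' state encountered after reading $post$ must either be a sink (where both extensions trivially agree) or correspond to a prefix of $w$ which, by the structure of $L$ together with Lemma~\ref{lem:prefix-prefix-singleton-ec}, is itself constrained by the hypothesis applied to another language $L_{rr''',\text{dn}}$ built from $\cal Z$. Combining these constraints across all languages defining $\simZ$ should yield $\synhom{L}(post \cdot e) = \synhom{L}(post \cdot e')$ for the fixed $L$, and quantifying over $L$ then gives $post \cdot e \simZ post \cdot e'$.
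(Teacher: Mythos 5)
Your opening move---using Lemma~\ref{lem:synt-dom-cong-fundamental}(1) to get $pre \preceq d \iff pre \preceq d'$ and discarding the empty case---is, in fact, essentially the whole of the paper's proof: the paper declares that equivalence sufficient and stops. You rightly refuse to treat the remaining step, from $pre\cdot e \simZ pre\cdot e'$ to $post\cdot e \simZ post\cdot e'$, as automatic (a congruence lets you multiply congruent elements, not cancel a common prefix), and your long-$post$ case via the left-zero property from Lemma~\ref{lem:almeida} is fine. But the case you yourself call substantive, $|post| < k_L$, is never actually proved: what you offer is the hope that the constraints supplied by the other languages defining $\simZ$ can be ``combined'' to control how $e$ and $e'$ act after reading $post$. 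That transfer is precisely the content of the lemma, and your proposal leaves it open.

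The obstacle you foresee is real, and it cannot be discharged from the congruence as defined. The hypothesis $d \simZ d'$ only constrains how $e$ and $e'$ act on automaton states reachable by reading $pre$, whereas the goal concerns states reachable by reading $post$; and since $post$ is the value of a \rtype{dname} record, it contributes only the trivial language $\dnsAlph^*$ to $\simZ$ (only \rtype{ns} values generate $\text{val}(rr)\cdot\dnsAlph^+$). So if some record of $\cal Z$ has a name extending $post$, reading $post\cdot e$ can land on an interior, non-sink state that reading $pre\cdot e$ never visits, and no language in the family need separate $e$ from $e'$ there. Concretely, let $\cal Z$ contain the \rtype{dname} record $\tup{\mathsf{com.legacy},\scrsf{dname},\sigma,\mathsf{com.new},b}$ and an \rtype{a} record with name $\mathsf{com.new.www}$, and no record whose name strictly extends $\mathsf{com.legacy.www}$. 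For a label $z\notin\zoneAlph$, the names $d=\mathsf{com.legacy.www}$ and $d'=\mathsf{com.legacy.www}.z$ satisfy $d\simZ d'$ (no context separates them with respect to $\mathsf{com.legacy}\cdot\dnsAlph^+$ or $\mathsf{com.new.www}\cdot\dnsAlph^+$, and the value-languages are trivial), yet the rewritten names $\mathsf{com.new.www}$ and $\mathsf{com.new.www}.z$ are separated by $\mathsf{com.new.www}\cdot\dnsAlph^+$ itself, the first not being a strict extension. Hence the ``state-by-state transfer'' you plan for the short-$post$ case is not available from $\simZ$ alone; it would require enlarging the defining family (e.g., adding $pre\cdot u\cdot\dnsAlph^+$ whenever some $L_{rr,\text{dn}}$ has the form $post\cdot u\cdot\dnsAlph^+$, i.e., closing under the rewrite $pre\cdot x \mapsto post\cdot x$). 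Your proposal makes visible the step that the paper's two-line argument treats as immediate, but it does not fill it.
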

\begin{proof}
	It suffices to show that $pre \preceq d$ iff $pre \preceq d'$. Since, $pre = \text{dn}(rr)$, for some resource record $rr$, this follows immediately from Lemma \ref{lem:synt-dom-cong-fundamental} (1).
\end{proof}

The congruence can be naturally extended to a (non-symbolic) domain name $d \in \scrsf{domain}$, where any label in $(\scrsf{label}\setminus\zoneAlph)$ is mapped to $\synhomZ(\zoneAlphCompl)$. Additionally, the root domain $\epsilon$ is mapped to $\synhomZ (\varepsilon)$. Thus, we obtain a monoid homomorphism of the free monoid $\cal L$. We can effectively construct the equivalence classes of the entire set of domain names occuring in $\cal Z$, as shown in the following lemma.

\begin{lemma}
	Assume that $\cal Z$ has $n$ resource records and that the maximum length of the domain name of a resource record in $\cal Z$ is $c$. Then the class of $ $ abstract domains can be computed in $\cal O (n^{c+1}c^5)$ time.
\end{lemma}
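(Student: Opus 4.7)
My plan is to algorithmically enumerate the equivalence classes of $\simZ$, exploiting the structural properties of PPT languages established in Lemmas~\ref{lem:almeida} and~\ref{lem:prefix-prefix-singleton-ec}.

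First, I would preprocess each of the $2n$ languages $L_{rr,j}$ for $rr \in \cal Z$ and $j \in \{\text{dn}, \text{val}\}$: build its minimal DFA, which has $O(c)$ states since the ``prefix part'' has length at most $c$, and then construct its syntactic monoid $\synmon{L_{rr,j}}$ as the transition monoid of that DFA. By Lemma~\ref{lem:almeida}, the first $c+2$ letters of any word determine its image in $\synmon{L_{rr,j}}$, so the monoid has at most $O(c^2)$ elements; the standard saturation algorithm constructs it in $O(c^5)$ time, giving $O(n c^5)$ preprocessing overall.

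Second, I would enumerate the classes of $\simZ$ by a BFS: beginning with $[\epsilon]$, repeatedly extend the current representative by appending a label $a \in \dnsAlph$ and hash the resulting tuple $\synhomZ(w \cdot a)$ in $\synmonZ$ to detect duplicates. By Lemma~\ref{lem:almeida}, every word of length $\geq c+2$ is syntactically congruent to its length-$(c+2)$ prefix, so the search tree has depth at most $c+2$. The crucial observation, justified by Lemma~\ref{lem:prefix-prefix-singleton-ec}, is that at a node representing a class $[w]$ of depth $i$, only the labels appearing as the $(i+1)$-th letter of some $\text{dn}(rr)$ or $\text{val}(rr)$ (together with a single catch-all representative for $\zoneAlphCompl$ and for every letter not matching any active position) can yield pairwise non-congruent children; this bounds the branching factor by $2n + 1 = O(n)$ and the total number of classes by $O(n^{c+1})$.

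Finally, at each tree node I would extend $\synhomZ(w)$ to $\synhomZ(w \cdot a)$ by $2n$ componentwise lookups in the precomputed monoid transition tables and hash the resulting tuple in $O(n c^2)$ time. Summing the per-node cost across $O(n^{c+1})$ nodes and combining with the $O(n c^5)$ preprocessing yields the claimed $O(n^{c+1} c^5)$ after absorbing lower-order terms. The main obstacle I anticipate is the branching argument in the second step: rigorously showing that any two labels in $\dnsAlph$ that neither match an active prefix position nor separate the associated value-based languages produce congruent children in all $2n$ component monoids simultaneously. This amounts to lifting Lemma~\ref{lem:prefix-prefix-singleton-ec} to the product monoid $\synmonZ$; the remaining steps are standard automata-theoretic bookkeeping.
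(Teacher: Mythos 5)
Your preprocessing step (minimal DFAs with $O(c)$ states per PPT language and their transition monoids) is in line with the paper's construction of the per-language automata and adjacency matrices, but the pruning claim that carries your entire complexity bound --- the step you yourself flag as the main obstacle --- is false, not merely unproven. Take a configuration whose only record $rr$ is an $\scrsf{a}$ record with $\text{dn}(rr)=\mathsf{net.example}$, so $L_{rr,\text{dn}}=\mathsf{net.example}\cdot\dnsAlph^{+}$ and $L_{rr,\text{val}}=\dnsAlph^{*}$. The one-letter words $u=\mathsf{example}$ and $u'=\ell$, for any label $\ell\notin\zoneAlph$, are \emph{not} syntactically congruent: with context $x=\mathsf{net}$ and any letter $y$ we have $x\cdot u\cdot y=\mathsf{net.example.}y\in L_{rr,\text{dn}}$, while $x\cdot u'\cdot y\notin L_{rr,\text{dn}}$, hence $u\not\simZ u'$. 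Yet $\mathsf{example}$ does not occur as the \emph{first} letter of any $\text{dn}(rr)$ or $\text{val}(rr)$, so at depth $0$ your rule lumps it into the catch-all; your BFS therefore merges two distinct classes of $\simZ$, the enumeration is incomplete, and the $O(n)$ branching bound (and the $O(n^{c+1})$ class count drawn from it) does not follow. The underlying reason is that the syntactic congruence of $p\cdot\Sigma^{+}$ separates words according to their overlaps with arbitrary suffixes and factors of $p$, not only according to how far they have progressed along $p$; Lemma~\ref{lem:prefix-prefix-singleton-ec} controls only the prefixes of $p$ and the layer $p\cdot\Sigma$, and cannot be ``lifted'' to justify discarding letters that occur deeper inside the zone's domain names.

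Once the branching set is corrected to all of $\dnsAlph$ (every label occurring anywhere in $\cal Z$, plus one representative of the complement), your procedure essentially collapses to the paper's argument, which makes no attempt at pruning: it enumerates all words of length at most $c$ over $\dnsAlph$ by composing the per-letter adjacency matrices of the $O(n)$ component languages (a diagonal product), counts $|\dnsAlph|^{c}nc$ matrix multiplications at $O(c^{3})$ each, and concludes with $|\dnsAlph|\le nc$. Separately, your closing arithmetic does not go through even if the pruning were granted: $O(n^{c+1})$ nodes, each of which extends $2n$ monoid components (and a branching factor of $2n+1$ over depth $c+2$ in fact gives $O(n^{c+2})$ nodes), produces terms of order $n^{c+2}$ and higher, which are not ``absorbed'' into $O(n^{c+1}c^{5})$; so the final bookkeeping would need to be redone in any case.
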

\begin{proof}
    Each resource record in the zone file configuration corresponds to at most two PPT languages, $L_{rr,\text{dn}}$ and $L_{rr,\text{val}}$. There at most $\cal O (n)$ such languages. Due to the nature of these languages, one can show that to compute the corresponding automata we require $\cal O (|\text{dn}(rr)|)$ time. The adjacency matrix for each letter in the alphabet $\dnsAlph$ can be initiated in $\cal O(|\dnsAlph||\text{dn}(rr)|^2)$ time. To compute adjacency matrices, we thus need $O(|\dnsAlph||\text{dn}(rr)|^2)$ time.

    To compute the set of abstract domains, we take the diagonal product over all the languages, for each label, and then compute all concatenations up to the length $c$. Since there are at most $\cal O (n)$ matrices for each label (since there are at most that many languages), we will compute $|\dnsAlph|^cnc$ matrix multiplications in all. Each multiplication takes $\cal O(c^{\sim 3})$ time because the dimensions of a matrix are at most $c\times c$. Therefore, to compute all the domains we need $\cal O(|\dnsAlph|^cnc^4)$ time.

    However, since $|\dnsAlph| \leq nc$, we thus have $\cal O(|\dnsAlph|^cnc^4) = \cal O(n^{c+1}c^5)$.
\end{proof}

\subsection{Abstract DNS Model}\label{subsec:abstract-model}

Recalling Example \ref{ex:normal-domain}, we have defined an \emph{abstraction} on $\scrsf{domain}_{\cal L}$ with respect to $\cal Z_{\cal L}$, induced by a congruence on $\dnsAlph^*$ such that the quotient structure forms a monoid. The abstraction can be computed using the monoid homomorphism $\synhomZ$.

Using Lemma \ref{lem:notation-functorial}, the homomorphism $\synhomZ$ can be functorially mapped to the map $Z : \scrsf{zfc}\subsub{\cal L} \rightarrow \scrsf{zfc}\subsub{\synmonZ}$. We also observe that, for a zone file configuration $\cal Z' \in \scrsf{zfc}\subsub{\monoid}$ over the monoid $\monoid$, the semantics of the DNS instance $\tup{\scrsf{dns}\subsub{\cal Z'}}$ is functorial in $\monoid$. This is a straightforward extension of Lemma \ref{lem:notation-functorial}, following the constructions described in Sections \ref{subsec:NS-semantics} and \ref{subsec:RR-semantics}. Now, the arbitrarily fixed instance $\scrsf{dns}$, used in Sections \ref{subsec:NS-semantics} and \ref{subsec:RR-semantics}, is defined over the monoid $\cal L$. Hence, the homomorphism $\synhomZ : \cal L \rightarrow \synmonZ$ can be functorially mapped to an LTS homomorphism, $H_{\synhomZ} : \tup{\scrsf{dns}\subsub{\cal Z}} \rightarrow \tup{\abstraction}$. We call $\abstraction$ the \emph{abstraction} of the concrete model $\scrsf{dns}\subsub{\cal Z}$.

\Cref{fig:dns-abstraction}  shows the \emph{abstract} nameserver transitions in $\abstraction$ from Figures \ref{fig:ns-transitions} and \ref{fig:ns-transitions-more}. Intuitively, in all the states and transitions, every occurrence of a domain name is replaced by its abstraction. %

\begin{figure}
	\centering
	\subfloat{
		\centering
		\resizebox{0.35\textwidth}{!}{
			\begin{tikzpicture}[every label/.style={font=\small}]
				\node (before) [fill=blue!10,draw,rounded corners=2pt, label=north:\emph{Concrete Config}] at (0,0) {$\dfrac{\text{dn}(\cal M (s, \tup{d,t})) = d}{\tup{lookup,\tup{d,t},r} \rightarrow \tup{exact, \tup{d,t}, r}}$};
				\node (after) [fill=olive!10,draw,rounded corners=2pt, label=south:\emph{Abstract Config}] at (0,-2) {$\dfrac{\text{dn}(\cal M (s, \tup{\synmonZ(d),t})) = \synmonZ(d)}{\tup{lookup,\tup{\synhomZ (d),t},r} \rightarrow \tup{exact, \tup{\synhomZ (d),t}, r}}$};
				\draw [double distance=2pt, -{Implies}, shorten <= 2pt, shorten >= 2pt] (before) -- (after) node [midway, right] {$\synhomZ$};
			\end{tikzpicture}}}\hfill
	\subfloat{
		\centering
		\resizebox{0.375\textwidth}{!}{
			\begin{tikzpicture}[every label/.style={font=\small}]
				\node (before) [fill=blue!10,draw,rounded corners=2pt, label=north:\emph{Concrete Config}] at (0,0) {$\dfrac{\text{dn}(\cal M (s, \tup{d,t})) \ni_{\star} d}{\tup{lookup,\tup{d,t},r} \rightarrow \tup{wildcard, \tup{d,t}, r}}$};
				\node (after) [fill=olive!10,draw,rounded corners=2pt, label=south:\emph{Abstract Config}] at (0,-2) {$\dfrac{\text{dn}(\cal M (s, \tup{\synhomZ (d),t})) \ni_{\star} \synhomZ (d)}{\tup{lookup,\tup{\synhomZ (d),t},r} \rightarrow \tup{wildcard, \tup{\synhomZ (d),t}, r}}$};
				\draw [double distance=2pt, -{Implies}, shorten <= 2pt, shorten >= 2pt] (before) -- (after) node [midway, right] {$\synhomZ$};
	\end{tikzpicture}}}\qquad
	\subfloat{
		\centering
		\resizebox{\textwidth}{!}{
			\begin{tikzpicture}[every label/.style={font=\small}]
				\node (before) [fill=blue!10,draw,rounded corners=2pt, label=north:\emph{Concrete Config}] at (0,0) {$\dfrac{
						\text{dn}({\cal M}(s,\tup{d,t}))<d,\scrsf{dname}\in\tau(s,\tup{d,t}), d'\in\scrsf{domain}, {\cal T}_{\scrsf{dname}}(s,\tup{d,t})=\{rr\}, rr'=\tup{d,\scrsf{cname},\text{ttl}(rr),\text{val}(rr)\text{dn}(rr)^{-1}d,1}}{\tup{lookup,\tup{d,t},r}\xrightarrow{c_{(d',s),r}!\tup{\scrsf{ansq},\{rr,rr'\}}}\tup{init,\bot,\bot}}$};
				\node (after) [fill=olive!10,draw,rounded corners=2pt, label=south:\emph{Abstract Config}] at (0,-2.375) {$\dfrac{
						\text{dn}({\cal M}(s,\tup{\synhomZ (d),t})) \in prec(\synhomZ (d)),\scrsf{dname}\in\tau(s,\tup{\synhomZ (d),t}), m'\in\synmonZ, {\cal T}_{\scrsf{dname}}(s,\tup{\synhomZ (d),t})=\{\synhomZ (rr)\}, rr'=\tup{\synhomZ (d),\scrsf{cname},\text{ttl}(\synhomZ (rr)),m',1}}{\tup{lookup,\tup{\synhomZ(d),t},r}\xrightarrow{c_{(m',s),r}!\tup{\scrsf{ansq},\{\synhomZ (rr),rr'\}}}\tup{init,\bot,\bot}}$};
				\draw [double distance=2pt, -{Implies}, shorten <= 2pt, shorten >= 2pt] (before) -- (after) node [midway, right] {$\synhomZ$};
	\end{tikzpicture}}}
	\captionsetup{skip=8pt}
	\caption{\label{fig:dns-abstraction}Some abstract $\ns{s}$ transitions; the concrete configurations are written before the arrows and the abstract ones after. For the abstract configuration in the bottom-most transition, we assume $m' \in \text{val}(\synhomZ(rr))\text{dn}(\synhomZ(rr))^{-1}\synhomZ(d)$.}
\end{figure}

We conclude this section with one of the main contributions of this paper, establishing both the soundness and completeness of our abstraction.

\begin{theorem}[Soundness and Completeness]\label{thm:sound-complete-abstraction}
	For every run $\rho$ in $\tup{\abstraction}$, there exists a run $\rho'$ in $\tup{\scrsf{dns}\subsub{\cal Z}}$ such that $H_{\synhomZ}(\rho') = \rho$. The converse holds as well.
\end{theorem}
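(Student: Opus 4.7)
The plan is to reduce both directions of the equivalence to the generalized kernel bisimulation machinery of Section \ref{subsec:gen-ker-bis}. Completeness---that $H_{\synhomZ}(\rho')$ is a valid abstract run whenever $\rho'$ is a concrete run---amounts to verifying that $H_{\synhomZ}$ is a well-defined LTS homomorphism from $\tup{\scrsf{dns}\subsub{\cal Z}}$ to $\tup{\abstraction}$. This is essentially by construction: the transitions of $\abstraction$ were defined by applying $\synhomZ$ pointwise to the concrete rules (Figure \ref{fig:dns-abstraction}), so one merely checks, rule by rule, that the premise of each abstract transition holds whenever its concrete counterpart does, and that updates commute with $\synhomZ$.

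For soundness, I would invoke Lemma \ref{lem:bisim}: it suffices to show that $H_{\synhomZ}$ is a generalized kernel bisimulation on $\tup{\scrsf{dns}\subsub{\cal Z}}$. Fixing concrete configurations $\scrsf{c}, \scrsf{c}'$ with $H_{\synhomZ}(\scrsf{c}) = H_{\synhomZ}(\scrsf{c}')$ and a concrete transition $\scrsf{c} \xrightarrow{a} \scrsf{d}$, the task is, by case split on the transition rule, to exhibit a matching $\scrsf{c}' \xrightarrow{a'} \scrsf{d}'$ with $H_{\synhomZ}(a) = H_{\synhomZ}(a')$ and $H_{\synhomZ}(\scrsf{d}) = H_{\synhomZ}(\scrsf{d}')$. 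Each case factors into two steps: first, verifying that the rule's premise is also satisfied at $\scrsf{c}'$, and second, verifying that the corresponding successor components are syntactically congruent.

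Lemma \ref{lem:synt-dom-cong-fundamental} does the heavy lifting for the first step, since it guarantees that all the monadic predicates appearing in the premises---$\text{dn}(rr) \ltimes d$, $\scrsf{rank}$, the cache lookup relation $\leftarrow_T$, and equality of $\scrsf{ns}$ values---are invariant under syntactic congruence; timing constraints of the form $x_{rr} > 0$ pose no difficulty since $H_{\synhomZ}$ leaves timer valuations untouched, so two configurations in $\ker{h_{state}}$ carry identical $\nu$. For the second step, I would show that $\synhomZ$ commutes with the operations used to build successor configurations: \rtype{DNAME}-style rewrites $\text{val}(rr) \cdot \text{dn}(rr)^{-1} \cdot d$ (the rewrite-commutativity lemma preceding this theorem), concatenations such as $\tup{ns, \scrsf{a}} \cdot q$ pushed onto the stack (immediate from $\synhomZ$ being a monoid homomorphism), and server-list updates via $\odot$ (from the functoriality of $\scrsf{slist}\subsub{\monoid}$, Lemma \ref{lem:notation-functorial}).

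The main obstacle I anticipate is the \rtype{DNAME}/\rtype{CNAME} answer rule and the referral transition, where the abstract rule existentially quantifies over a synthesized value such as $m' \in \text{val}(\synhomZ(rr)) \cdot \text{dn}(\synhomZ(rr))^{-1} \cdot \synhomZ(d)$. To lift an abstract transition back to the concrete level, one must witness $m'$ by a concrete rewritten domain on the $\scrsf{c}'$ side whose image under $\synhomZ$ equals the chosen $m'$; this follows from functoriality together with commutativity of $\synhomZ$ with the rewrite operation, but demands care in matching existential witnesses across the two sides. The remaining case analysis is mechanical but voluminous, and in a full write-up I would consign it to a technical appendix.
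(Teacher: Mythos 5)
Your proposal follows essentially the same route as the paper: the paper's proof simply observes that it suffices to show $H_{\synhomZ}$ is a (generalized) kernel bisimulation and then invokes Lemma~\ref{lem:bisim}, with the converse direction being the homomorphism property itself. Your additional case analysis via Lemma~\ref{lem:synt-dom-cong-fundamental}, the rewrite-commutativity lemma, and functoriality correctly fills in the details the paper leaves implicit, so the proposal is sound and matches the intended argument.
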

\begin{proof}
	It suffices to show that the LTS homomorphism $H_{\synhomZ}$  is a kernel bisimulation. The statement then follows from Lemma \ref{lem:bisim}.
\end{proof} 

This theorem shows that the set of runs in the abstraction $\abstraction$ is neither an over-approximation nor an under-approximation, but  captures the exact set of runs in $\scrsf{dns}\subsub{\cal Z}$, up to the homomorphic image with respect to $H_{\synhomZ}$. Using Lemma \ref{lem:eager-dns}, we further have the following corollary.

\begin{corollary}\label{cor:eager-abstraction}
	$\abstraction$ is eager.
\end{corollary}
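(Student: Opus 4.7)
The plan is to transfer eagerness through the kernel bisimulation $H_{\synhomZ}$ established in \Cref{thm:sound-complete-abstraction}. Fix an arbitrary run $\rho$ of $\tup{\abstraction}$. By \Cref{thm:sound-complete-abstraction} there exists a run $\rho'$ in $\tup{\scrsf{dns}\subsub{\cal Z}}$ with $H_{\synhomZ}(\rho') = \rho$, and by \Cref{lem:eager-dns} the run $\rho'$ is eager. The goal then reduces to showing that eagerness is preserved by the pointwise image $H_{\synhomZ}(\rho')$.

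To establish this preservation, I would unpack the action component $h_{act}$ of $H_{\synhomZ}$. Because $\abstraction$ is constructed functorially from the monoid homomorphism $\synhomZ : \cal L \to \synmonZ$ (using \Cref{lem:notation-functorial} and the construction in \Cref{subsec:abstract-model}), $h_{act}$ only rewrites the data carried by actions (domain names and records occurring inside messages) while leaving channel identifiers, the send/receive polarity, and the internal-vs-communication classification untouched. Consequently, a consecutive send–receive pair $\ch_c!m,\,\ch_c?m$ in $\rho'$ is mapped to the consecutive pair $\ch_c!\synhomZ(m),\,\ch_c?\synhomZ(m)$ in $\rho$, which is again a matching send immediately followed by its matching receive on the same channel. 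Since this property holds at every position, eagerness of $\rho'$ yields eagerness of $\rho$; as $\rho$ was arbitrary, $\abstraction$ is eager.

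The main obstacle is the functoriality check in the second step: one must verify that no transition in $\abstraction$ could arise from a communication-level action that has been collapsed by $\synhomZ$ into an internal action (or split into two actions), since such a mismatch would break the pairing of sends with receives under the homomorphism. This is ruled out by inspection of the construction of $\abstraction$ in \Cref{subsec:abstract-model}, where $\synhomZ$ is applied only to message payloads and record fields while channel labels are inherited verbatim, but it is the step that deserves the most care to write out rigorously.
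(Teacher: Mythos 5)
Your proposal is correct and follows essentially the same route the paper intends: the corollary is drawn directly from Lemma~\ref{lem:eager-dns} together with the completeness direction of Theorem~\ref{thm:sound-complete-abstraction}, transferring eagerness along $H_{\synhomZ}$ exactly as you do. Your additional check that $h_{act}$ only rewrites message payloads while preserving channel identifiers and send/receive polarity (so abstract receives arise only from concrete receives, with matching sends mapping to matching sends) is precisely the detail the paper leaves implicit, and it is the right thing to verify.
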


With this, we have obtained a sound and complete abstraction of $\scrsf{dns}$ in the form of a trCPS with finitely many global states. In the rest of this paper, runs refer to the runs in the abstraction $\abstraction$, unless otherwise stated. As the next step, we reduce the semantics of $\abstraction$ to its \emph{untimed model}.

\subsubsection*{Abstracting Time}
The real-time semantics of the model makes the global configuration space infinite. Since the timer semantics in our model only involve resetting the timer ($[\mathbf{x} \leftarrow \mathbf{k}]$), by checking the expiration of the timer ($x > 0$) and devolution of time ($\xrightarrow{t}$), this immediately allows us to employ a standard region construction \cite{alur1994theory} to our model in order to obtain an untimed symbolic reduction of $\abstraction$. 

As the reachability relation of configurations is preserved under region construction, Theorem \ref{thm:sound-complete-abstraction} also holds for the reduced untimed model. By abuse of notation, we refer to this model as $\abstraction$.

	\section{\label{sec:property-validation-to-reachability}The DNS Verification Problem}
In this section, we  study the decidability of the DNS verification problem for $\scrsf{dns}\subsub{\cal Z}$. Borrowing notation from  \Cref{subsec:dns-semantics}, we formally define our problem $\verif$ as follows,

\paragraph{\textbf{Input}}
A ZFC $\tup{\cal Z,\Theta} \in \scrsf{zfc}\subsub{\cal L}$ over the free monoid $\cal L$, and a $\mathsf{FO}(<)$ property $\phi$ over strings in $\scrsf{a}^*$, given as the corresponding minimal automaton $\cal F^\phi$, where $\scrsf{a} = \scrsf{raction}_r \uplus \biguplus_{s \in \cal S} \scrsf{saction}_s$.
\paragraph{\textbf{Output}}
The decision of the verification problem for $\tup{\scrsf{dns}\subsub{\cal Z}}$, and $\cal F^{\synhomZ(\phi)}$ over $\synhomZ(\scrsf{a})^*$, \new{along with a witness trace}, where $\cal F^{\synhomZ(\phi)}$ is an FA obtained from $\cal F^\phi$ by mapping every transition label $a \in \scrsf{a}$ to  $\synhomZ(a)$. We denote the corresponding problem instance by $\verif(\cal Z, \Theta, \phi)$. 

\new{Note that the $\verif$ problem does not target the correctness of the DNS protocol itself, but rather an instantiation of the protocol for a given zone configuration. 
	The problem aims to reflect the situation where network administrators already maintain a zone file configuration and would like to  find vulnerabilities in the resulting DNS behaviour.} Intuitively, we want to capture security vulnerabilities as traces of runs in $\scrsf{dns}\subsub{\cal Z}$ characterized by $\mathsf{FO}(<)$-definable
languages. We will look at some common attack vectors in Section \ref{sec:applications}.

\subsection{The Verification Approach}

Recall from Section \ref{subsec:dns-semantics} that the untimed model $\abstraction$ is an rCPS over the pointed topology $\mathsf T\subsub{\cal Z}$. Using Lemma \ref{lem:rcps-to-pds} and Corollary \ref{cor:eager-abstraction}, we can construct, in polynomial time, a pushdown system $\cal P\subsub{\cal Z}$, such that for any run in $\abstraction$ from an $\mathbf{s}$-configuration to an $\mathbf{s}'$-configuration, there is a trace equivalent run in $\cal P\subsub{\cal Z}$, also from an $\mathbf{s}$-configuration to an $\mathbf{s}'$-configuration. It then immediately follows from Theorem \ref{thm:sound-complete-abstraction} that:

\begin{lemma}
	$\scrsf{verif}(\cal Z, \Theta,  \phi)$ reduces to the verification problem for the PDS $\cal P\subsub{\cal Z}$ and the FA $\cal F^{\synhomZ(\phi)}$.
\end{lemma}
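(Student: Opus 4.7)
The plan is to chain together three ingredients already established in the paper: the soundness and completeness of the domain abstraction (Theorem~\ref{thm:sound-complete-abstraction}), the region-based untiming at the end of \Cref{subsec:abstract-model}, and the $\mathsf{PolyTime}$ reduction of eager rCPS over pointed topologies to a PDS (Lemma~\ref{lem:rcps-to-pds}).

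First, I would observe that $\scrsf{verif}(\cal Z, \Theta, \phi)$ is equivalent to asking whether $\tup{\abstraction}$ admits an initial run whose trace is accepted by $\cal F^{\synhomZ(\phi)}$. By Theorem~\ref{thm:sound-complete-abstraction}, the LTS homomorphism $H_{\synhomZ}$ is a generalized kernel bisimulation, so every run in $\tup{\scrsf{dns}\subsub{\cal Z}}$ maps pointwise (on actions) via $\synhomZ$ to a run in $\tup{\abstraction}$, and conversely every abstract run can be lifted (Lemma~\ref{lem:bisim}) to a concrete one. Because $\cal F^{\synhomZ(\phi)}$ is obtained from $\cal F^\phi$ by relabeling each transition $a$ with $\synhomZ(a)$, a concrete trace $a_1 \cdots a_n$ is accepted iff the abstract trace $\synhomZ(a_1) \cdots \synhomZ(a_n)$ is accepted, making the two verification problems interreducible.

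Next, I would invoke the region construction from the end of \Cref{subsec:abstract-model} to replace the real-time semantics of $\abstraction$ by a finite untimed rCPS (still denoted $\abstraction$), whose action traces coincide with those of the timed model so that verification against $\cal F^{\synhomZ(\phi)}$ is preserved. At this point $\abstraction$ is a finite-state rCPS that is eager (Corollary~\ref{cor:eager-abstraction}) over the pointed topology $\mathsf T\subsub{\cal Z}$, fixed in \Cref{subsec:dns-semantics}. Applying Lemma~\ref{lem:rcps-to-pds} then yields, in polynomial time, a PDS $\cal P\subsub{\cal Z}$ whose states are the global states of $\abstraction$, such that every eager run of $\abstraction$ between two global states has a trace-equivalent run in $\cal P\subsub{\cal Z}$ between the corresponding states, and vice versa. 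Composing the three steps gives the desired reduction of $\scrsf{verif}(\cal Z, \Theta, \phi)$ to the verification problem for $\cal P\subsub{\cal Z}$ against $\cal F^{\synhomZ(\phi)}$, with witness traces transferable in both directions.

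The main subtlety I anticipate is arguing that the three reductions compose without mismatching state spaces or alphabets: one must check that the kernel-bisimulation correspondence of Lemma~\ref{lem:bisim} interacts correctly with the elimination of timed transitions (which are absent from the property alphabet) and with the trace-preserving, state-renaming reduction of Lemma~\ref{lem:rcps-to-pds}. Since each step is trace-preserving up to the $\synhomZ$-image of actions and the FA $\cal F^{\synhomZ(\phi)}$ is unchanged throughout, the composition is essentially routine; the only care required is in lifting a PDS-level witness back through the region abstraction and then through $H_{\synhomZ}$ to a trace of $\scrsf{dns}\subsub{\cal Z}$, but both lifts are guaranteed by the completeness directions of the corresponding results.
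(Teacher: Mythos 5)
Your proposal is correct and follows essentially the same route as the paper: the paper's proof likewise combines Theorem~\ref{thm:sound-complete-abstraction} (kernel-bisimulation soundness/completeness of $\synhomZ$), the region-based untiming of $\abstraction$, and Lemma~\ref{lem:rcps-to-pds} together with Corollary~\ref{cor:eager-abstraction} to obtain the trace-preserving PDS $\cal P\subsub{\cal Z}$. You merely spell out the composition of these steps (including witness lifting) in more detail than the paper's one-line argument.
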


\new{It is well-known \cite[Alg. 3]{esparza2000efficient} that the verification problem for a PDS and an FA is in $\mathsf{PTime}$ and $\mathsf{PSpace}$ respectively. Moreover, it is possible to provide a witness trace that satisfies the said property in doubly-exponential time \cite[Section 3.1.4]{reps2005weighted}.} Since $\cal P\subsub{\cal Z}$ is exponential in the size of the input ZFC (due to the exponential blow-up of the local state space of the resolver $\rec{r}$),\footnote{Note that this does not change even after the abstraction as the blow-up appears when we enumerate the caches and server lists in the resolver states.} we have the following result:

\begin{lemma}
    The set of reachable configurations in $\cal P\subsub{\cal Z}$ can be computed in exponential time in the size of the input $\tup{\cal Z, \Theta}$.
\end{lemma}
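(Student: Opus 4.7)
The plan is to combine a size estimate on $\cal P\subsub{\cal Z}$ with the classical polynomial-time saturation procedure for pushdown-system reachability, and to take care of the exponential blow-up hinted at in the paragraph preceding the statement.

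First, I would argue that the abstract model $\abstraction$ has a global configuration space of size singly exponential in $|\tup{\cal Z,\Theta}|$. By the lemma in \Cref{subsec:syn-domain-congruence} that bounds the number of abstract domains by $\cal O(n^{c+1}c^5)$, the abstract message alphabet $\synhomZ(\mathsf M\subsub{\cal Z})$, the set of abstract resource records, and the local state space of each nameserver LTS $\ns{s}$ are bounded by a polynomial in this quantity, hence are at most singly exponential in $|\tup{\cal Z,\Theta}|$. The substantive contribution comes from the resolver $\rec{r}$: its control state carries a duplicate-free server list over the abstract domains, its cache is a subset of the abstract records, and after the region construction of \Cref{subsec:abstract-model} its timer component is a region equivalence class over $\{x_{rr}\}_{rr\in\cal Z}$ bounded in terms of the TTL constants of $\cal Z$. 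Each of these three components is bounded by $2^{\mathsf{poly}(|\cal Z|)}$, and so is their product.

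Second, I would invoke \Cref{cor:eager-abstraction} to view $\abstraction$ as an eager (untimed) rCPS over the pointed topology $\mathsf T\subsub{\cal Z}$, and apply \Cref{lem:rcps-to-pds} to obtain a pushdown system $\cal P\subsub{\cal Z}$ whose control locations are precisely the global states of $\abstraction$ and whose stack alphabet is inherited from $\rec{r}$. Because this reduction runs in polynomial time in the size of the abstract rCPS, $|\cal P\subsub{\cal Z}|$ remains singly exponential in $|\tup{\cal Z,\Theta}|$.

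Third, I would appeal to the standard saturation algorithm that computes a multi-automaton representing $\mathit{post}^{*}$ of the initial configurations of a pushdown system in time polynomial in the size of the system. Applying this procedure to $\cal P\subsub{\cal Z}$ yields a representation of all reachable configurations in time polynomial in an exponential quantity, which remains singly exponential in $|\tup{\cal Z,\Theta}|$, as desired.

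The main obstacle is the accounting in the first step, in particular checking that the two least trivial contributors to the resolver's local state space stay within $2^{\mathsf{poly}(|\cal Z|)}$: the server lists, whose count is factorial in the number $N$ of abstract domains but still $2^{\cal O(N\log N)} = 2^{\mathsf{poly}(|\cal Z|)}$, and the region equivalence classes of the timer valuation, whose count is controlled by the largest TTL constant appearing in $\cal Z$ and is therefore also singly exponential. Once all contributors are uniformly bounded in this way, the remaining two steps collapse into direct invocations of the previously established reductions and the classical PDS-reachability result.
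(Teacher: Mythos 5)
Your overall route coincides with the paper's implicit argument: bound $|\cal P\subsub{\cal Z}|$ by a single exponential in $|\tup{\cal Z,\Theta}|$ (the paper simply asserts this, attributing the blow-up to enumerating caches and server lists in the resolver's local states), then use \Cref{cor:eager-abstraction} and the polynomial-time reduction of \Cref{lem:rcps-to-pds}, and finally run the standard $\mathit{post}^{*}$ saturation, which is polynomial in the size of the PDS. Your second and third steps are fine and match the paper.

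The gap is in the first step, which is exactly what the lemma hinges on. You take the number $N$ of abstract domains to be $\cal O(n^{c+1}c^{5})$; since the maximal name length $c$ sits in the exponent, this bound is only singly exponential, not polynomial, in $|\tup{\cal Z,\Theta}|$. You then count server lists as duplicate-free sequences over \emph{all} abstract domains and write $2^{\cal O(N\log N)} = 2^{\mathsf{poly}(|\cal Z|)}$; that identity holds only when $N$ is polynomial, and with $N = 2^{\mathsf{poly}(|\cal Z|)}$ it gives a doubly exponential count, which would make the saturation run in $\mathsf{2ExpTime}$ and the lemma would not follow. (A similar worry applies if ``cache is a subset of the abstract records'' is read as subsets of all abstract records rather than of $\cal Z$'s records.) The repair is to look at what can actually populate these components: the resolver's timers are $\{x_{rr}\}_{rr\in\cal Z}$, one per record of $\cal Z$, so the cache/region component is governed by the $n$ records and the TTL constants of $\cal Z$; and server-list entries are only ever introduced through $\mathit{Lin}(\mathit{Val}(R))$ and the $\odot$ action, i.e., they are (abstractions of) values of $\scrsf{ns}$ records occurring in $\cal Z$, of which there are at most $n$. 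With this restriction the resolver states to be enumerated number at most $2^{\mathsf{poly}(|\cal Z|)}$, the exponential bound on $|\cal P\subsub{\cal Z}|$ follows, and your remaining two steps go through; without it (or an independent proof that $N$ itself is polynomial, which the paper does not provide), your accounting does not establish the claimed exponential bound.
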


We define the \emph{product} of $\cal F^{\synhomZ(\phi)}$ with $\cal P_{\cal Z}$ as the standard product PDS $\cal F^{\synhomZ(\phi)} \times \cal P\subsub{\cal Z}$, such that $(p_1,p_2) \xrightarrow{\tup{\alpha, a, \alpha'}} (q_1,q_2)$ is a transition in $\cal F^{\synhomZ(\phi)} \times \cal P\subsub{\cal Z}$ if and only if $p_1 \xrightarrow{a} q_1$ is a transition in $\cal F^{\synhomZ(\phi)}$ and $p_2 \xrightarrow{\tup{\alpha, a, \alpha'}} q_2$ is a transition in $\cal P\subsub{\cal Z}$. The set of initial states is the product of the set of initial states in $\cal F^{\synhomZ(\phi)}$ and $\cal P\subsub{\cal Z}$. We then call a state $\tup{p_1,p_2}$ in $\cal F^{\synhomZ(\phi)} \times \cal P\subsub{\cal Z}$ \emph{bad} if $p_1$ is a final state in $\cal F^{\synhomZ(\phi)}$. It is straightforward to see that the verification problem for $\cal P\subsub{\cal Z}$ and $\cal F^{\synhomZ(\phi)}$ is equivalent to checking the membership of an $s$-configuration, for a bad state $s$, in the set of reachable configurations of $\cal F^{\synhomZ(\phi)} \times \cal P\subsub{\cal Z}$. Thus, we have the complexity result,

\begin{lemma}\label{lem:complexity}
    \new{$\verif(\cal Z, \Theta, \phi)$ is $\mathsf{2ExpTime}$. The complexity reduces to $\mathsf{ExpTime}$ if the output of the problem does not require a witness trace.}
\end{lemma}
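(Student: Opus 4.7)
\noindent The plan is to chain together the reduction established in the preceding lemma with known complexity bounds for pushdown systems. By the previous lemma, $\verif(\cal Z, \Theta, \phi)$ reduces to the verification problem for the PDS $\cal P\subsub{\cal Z}$ against the FA $\cal F^{\synhomZ(\phi)}$, which in turn (as already discussed in the excerpt) is the reachability problem of a bad-state configuration in the product PDS $\cal F^{\synhomZ(\phi)} \times \cal P\subsub{\cal Z}$. All that remains is to bound the size of this product PDS in terms of the input $\tup{\cal Z, \Theta, \phi}$ and plug in the cited algorithms.

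\noindent First I would pin down the size of $\cal P\subsub{\cal Z}$. The abstraction of Section~\ref{sec:property-agnostic} ensures that the message alphabet of $\abstraction$ is of size polynomial in $|\cal Z|$ (via $\synmonZ$ and the finitely many region classes). The only component contributing more than polynomial size is the resolver $\rec{r}$, whose local state space enumerates caches (subsets of $\cal Z$ together with a region for each timer) and server-list permutations, giving an exponential blowup in $|\cal Z|$. Applying \Cref{lem:rcps-to-pds} then yields a PDS $\cal P\subsub{\cal Z}$ whose size is at most exponential in $|\tup{\cal Z, \Theta}|$. Meanwhile, the relabeling that produces $\cal F^{\synhomZ(\phi)}$ from $\cal F^\phi$ preserves the state set, so $|\cal F^{\synhomZ(\phi)}|$ is polynomial in $|\phi|$. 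Hence the product $\cal F^{\synhomZ(\phi)} \times \cal P\subsub{\cal Z}$ is still at most exponential in $|\tup{\cal Z, \Theta, \phi}|$.

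\noindent Next, for the decision version (no witness required) I would invoke Esparza et al.'s $\mathsf{PTime}$ saturation algorithm~\cite{esparza2000efficient} on the product PDS to construct an NFA representation of its reachable configuration set and check in polynomial time whether any bad-state configuration is accepted. Since the product is of exponential size, the total running time is $\mathsf{ExpTime}$ in $|\tup{\cal Z, \Theta, \phi}|$. For the version that must return a witness trace, I would use the trace reconstruction of Reps et al.~\cite{reps2005weighted}, which produces a concrete witness in time exponential in the PDS; composed with the exponential size of $\cal P\subsub{\cal Z}$ this gives $\mathsf{2ExpTime}$ in $|\tup{\cal Z, \Theta, \phi}|$.

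\noindent The main obstacle is really only the size analysis of $\cal P\subsub{\cal Z}$: one must verify that no component of the resolver or any nameserver contributes a doubly-exponential factor to the state space. In particular, one has to check that the cache, which carries one region per record in $\cal Z$, only produces exponentially many equivalence classes (and not more), and that the server-list orderings are bounded in length by $|\cal Z|$, so their enumeration is also exponential rather than super-exponential. Once this is confirmed, the result follows directly from the cited PDS-reachability bounds.
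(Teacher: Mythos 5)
Your proposal is correct and follows essentially the same route as the paper: reduce $\verif(\cal Z,\Theta,\phi)$ to bad-state reachability in the product of the FA $\cal F^{\synhomZ(\phi)}$ with the exponential-size PDS $\cal P\subsub{\cal Z}$ (the blow-up coming from enumerating the resolver's caches and server lists), then apply the polynomial-time saturation algorithm for the decision version ($\mathsf{ExpTime}$) and witness extraction, exponential in the PDS, for the trace version ($\mathsf{2ExpTime}$). The one (harmless) inaccuracy is your claim that the abstract message alphabet is polynomial in $|\cal Z|$: by the paper's own counting the number of syntactic congruence classes may be exponential in the maximal domain-name length, but this still leaves $\cal P\subsub{\cal Z}$ at most exponential in the input, so the stated bounds are unaffected.
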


\begin{figure}
    \centering
    \begin{tikzpicture}[every label/.style={font=\small}]
        \node (inp) [] at (-1.25, -0.5) {\footnotesize $Input$:};
        \node (zfc) [fill=red!10,draw,rounded corners=2pt] at (0,0) {$\tup{\cal Z, \Theta}$};
        \node (zfcabstract) [fill=red!10,draw,rounded corners=2pt] at (2.5,0) {$\tup{Z(\cal Z), \Theta}$};
        \node (dns) [fill=red!10,draw,rounded corners=2pt] at (5.25,0) {$\abstraction$};
        \node (pds) [fill=red!10,draw,rounded corners=2pt] at (7.25,0) {$\cal P\subsub{\cal Z}$};
        \node (propabs) [fill=purple!10,draw,rounded corners=2pt] at (-0.25,-1) {$\cal F^{\phi}$};
        \node (fa) [fill=purple!10,draw,rounded corners=2pt] at (6.95,-1) {$\cal F^{\synhomZ(\phi)}$};
        \node (prod) [draw,rounded corners=2pt] at (9,-0.5) {\footnotesize $\cal P\subsub{\cal Z}\times\cal F^{\synhomZ(\phi)}$};
        \node (final) [draw, rounded corners=2pt, align=center, purple] at (11.25,-0.5) {\footnotesize Is a bad state\\ \footnotesize reachable?};
        \draw [->] (zfc) -- (zfcabstract) node [midway, above, align=center] {\scriptsize Abstract\\\scriptsize ZFC};
        \draw [->] (zfcabstract) -- (dns) node [midway, above, align=center] {\scriptsize Construct\\ \scriptsize $\scrsf{dns}$};
        \draw [->] (dns) -- (pds) node [midway, above, align=center] {\scriptsize Reduce to\\ \scriptsize PDS};
        \draw [->] (pds) -- (prod);
        \draw [->] (propabs) -- (fa) node [midway, below, align=center] {\scriptsize Abstract FA};
        \draw [->] (fa) -- (prod);
        \draw [double distance=2pt, -{Implies}] (prod) -- (final);
    \end{tikzpicture}
    \caption{\new{Overview of our decision procedure.}}
    \label{fig:procedure}
\end{figure}

\new{Figure~\ref{fig:procedure} provides an overview of our decision procedure. Since our abstraction is effectively defined as a monoid homomorphism ($\synhomZ$), one can deduce a concrete trace by mapping back every configuration in the abstract output (trace) to a canonical concrete configuration. In particular, it suffices to replace every occurrence of a congruence class, under $\synhomZ$, in a configuration by its canonical element. Since every congruence class can be effectively represented by a regular language, the canonical element can be chosen as one of the shortest domain names in the language.\footnote{This choice can be made deterministic by defining an appropriate lexicographic ordering.} Most importantly, the soundness of this procedure follows from the completeness of our abstraction.}

	\section{Applications}\label{sec:applications}

In  this section, we show some applications of our results in the context of a DoS analysis of DNS. We present the examples informally, showing how instances of security properties can be characterized as $\mathsf{FO}(<)$-properties, and translating the security analysis to an instance of the $\verif$ problem. We further demonstrate how we can compute the abstractions of the inputs of the $\verif$ instance using syntactic domain congruence.

\begin{figure}
    \subfloat[\new{An illustration of the NXNSAttack.}]{
    \label{subfig:nxns-illustration}
    \centering
   \resizebox{0.6\textwidth}{!}{\includegraphics[width=\textwidth]{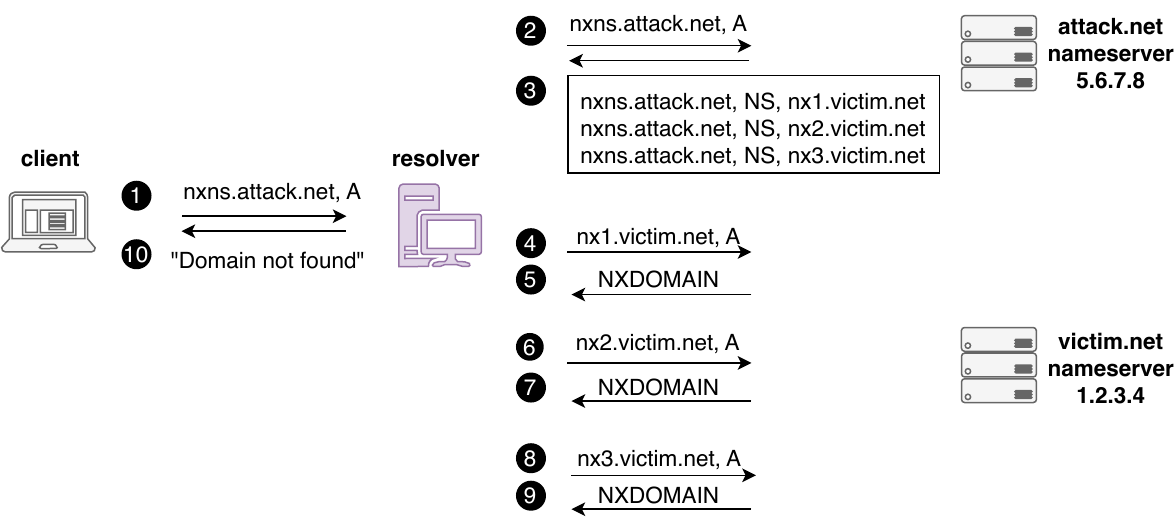}}
    }
\hfill
    \subfloat[\new{The zone configuration. We omit the redundant components (for this attack).}]{
    \label{subfig:nxns-zone}
    \centering
   \resizebox{0.35\textwidth}{!}{
    \begin{tikzpicture}[every label/.style={font=\small}]
				\node (before) [fill=blue!5,draw,rounded corners=2pt, label=north:\emph{Concrete Zone}] at (0,0) {
					\begin{tabular}{l} 
						--- \textbf{zone} \emph{net.attacker} ---\\
						$\tup{net.attack.nxns,\scrsf{ns},net.victim.nx1}$\\
						$\tup{net.attack.nxns,\scrsf{ns},net.victim.nx2}$\\
						$\tup{net.attack.nxns,\scrsf{ns},net.victim.nx3}$\\
					\end{tabular} };
				\node (after) [fill=olive!10,draw,rounded corners=2pt, label=south:\emph{Abstract Zone}] at (0,-3) {					
				\begin{tabular}{l} 
						--- \textbf{zone} $\synhom{\scrsf{nxns}}(net.victim)$ ---\\
						$\tup{\synhom{\scrsf{nxns}}(net.attack.nxns),\scrsf{ns},\synhom{\scrsf{nxns}}(net.victim.nx1)}$\\
						$\tup{\synhom{\scrsf{nxns}}(net.attack.nxns),\scrsf{ns},\synhom{\scrsf{nxns}}(net.victim.nx2)}$\\
						$\tup{\synhom{\scrsf{nxns}}(net.attack.nxns),\scrsf{ns},\synhom{\scrsf{nxns}}(net.victim.nx3)}$\\
				\end{tabular} };
				\draw [double distance=2pt, -{Implies}, shorten <= 2pt, shorten >= 2pt] (before) -- (after) node [midway, right] {$\synhom{\scrsf{nxns}}$};
	\end{tikzpicture}}
    }
\hfill
    \subfloat[
    	A local run in the process $\rec{r}$ in $\scrsf{dns}\subsub{\scrsf{nxns}}$ that corresponds to Figure \ref{subfig:nxns-illustration}. In the initial configuration, the server list $sl_0 = (net.attacker)\odot (net.victim)$ contains the domain names of the other two nameservers, and the initial cache $cache_0 = \{\tup{net.attacker,\scrsf{a},5.6.7.8},\tup{net.victim,\scrsf{a},1.2.3.4}\}$ contains the IP addresses of both these nameservers. For notational convenience, we use $('')$ in a configuration to represent the corresponding component unchanged from the previous configuration; we use $(*)$ to represent a timer valuation after an arbitrary elapse of time; the transitions with labels in \{\ding{183},...,\ding{190}\} informally correspond to the messages in \Cref{subfig:nxns-illustration}, omitting the associated stack operation annotations.  We also drop the messages \ding{182} and \ding{191} as we do not model the client in our model. Instead, the message \ding{182} is represented by the internal transition $\tup{\epsilon, in, \tup{net.example,\scrsf{a}}}$. The messages \ding{186},\ding{188},\ding{190} are replaced by $\tup{\scrsf{nxdomain},\emptyset}$ in the model.
    	]{
    \label{subfig:nxns-run}
    \begin{minipage}[b]{\linewidth}
    \centering
       \includegraphics[width=\textwidth, page=2]{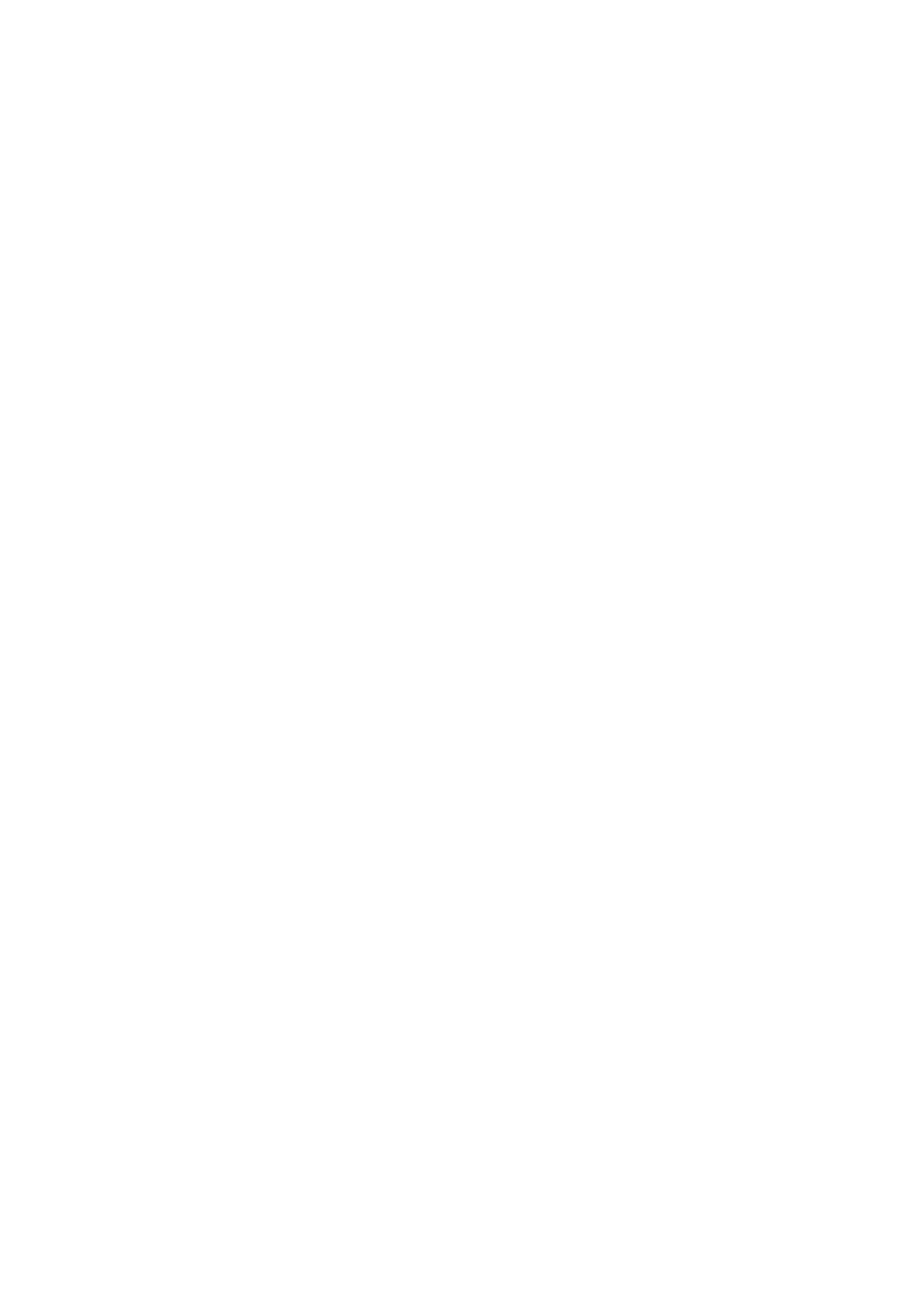}
    \end{minipage}
    }
    \captionsetup{skip=5pt}
\caption{\label{fig:nxns}\new{NXNSAttack.}}
\end{figure}

\subsection{Amplification Attacks}\label{subsec:amplification}

An \emph{amplification attack} is a type of DoS attack where an attacker exploits DNS to overwhelm a target, such as  resolvers and nameservers, with a large volume of traffic, effectively disrupting its ability to function.
For example,
in the recently discovered NXNSAttack (NoneXistent Name Server Attack)~\cite{NXNSAttack}, a small DNS query triggers the DNS resolver to issue a much larger response, potentially achieving a packet amplification factor (PAF) of over a thousand times  the original query size.

Let us first have a closer look at amplification attacks. Informally, given a ZFC $\tup{\cal Z, \Theta}$, a target process $p$, and an amplification factor $\mathit{paf} \in \nat$, the corresponding instance of DNS exhibits an amplification attack if there exists some query that, during the query resolution process, causes the target process $p$ (either a nameserver or a resolver) in the network to receive more than $\mathit{paf}$ messages. Typically, the target process refers to a nameserver in the network which, for a sufficiently large $\mathit{paf}$, is then unable to respond to legitimate query requests. In order to discover amplification attacks, one must exhaustively analyze all possible DNS behaviours, \emph{for every query}, and check if there exists a behaviour where a target nameserver receives more than $\mathit{paf}$ messages.

Borrowing notation from earlier sections, we can formally characterize the set of traces that exhibit an amplification attack as the formula,\footnote{We use many-sorted quantification in the formula, in order to quantify over the space of all messages in $\mathsf{M}\subsub{\cal Z}$, which is an infinite set. The alternative would be an infinite disjunction over all possibilities of $\mathit{msg}$.}

\begin{equation*}\label{eqn:amp-prop}
\resizebox{\textwidth}{!}{$\phi_{amp}(\mathit{paf},p) = \exists x_1,\ldots,x_{\mathit{paf}}\in \nat.\, \left(\bigwedge\limits_{i\neq j \in \{1,\ldots ,\mathit{paf} \}} x_i \neq x_j\right)  \wedge \left( \bigwedge\limits_{i\in \{1,\ldots , \mathit{paf} \}}\bigvee\limits_{j \in P\subsub{\cal Z}} \left(\exists \mathit{msg}\in \mathsf M\subsub{\cal Z}.(\ch_{(j,p)}?\mathit{msg})(x_i)\right) \right)$}
\end{equation*}

The above formula captures the property that there exist $\mathit{paf}$ distinct positions in a trace, such that every one of these positions corresponds to a receive action for the target process $p$. Indeed, any run in $\scrsf{dns}\subsub{\cal Z}$ whose trace satisfies the above property exhibits a receive action for the target component $\cal P^p$ at least $\mathit{paf}$ times. Since every run in $\scrsf{dns}_{\cal Z}$ begins with a query request, a run that satisfies $\phi_{amp}(\mathit{paf},p)$ corresponds to an amplification attack.

\new{
\begin{example}
 Figure~\ref{subfig:nxns-illustration} illustrates how the NXNSAttack is mounted using a simple example.
 This example involves two attacker components, namely a client and an authoritative nameserver \name{attacker.net}, and two innocent components, namely the recursive resolver and an authoritative nameserver \name{victim.net}. 
The client issues a request 
for the subdomain \name{nxns.attack.net} that
is authorized by the attacker's authoritative nameserver (\ding{182}). 
The resolver then attempts to  resolve the subdomain  (\ding{183}).
In response, the attacker's nameserver 
\name{attacker.net}
sends an \rtype{ns} 
referral response with multiple nameserver names (three in this case) but without their IP addresses 
(\ding{184}).
This forces the resolver to issue additional  subqueries to resolve each of these names (\ding{185}--\ding{190}), thus amplifying the DNS traffic.

The ZFC can be represented by $\tup{\scrsf{nxns},\{5.6.7.8,1.2.3.4\}}$, where the domain of $\scrsf{nxns}$ is the set $\{5.6.7.8,1.2.3.4\}$, such that $5.6.7.8$ is the IP of the attacker's nameserver and $1.2.3.4$ that of the victim. The function $\scrsf{nxns}$ essentially\footnote{For simplicity, we only consider the records that are relevant for the attack.} maps $1.2.3.4$ to the empty set and $5.6.7.8$ to the concrete zone illustrated in \Cref{subfig:nxns-zone}. The malicious zone contains three $\scrsf{ns}$ records that point to the nameservers with domain names $net.victim.nx\{i\}$, for $i \in \{1, 2, 3\}$. But, observe that none of them have an accompanying record of type $\scrsf{a}$ that points those names to an IP address. The IP addresses $1.2.3.4$ and $5.6.7.8$ are recorded as the root nameserver IP addresses. 

We now describe the run of the resolver in $\scrsf{dns_{nxns}}$. The resolver starts from the initial configuration, and takes the internal transition with the label $\tup{\epsilon, in, \tup{net.attacker.nxns,\scrsf{a}}}$ to model that it received the query $\tup{net.attacker.nxns,\scrsf{a}}$ from the client. When it \emph{receives} the query, it moves to the $query$ state ready to start resolution. The current stack configuration is $\tup{net.attack.nxns,\scrsf{a}}$, which denotes that the initial query is currently under the resolution. It then sends the query out to the leftmost nameserver in the server list that contains the domain name of the query in its zone, namely $attacker.net$, and then waits for an answer by moving to the state $wait$. Once it receives the response with a set of  $\scrsf{ns}$ records from $\ns{5.6.7.8}$, it adds them to its cache, along with the corresponding domain name of the nameserver in the value of each record (in the answer) to its server list. It moves back to the $query$ state to decide the next step for the resolution. Since every $\scrsf{ns}$ record in the answer was glueless, the resolver proceeds with the resolution of $\tup{net.victim.nx1,\scrsf{a}}$ in order to find the glue. Consequently, it pushes the query on its stack while staying in the $query$ state. Since the query name is in the zone $victim.net$, it sends the query to $\ns{1.2.3.4}$. However, since the zone file for $victim.net$ is empty, the resolver gets $\tup{\scrsf{nxdomain},\emptyset}$ (\ding{186}) as the answer, thus deleting $net.victim.nx1$ from its server list and moving back to the $query$ state. This repeats two more times for the other two entries in the server list. It is then straightforward to see that this behaviour leads to $\ns{1.2.3.4}$ receiving three queries. In fact, this behavioural pattern can be easily extended such that $\ns{1.2.3.4}$ receives $n$ queries for any $n \in \nat$, by adding more records of the form $\tup{net.attack.nxns,\scrsf{ns},net.victim.nx\{i\}}$ to the zone $attack.net$. \Cref{subfig:nxns-run} illustrates the formal run in the process $\rec{r}$.

The PPT languages required to compute the abstraction are $L_0 = net\cdot attack\cdot nxns \cdot \dnsAlph^+$, and $L_i = net\cdot victim\cdot nx\{i\}\cdot \dnsAlph^+$, where $\dnsAlph = \{net, attack, nxns, 1, 2, 3, other\}$ and $i \in \{1, 2, 3\}$. We  compute the monoid $\scrsf{syn}\subsub{\scrsf{nxns}} = \prod_{0 \leq j \leq 3}\tup{\synmon{L_j}}$, denoting the sytactic domain congruence classes and the corresponding homomorphism $\synhom{\scrsf{nxns}}$. Fixing the packet amplification factor to be $3$ and the target process to be $1.2.3.4$, the formula $\phi_{amp}(3, 1.2.3.4)$ can then be rewritten as

\begin{equation*}\label{eqn:amp-prop}
\resizebox{\textwidth}{!}{$\phi_{amp}'(3,1.2.3.4)  = \exists x_1,\ldots,x_{3}\in\nat.\, \left(\bigwedge\limits_{i\neq j \in \{1,\ldots ,3 \}} x_i \neq x_j\right)
	 \wedge \left( \bigwedge\limits_{i\in \{1,\ldots , 3 \}}\bigvee\limits_{j \in \{r,5.6.7.8\}}\bigvee\limits_{\stackrel{m \in \scrsf{syn}\subsub{\scrsf{nxns}},}{t \in \scrsf{type}}} (\ch_{(j,1.2.3.4)}?\tup{m,t})(x_i)\right)$},
\end{equation*}

\noindent which no longer contains any quantifications other than the positions in the trace. The zone can be translated as shown in \Cref{subfig:nxns-zone}.
\end{example}
}

\paragraph{Unbounded Analysis}
Note that, if there are multiple nameservers, unlike the trivial example given above, there is no limit to the length of runs which can ensure the presence or absence of an amplification attack. Indeed, a resolver may communicate with the same benign nameserver owning multiple zones before establishing communication with the target nameserver. Hence, a complete search for amplification attacks necessitates the unbounded reachability analysis of $\scrsf{dns}$.

\begin{figure}
\label{fig:blackholing}
    \subfloat[\new{An Illustration of rewrite blackholing.}]{
    \label{subfig:blackholing-illustration}
    \centering
   \resizebox{0.55\textwidth}{!}{\includegraphics[width=\textwidth]{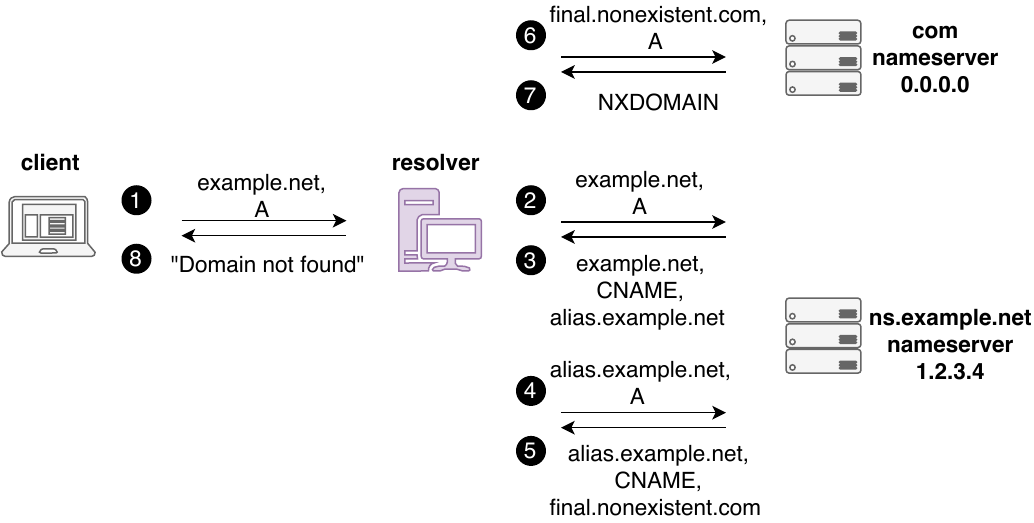}}
    }
\hfill
    \subfloat[\new{The concrete zone configuration.}]{
    \label{subfig:blackholing-zone}
    \centering
   \resizebox{0.4\textwidth}{!}{
    \begin{tikzpicture}[every label/.style={font=\small}]
            \node (before) [fill=blue!5,draw,rounded corners=2pt] at (0,0) {
                \begin{tabular}{l} 
                    --- \textbf{zone} \emph{net.example} ---\\
                    $\tup{net.example,\scrsf{ns},net.example.ns}$\\
                    $\tup{net.example.ns,\scrsf{a},1.2.3.4}$\\
                    $\tup{net.example,\scrsf{cname},net.example.alias}$\\
                    $\tup{net.example.alias,\scrsf{cname},com.nonexistent.final}$\\
                \end{tabular} };
    \end{tikzpicture}}
    }
\hfill
    \subfloat[\new{A local run of the process $\rec{r}$ that corresponds to Figure \ref{subfig:blackholing-illustration}. In the initial configuration, the server list $sl_0 = (net.example.ns)\odot (com)$ contains the domain names of the other two nameservers, and the initial cache $cache_0 = \{\tup{net.example.ns,\scrsf{a},1.2.3.4},\tup{com,\scrsf{a},0.0.0.0}\}$ contains the IP addresses of both these nameservers. We follow the notational conventions from the previous example (see \Cref{subfig:nxns-run}).
    }]{
    \label{subfig:blackholing-run}
    \begin{minipage}[b]{\linewidth}
    \centering
       \includegraphics[width=\textwidth, page=2]{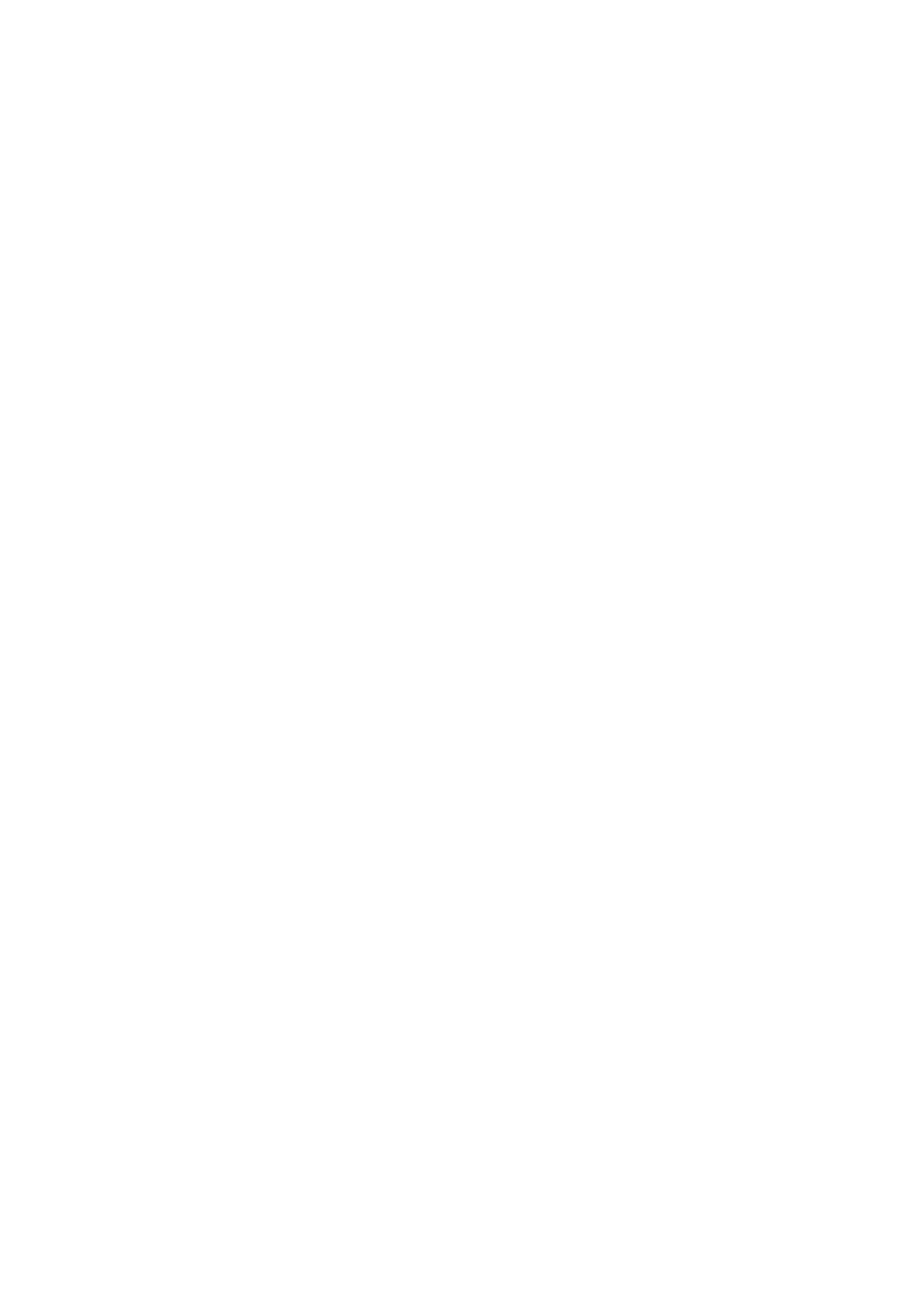}
    \end{minipage}
    }
    \captionsetup{skip=5pt}
\caption{\new{Rewrite blackholing.}}
\end{figure}

\subsection{Rewrite Blackholing}\label{subsec:rewrite-black}

\new{
\emph{Rewrite blackholing} occurs when a query is eventually redirected to a non-existent domain name. 
This can happen when the resolver, for example, follows some \rtype{cname} records, but ultimately finds that the final ``canonical'' name does not exist.
While rewrite blackholing can be useful for blocking malicious traffic or preventing attacks, it can also be exploited to blackhole legitimate services or domains, thus
 causing a DoS for valid queries.
}

\new{We now show that this attack vector can also be specified in our framework. Informally, given a ZFC $\tup{\cal Z, \Theta}$, the corresponding instance of DNS exhibits rewrite blackholing if there exists some query that, during its resolution process, causes the resolver to eventually receive an answer with the label $\scrsf{nxdomain}$ along with a non-empty set of records. 

We can again formally characterize the set of traces  exhibiting rewrite blackholing as the formula 
\begin{equation*}\label{eqn:rb-prop}
	\phi_{rb} = \exists x\in \nat.\exists recs \in \cal P(\scrsf{record}\subsub{\cal L}). \, \left(recs \neq \emptyset \wedge \bigvee_{j \in P\subsub{\cal Z}} \left((\ch_{(j,p)}?\tup{\scrsf{nxdomain},recs})(x_i)\right) \right).
\end{equation*}}

\new{
\begin{example}
As illustrated in Figure \ref{subfig:blackholing-illustration}, 
a DNS query for \name{example.net} returns a \rtype{cname} pointing to \name{alias.example.net} (\ding{184}). The resolver then queries \name{alias.example.net}, which returns another \rtype{cname} pointing to \name{final.nonexistent.com} (\ding{186}). Since \name{final.nonexistent.com} does not exist, the resolver eventually receives an \rtype{nxdomain} response, indicating that the domain cannot be resolved (\ding{188}). 
In this rewrite blackholing scenario,
two authoritative nameservers are involved: 
one  for handling the resolution of   
\name{example.net} and \name{alias.example.net}
(as both are within the same domain zone \name{example.net}), and the other for \name{nonexistent.com}. The concrete zone file for the \name{example.net} domain is shown in Figure~\ref{subfig:blackholing-zone} and the corresponding formal run in $\rec{r}$ is illustrated in \Cref{subfig:blackholing-run}.
\end{example}
}

	\section{Related Work} 
\label{sec:related}

Our framework, along with its underlying formalism, analysis approach, and application domain, is closely related to the following three lines of  earlier research.

\paragraph{Formal Modeling and Analysis of DNS}
Kakarla et al.~\cite{groot} formalize the DNS resolution semantics, abstracting away critical resolver-side logic including caching and recursive queries. 
Building upon this semantics, they also 
provide the first static analyzer GRoot for identifying DNS misconfiguration errors. 
Liu et al.~\cite{dnsmaude} complement GRoot's semantics by
providing the first formal 
semantics of end-to-end name resolution.
The accompanying tool DNSMaude
offers a collection of formal analyses for both qualitative (e.g., rewrite blackholing)
and quantitative 
properties (e.g., amplification).
We have already discussed
their limitations in \Cref{sec:intro}.

\paragraph{Testing, Monitoring, and Verification of DNS Implementations} 
Recent years have seen significant advances in DNS testing tools~\cite{Eywa,ResolFuzz,ResolverFuzz,scale}. 
ResolFuzz~\cite{ResolFuzz} is a differential tester, searching for semantic
bugs such as inaccuracies in resolver responses across DNS resolver implementations.
ResolverFuzz~\cite{ResolverFuzz} fuzzes resolvers via
 short query-response sequences. 
 It also leverages differential testing and can identify resolver vulnerabilities like cache poisoning bugs.
Unlike pure fuzzing,
SCALE~\cite{scale}
guides its test generation by  
symbolically executing GRoot's  semantics. 
It focuses on detecting RFC compliance errors in authoritative nameserver implementations rather than recursive resolvers.
While also deriving exhaustive test cases using symbolic execution, Eywa~\cite{Eywa} leverages large language models to construct intended DNS behaviors from natural language sources like RFCs. 
All these testers, along with excellent DNS monitors~\cite{thousandeyes,check-host}, 
can only show the presence of bugs or vulnerabilities, not their absence, in DNS implementations.
Moreover, in contrast to our framework, they
do not
encompass the full name resolution process involving both  recursive resolvers and nameservers.

IRONSIDES~\cite{Ironsides} is a nameserver
implementation provably secure against DNS vulnerabilities like single-packet DoS
attacks. 
However, it may still be susceptible to application-layer issues such as amplification attacks, in contrast to our framework. 
DNS-V~\cite{DNS-V} is a verification
framework for an in-house DNS authoritative 
engine in Alibaba Cloud.
A symbolic-execution-based approach is used to 
facilitate the adoption of layered verification, where 
source code within each decomposed layer of DNS can be verified independently with respect to correctness properties. 
Quantitative properties like amplification are out of its scope.
Wang et al.~\cite{10.1145/3663408.3663412}  scale up the verification of DNS configuration
by parallelizing the analysis of zone-file-specific query processing behaviour and symbolically combining the analysis results. 
Unlike our framework, the use of formal methods in none of these works is related to the DNS semantics.

All these testing, monitoring, and verification efforts focus on DNS implementations.
Given that many DNS vulnerabilities have stemmed
 from RFCs or protocol-level defects in heavily tested production resolvers and nameservers, protocol-level verification
 at an early design stage is highly desirable. Our framework takes a step forward towards this goal.

\paragraph{Verification Problem for Infinite-State Systems}
Verification of infinite-state systems is a well-studied problem and is still an active area of research. Abdulla \& Jonsson \cite{AbJo:lossy,AbJo:lossy:IC} and C{\'e}c{\'e} et al. \cite{cece1996unreliable} independently show that the verification problem for CPSes communicating over lossy channels is decidable in non-primitive recursive time, for arbitrary topologies. Abdulla et al. \cite{abdulla2012timed} later show that the decidability holds even when the communicating processes are timed and the messages contain time-stamps. Heussner et al. \cite{heussner2010reachability,heussner2012reachability} study this problem for rCPSes, however, over perfect channels. They restrict their attention to eager runs and show that the problem is decidable in exponential-time for a restricted class of topologies that they call \emph{non-confluent}. Prior to this work, Atig et al. \cite{atig2008reachability} show that the problem is decidable even without the restriction of eagerness, but instead requires the underlying topology to be acyclic. In the context of PDSes, Bouajjani et al. \cite{bouajjani1997reachability} show that the problem is decidable for PDS, using a \textit{saturation method} that computes the set of all reachable computations. Abdulla et al. \cite{abdulla2012dense} and Clemente and Lasota \cite{clemente2015timed} extend the decidability to timed PDS with a timed stack. In contrast, our work handles a timed extension of rCPS with an infinite message alphabet. In addition, we focus on models pertaining to the DNS semantics.

Kakarla et al. \cite{dns-complex} study the expressive complexity of DNS. They show that the verification of DNS zone files, a simpler problem than $\verif$, is likely to take at least cubic time in
the number of records, effectively giving a lower bound. In this work, we provide an upper bound for this problem (\Cref{lem:complexity}).

	\section{Limitations and Conclusion}

A formal framework with a decision procedure for the \scrsf{dnsverif} problem 
is essential to achieving a secure, reliable DNS infrastructure.
Aside from being of theoretical interest in the context of infinite-state systems, 
this work provides a foundation for the advancement of automated verification tools for \emph{and} eliminates the costly $(\text{Break and Fix})^*$ cycle with DNS security analysis.

\new{
We have demonstrated our framework for two applications. 
However,
in principle, we can model all the static vulnerabilities considered by GRoot \cite[Table 1]{groot}, including rewrite blackholing (Section~\ref{subsec:rewrite-black}), rewrite loop, and answer inconsistency.  
This is immediate as every property defined in GRoot is an $\mathsf{LTL}$-property, which is equivalent to $\mathsf{FO}(<)$ over traces.
Additionally, our framework is capable of modeling all the attacks considered by DNSMaude~\cite[Table 4]{dnsmaude}, including those covered by GRoot and amplification attacks (Section~\ref{subsec:amplification}). However, our current model also presents several limitations.
}

\new{\paragraph{Multiple Resolvers}
We are currently limited to having a single resolver in our model. A promising direction for future work would be to consider multiple concurrent resolvers in the system. While our abstraction on the query space would still be sound and complete, the underlying topology would then have two pushdown systems (the resolvers) communicating with each other over a lossy bag channel, which violates the confluence criterion from \cite{heussner2010reachability,heussner2012reachability}.}

\new{\paragraph{Dynamic Updates}
We have adopted the semantics considered by DNSMaude, which subsumes that of GRoot. In particular, our framework currently does not support the query type $\scrsf{IXFR}$, meant for incremental zone transfers. This type is responsible for updating the DNS records stored by a nameserver. Under some conditions, such queries force two-way reliable (TCP) communication between nameservers (which are finite-state systems themselves). The resulting network topology now allows a non-lossy cycle and thus yields the undecidability of the corresponding state-reachability problem, which in turn  implies the undecidability of the $\verif$ problem. More details on this kind of behaviour can be found in \cite{rfc1995}.

Further model-specific investigation is needed to remove the above limitations, which is outside the scope of this paper. Finally, the $\mathsf{2ExpTime}$ complexity upper bound of our procedure is rather pessimistic. Hence,  empirical results are necessary to prove the practicality of our approach. At this point, our contribution is primarily theoretical, and significant engineering work will be required to design and implement these techniques. Conducting an experimental evaluation of our approach remains a challenging future task.}

	\begin{acks}
    We appreciate the  reviewers for their valuable feedback. This research is supported by armasuisse Science and Technology and an ETH Zurich Career Seed Award. We would also like to thank Narayan Kumar and Prakash Saivasan for some useful initial discussions.
\end{acks}

	\clearpage
	\bibliographystyle{ACM-Reference-Format}
	\bibliography{ref,nobi}

\clearpage
\appendix

\section{Transitions in $\scrsf{dns}$}\label{appx:transitions}

\subsection{Authoritative Nameserver Transitions\label{appx:ns-full-trans}}

\begin{figure*}[h]
	$\scrsf{trans}_{s}:=$
	
	\begin{gather}
		\begin{align*}
			\tup{init,\bot,\bot} & \xrightarrow{c_{r,(dn,s)}?q}\tup{lookup,q,r}\thinspace\thinspace\thinspace\thinspace\thinspace\thinspace\thinspace\thinspace\thinspace\thinspace\thinspace\thinspace\thinspace\thinspace\thinspace\thinspace\thinspace\thinspace\thinspace\thinspace\thinspace\thinspace\thinspace\thinspace\thinspace\thinspace\thinspace\thinspace\thinspace\thinspace\thinspace\thinspace\thinspace\thinspace\thinspace\thinspace\thinspace\thinspace\thinspace\thinspace\thinspace\thinspace r\in\mathcal{R},q\in\scrsf{query},dn\in\scrsf{domain}\\
			\tup{lookup,q,r} & \begin{cases}
				\xrightarrow{in}\tup{exact,q,r} & \text{dn}({\cal M}(s,q))=\text{dn}(q)\\
				\xrightarrow{in}\tup{wildcard,q,r} & \text{dn}(q)\in_{*}\text{dn}({\cal M}(s,q))\\
				\xrightarrow{c_{(dn,s),r}!\tup{\scrsf{ansq},\{rr,rr'\}}}\tup{init,\bot,\bot} & \text{dn}({\cal M}(s,q))<\text{dn}(q),\scrsf{dname}\in\tau(s,q)\\
				& \text{where, }\\
				& dn\in\scrsf{domain}\\
				& {\cal T}_{\scrsf{dname}}(s,q)=\{rr\}\\
				& rr'=\tup{\text{dn}(q),\scrsf{cname},\text{ttl}(rr),d,1}\\
				& d=\text{dn}(q)_{|q|}\cdots\text{dn}(q)_{|\text{dn}(rr)|+1}\cdot\text{val}(rr)\\
				\xrightarrow{c_{(dn,s),r}!\tup{\scrsf{ref},{\cal G}(s,q)}}\tup{init,\bot,\bot} & \text{dn}({\cal M}(s,q))<\text{dn}(q),dn\in\scrsf{domain},\scrsf{dname}\not\in\tau(s,q),\scrsf{ns}\in\tau(s,q),\scrsf{soa}\not\in\tau(s,q)\\
				\xrightarrow{c_{(dn,s),r}!\tup{\scrsf{nxdomain},\emptyset}}\tup{init,\bot,\bot} & \text{otherwise for }dn\in\scrsf{domain}
			\end{cases}\\
			\tup{exact,q,r} & \begin{cases}
				\xrightarrow{c_{(dn,s),r}!\tup{\scrsf{ans},{\cal T}(s,q)}}\tup{init,\bot,\bot} & \text{ty}(q)=\scrsf{cname}\in\tau(s,q),\\
				& \scrsf{ns}\not\in\tau(s,q)\vee\scrsf{soa}\in\tau(s,q),|{\cal T}(s,q)|=1\\
				& dn\in\scrsf{domain}\\
				\xrightarrow{c_{(dn,s),r}!\tup{\scrsf{ans},{\cal T}(s,q)}}\tup{init,\bot,\bot} & \scrsf{cname}\neq\text{ty}(q)\in\tau(s,q),\\
				& \scrsf{ns}\not\in\tau(s,q)\vee\scrsf{soa}\in\tau(s,q)\\
				& dn\in\scrsf{domain}\\
				\xrightarrow{c_{(dn,s),r}!\tup{\scrsf{ansq},{\cal T}_{\scrsf{cname}}(s,q)}}\tup{init,\bot,\bot} & \text{ty}(q)\not\in\tau(s,q),\scrsf{cname}\in\tau(s,q),\\
				& \scrsf{ns}\not\in\tau(s,q)\vee\scrsf{soa}\in\tau(s,q),{\cal T}_{\scrsf{cname}}(s,q)=\{rr\}\\
				& dn\in\scrsf{domain}\\
				\xrightarrow{c_{(dn,s),r}!\tup{\scrsf{ref},{\cal G}(s,q)}}\tup{init,\bot,\bot} & \scrsf{ns}\in\tau(s,q),\scrsf{soa}\not\in\tau(s,q),dn\in\scrsf{domain}\\
				\xrightarrow{c_{(dn,s),r}!\tup{\scrsf{ans},\emptyset}}\tup{init,\bot,\bot} & \text{otherwise for }dn\in\scrsf{domain}
			\end{cases}\\
			\tup{wildcard,q,r} & \begin{cases}
				\xrightarrow{c_{(dn,s),r}!\tup{\scrsf{ans},R}}\tup{init,\bot,\bot} & \text{ty}(q)\in\tau(s,q),dn\in\scrsf{domain}\\
				& \text{where, }\\
				& R={\cal T}_{\text{ty}(q)}(s,q)\cup\{\tup{\text{dn}(q),\text{ty}(rr),\text{ttl}(rr),\text{val}(rr),1}\\
				& \mid rr\in{\cal T}_{\text{ty}(q)}(s,q)\}\\
				\xrightarrow{c_{(dn,s),r}!\tup{\scrsf{ansq},\{rr,rr'\}}}\tup{init,\bot,\bot} & \text{ty}(q)\not\in\tau(s,q),\scrsf{cname}\in\tau(s,q),dn\in\scrsf{domain}\\
				& \text{where, }\\
				& {\cal M}(s,q)=\{rr\}\\
				& rr'=\tup{\text{dn}(q),\text{ty}(rr),\text{ttl}(rr),\text{val}(rr),1}\\
				\xrightarrow{c_{(dn,s),r}!\tup{\scrsf{ans},\emptyset}}\tup{init,\bot,\bot} & \text{otherwise for }dn\in\scrsf{domain}
			\end{cases}
		\end{align*}
	\end{gather}
	
	\caption{\label{fig:ns-transitions-old}Transitions in $\ns{s}$}
\end{figure*}

\newpage

\subsection{Recursive Resolver Transitions\label{appx:rec-full-trans}}

\begin{figure}[h]
	$\scrsf{trans}_{r}\coloneqq$
	
	\begin{gather}
		\begin{align*}
			\tup{init,sl} & \xrightarrow{\tup{\epsilon,in,q}}\tup{query,sl}\\
			\tup{query,sl} & \begin{cases}
				\xrightarrow[\nu(rr) > 0]{\tup{q_1\,\,\bottom,in,\varepsilon}} & \tup{init,sl},\\
				& \text{if }\text{dn}(q_{1})\leftarrow_{\text{ty}(q_{1})}rr\\
				\xrightarrow[\nu(rr) > 0]{\tup{q_1q_2,\ch_{r,\text{val}(rr)}!q_{2},\varepsilon}} & \tup{wait,sl},\\
				& \text{if }\text{dn}(q_{1})\leftarrow_{\scrsf{a}}rr\\
				\xrightarrow[\nu(rr) > 0]{\tup{q_1,in,\tup{d,\text{ty}(q_1)}}} & \tup{query,sl},\\
				& \text{if }\text{dn}(q_{1})\leftarrow_{\scrsf{dname}}rr,rr\in cache\\
				& d=\text{dn}(q_{1})_{|q_{1}|}\cdots\text{dn}(q_{1})_{|\text{dn}(rr)|+1}\cdot\text{val}(rr)\\
				\xrightarrow[\nu(rr) > 0]{\tup{q_1, \ch_{r,\text{val}(rr)}!q_{1}, q_1}} & \tup{wait,sl},\\
				& \text{if }ns\leftarrow_{\scrsf{a}}rr\\
				& ns = sl[min_{sl[i] < \text{dn}(q_1)} i]\\
				\xrightarrow{\tup{q_1, in, \tup{ns,\scrsf{a}} \cdot q_1}} & \tup{query,sl},\\
				& \text{if }\{rr\mid ns\leftarrow_{\scrsf{a}}rr,rr\in cache\}=\emptyset\\
				& ns = sl[min_{sl[i] < \text{dn}(q_1)} i]\\
				\xrightarrow{\tup{q_1, \ch_{r,s}!q_1, q_1}} & \tup{wait,sl},\\
				& \text{if }sl = [], s \in \Theta
			\end{cases}\\
			\tup{wait,sl} & \begin{cases}
				\xrightarrow[{[(x_{rr})_{rr\in R}\leftarrow (\text{ttl}(rr))_{rr\in R}]}]{\tup{q_1, \ch_{s,r}?\tup{\scrsf{ans},R},q_1}} & \tup{query,sl}\\
				\xrightarrow[{[(x_{rr})_{rr\in R}\leftarrow (\text{ttl}(rr))_{rr\in R}]}]{\tup{q_1,\ch_{s,r}?\tup{\scrsf{ansq},R},q_1}} & \tup{query,sl}\\
				\xrightarrow[{[(x_{rr})_{rr\in R}\leftarrow (\text{ttl}(rr))_{rr\in R}]}]{\tup{q_1,\ch_{s,r}?\tup{\scrsf{ref},R},q_1}} & \tup{query,list \odot sl},\\
				& list \in Lin(Val(R))\\
				\xrightarrow{\tup{q_1, \ch_{s,r}?\tup{\scrsf{nxdomain},\emptyset}, q_1}} & \tup{query,sl[1\ldots |sl|]}
			\end{cases}
		\end{align*}
	\end{gather}
	\begin{align*}
		\leftarrow_{{\cal T}} & \coloneqq\{(dn,rr)\mid\text{dn}(rr)\leq dn,\text{ty}(rr)\in{\cal T}\}\subseteq\scrsf{domain}\times\scrsf{record}\\
		Val(R) & \coloneqq \{\text{val}(rr)\mid rr\in R,\text{ty}(rr)=\scrsf{ns}\} \subseteq \scrsf{domain}\\
		Lin(V) & \coloneqq \{m_1\ldots m_{|V|} \in V^{|V|} \mid m_i \neq m_j, 0 \leq i < j < |V| \}
	\end{align*}
	
	\caption{Transitions in $\rec{r}$}
\end{figure}

\newpage

\section{Strong Bisimulation}\label{appx:strong-bisim}

For the sake of completeness, we provide the definition of a strong bisimulation on an LTS, as stated in \cite{milner1989communication}.

\begin{definition}\label{def:strong-bisim}
	
	Given an LTS $\cal A^1 = \tup{S^1,S^1_{init},A^1,\trans^1}$, an equivalence relation $R \subseteq S^1 \times S^1$ on $\cal A^1$ is called a \emph{strong bisimulation} on $\cal A^1$ if and only if, for every tuple $\tup{p,p'}\in R$ and all actions $a \in A^1$, if $p \xrightarrow{a} q \in \trans^1$, then $p' \xrightarrow{a} q' \in \trans^1$, such that $\tup{q,q'}\in R$.
	
\end{definition}

It is straightforward to see that, if a generalised kernel bisimulation $H$ on the LTS $\cal A^1$ is such that $h_{act} = id_{A^1}$ is the identity map on $A^1$, i.e., every action label is only related to itself, then the relation $\ker{h_{state}}$ is a strong bisimulation.

\end{document}